\newtheorem{theorem}{Theorem}[section]
\newtheorem{corollary}{Corollary}[section]
\newtheorem{remark}{Remark}[section]
\newtheorem{claim}{Claim}[section]
\newtheorem{definition}{Definition}[section]
\newtheorem{proposition}{Proposition}[section]
\newtheorem{lemma}{Lemma}[section]
\newcommand{\R}{\mathbb{R}}
\newcommand{\eps}{\varepsilon}
\newcommand{\vecspan}{\mathrm{span}}
\newcommand{\uhat}{\widehat{u}}
\newcommand{\cafter}{\widetilde{c}}
\newcommand{\uafter}{\widetilde{u}}
\newcommand{\ctwo}{c_{\mathrm{two}}}
\newcommand{\cone}{c_{\mathrm{one}}}
\newcommand{\vone}{v_{\mathrm{one}}}
\newcommand{\vtwo}{v_{\mathrm{two}}}
\newcommand{\vbar}{\overline{v}}
\newcommand{\alphawhat}{\widehat{\alpha}}
\DeclareMathOperator{\sgn}{sgn}
\DeclareMathOperator{\Span}{span}
\DeclareMathOperator*{\EE}{E}
\title{A Geometric Model of Opinion Polarization}
\author{Jan Hązła\footnote{Email: \url{jan.hazla@epfl.ch}. Partially supported by DMS-1737944.} 
\and
Yan Jin\footnote{Email: \url{yjin1@mit.edu}. Partially supported by ARO MURI W911NF1910217.}
\and
Elchanan Mossel\footnote{Email: \url{elmos@mit.edu}. Partially supported by 
by Simons Investigator in Mathematics award (622132), NSF DMS award 1737944 
and CCF award 1918421, ARO MURI grant W911NF1910217 and by a Vannevar Bush Faculty Fellowship ONR-N00014-20-1-2826.}
\and
Govind Ramnarayan\footnote{Email: \url{govind@mit.edu}. Partially supported by DMS-1737944 and ARO MURI W911NF1910217.}}
\date{}
\newcounter{eqn}
\newcommand{\num}{\refstepcounter{eqn}\text{\theeqn}\;}
\newcommand{\putindeepbox}[2][0.7\baselineskip]{{%
    \setbox0=\hbox{#2}%
    \setbox0=\vbox{\noindent\hsize=\wd0\unhbox0}
    \@tempdima=\dp0
    \advance\@tempdima by \ht0
    \advance\@tempdima by -#1\relax
    \dp0=\@tempdima
    \ht0=#1\relax
    \box0
}}
\begin{document}

\maketitle

\begin{abstract}
  We introduce a simple, geometric model of opinion polarization.
  It is a model of political persuasion,
  as well as marketing and advertising, utilizing social values.
  It focuses on the interplay between different topics and 
  persuasion efforts. 
  We demonstrate that societal opinion polarization often arises as an unintended byproduct of influencers attempting to promote a product or idea.
  We discuss a number of mechanisms for the emergence of polarization involving one or more influencers, sending messages strategically, heuristically, or randomly. We also examine some computational aspects of choosing
  the most effective means of influencing agents,
  and the effects of those strategic considerations
  on polarization.
\end{abstract}

\section{Introduction}

Opinion polarization is a widely acknowledged social phenomenon, especially
in the context of political opinions \cite{FA08, SH15, IW15}, leading to
recent concerns over ``echo chambers'' created by mass media \cite{Pri13}
and social networks \cite{Con11, Pariser, BMA15, BAB18, Gar18}. The objective of this paper is
to propose a simple, multi-dimensional geometric model of the dynamics of polarization where the 
evolution of correlations between opinions on different topics plays a key role. 

Many models have been proposed to explain how polarization arises, and this remains an active area of research
\cite{NSL90, Axe97, Noa98, HK02, MKFB03, BB07, DGL13, DVSC17, KP18, PY18, SCP19}.
Our attempt aims at simplicity over complexity.
As opposed to a large majority of previous works, our model does not require social network-based 
mechanism. 
Instead, we focus
on influences of advertising or political \emph{campaigns} that reach a
wide segment of the population.

We develop a high-dimensional variant of \emph{biased assimilation} \cite{LRL79}
and use it as
our main behavioral assumption. The bias assimilation for one topic states that people tend to be receptive to opinions they agree with, and antagonistic to opinions they disagree with.

The \emph{multi-dimensional} setting
reflects the fact that campaigns often touch on many topics.
For example, in the context of American politics, one might wonder why
there exists a significant correlation between opinions of individuals on,
say, abortion access, gun rights and urgency of climate change
\cite{Pew14}.
Our model attempts to illustrate how such correlations between opinions
can arise as a (possibly unintended) effect of advertising exploiting
different topics and social values.
  
In mathematical terms, we consider a population of agents with preexisting opinions
represented by vectors in $\mathbb{R}^d$.
Each coordinate represents a distinct topic, and the value of the coordinate reflects the agent's opinion on the topic, which can be positive or negative.
As discussed more fully in Section~\ref{sec:design-choices}, we assume that all opinions lie on the
Euclidean unit sphere. This reflects an assumption that
each agent has the same ``budget of importance'' of different
topics.
We then consider a sequence of
\emph{interventions} affecting the opinions. 
An intervention is also
a unit vector in $\mathbb{R}^d$, representing the set of opinions
expressed in, e.g., an advertising campaign or ``news cycle''.

We model the effect of intervention $v$ on an agent's opinion $u$ in the following way. Supposing an agent starts with opinion $u \in \mathbb{R}^d$, after receiving an intervention $v$ it will update the opinion to the unit vector proportional
to
\begin{align}\label{eq:07}
  w = u + \eta\cdot\langle u,v\rangle\cdot v \;,
\end{align}
where $\eta > 0$ is a global parameter that controls the influence of an
intervention. Most of our results do not depend on a choice of $\eta$ and
in our examples we often take $\eta=1$ for the sake of
simplicity. Smaller values of $\eta$ could model campaigns with limited
persuasive power. This and other design choices
are discussed more extensively in Section~\ref{sec:design-choices}.

Intuitively, the agent evaluates the received message in context of its existing opinion, and assimilates this message weighted by its ``agreement'' with it. 
Our model exhibits biased assimilation in that
if the intervening opinion $v$ is positively correlated with an
agent's opinion $u$, then after the update the agent opinion moves towards
$v$, and conversely, if $v$ is negatively correlated with $u$, then the update
moves $u$ away from $v$ and towards the opposite opinion $-v$.

One way to think of the intervention is as an exposure to persuasion by a political actor, like
a political campaign message. 
A different way, in the context of marketing,
is a product advertisement that
exploits values besides the quality of the product.
In that context, we can think of one of
the $d$ coordinates of the opinion vector as representing
opinion on a product being introduced into the market and the
remaining
coordinates as representing preexisting opinions on other (e.g., social or political)
issues. Then, an intervention would be an advertising effort
to connect the product with a certain set of opinions or values~\cite{VSL77}.
Some examples are corporate advertising campaigns
supporting LGBT rights \cite{Sny15} or gun manufacturers associating
their products with patriotism and conservative values \cite{SVS04}. Another scenario 
of an intervention is a company (e.g., a bank or an airline \cite{For18})
announcing its refusal to do business with the gun advocacy group NRA.
Such
advertising strategies can have a double effect of convincing potential
customers who share relevant values and antagonizing those who do not.

Our main results show that 
such interventions, even if 
intending mainly to increase sales and without direct intention to polarize,
can have a side
effect of increasing the extent of polarization in the society.
For example, it might be that, in a population with initial opinions distributed
uniformly, a number of interventions introduces some weak correlations.
In our model, these correlations can be profitably exploited by
advertisers in subsequent interventions. As a side effect, the interventions
strengthen the correlations and increase polarization.

For example, suppose that after various advertising campaigns,
we observe that people who tend to like item A (say, electric cars)
tend to be liberal, and people who like a seemingly unrelated item B (say, firearms) tend to be conservative.
This may result from the advertisers exploiting some obvious connections, e.g., between 
electric cars and responding to climate change, and between firearms and respect for the military.
Subsequently, future advertising efforts for electric cars may feature other values associated with liberals in America to appeal to potential consumers: an advertisement might show a gay couple driving to their wedding in an electric car. Similarly, future advertisements for firearms may appeal to conservative values for similar reasons. 
The end result can be
that the whole society becomes more polarized 
by the incorporation of political topics into advertisements.

Throughout the paper, we analyze properties of our model in a couple of scenarios. With respect to the interventions, we consider two scenarios:
either there is one entity (an \emph{influencer}) trying to persuade agents to
adopt their opinion or there are
two competing influencers pushing different agendas.
With respect to the time scale of intervations, we also consider two cases: 
the influencer(s) can apply arbitrarily many interventions, i.e., the \emph{asymptotic} setting, or they need to maximize influence with a limited number of interventions, i.e., the \emph{short-term} setting. The questions asked are: (i) What
sequence of interventions should be applied to achieve  the influencer's objective?
(ii) What are the computational resources needed to compute this optimal sequence?
(iii) What are the effects of applying the interventions on the population's opinion structure? 
We give partial answers to those questions.
The gist of them is that in most cases, applying
desired interventions increases the polarization of agents.

\subsection{Model definition}
\label{sec:model}

The formal definition of our model is simple. 
We consider a group of $n$ \emph{agents}, whose opinions are represented by $d$-dimensional unit vectors, where each coordinate corresponds to a topic. 
We will look into how those opinions change after receiving
a sequence of \emph{interventions}. Each intervention is also a unit
vector in $\R^d$, representing the opinion contained in a message that the
influencer (e.g., an advertiser) broadcast to the agents.
Our model features one parameter: $\eta > 0$, signifying how strongly
an intervention influences the opinions.

The interventions $v^{(1)},\ldots,v^{(t)},\ldots$
divide the process into discrete time steps. Initially,
the agents start with opinions $u_1^{(1)},\ldots,u_n^{(1)}$.
Subsequently, applying intervention $v^{(t)}$ updates
the opinion of agent $i$ from $u_i^{(t)}$ to $u_i^{(t+1)}$.

After each intervention, the agents update their
opinions by moving towards or away from the intervention vector,
depending on whether or not they agree with it
(which is determined by the
inner product between the intervention vector $v^{(t)}$ and the opinion vector),
and normalizing suitably.
The update rule is given by
\begin{align}\label{eq:main}
    u_i^{(t+1)}=\frac{w_i^{(t+1)}}{\left\|w_i^{(t+1)}\right\|}\;,
    \qquad\qquad\text{where}\qquad\qquad
    w_i^{(t+1)}=u_i^{(t)}+\eta\langle u_i^{(t)},v^{(t)}\rangle\cdot v^{(t)}\;.
\end{align}
We note that, by expanding out the definition of $w_i^{(t+1)}$,
\begin{align}
\label{eq:main3}
    \|w_i^{(t+1)}\|^2 = \langle w_i^{(t+1)}, w_i^{(t+1)} \rangle = 1 + (2\eta + \eta^2) \langle u_i^{(t)}, v^{(t)} \rangle^2
\end{align}
In particular, this implies
that $\|w_i^{(t+1)}\|\ge 1$, and consequently that $u_i^{(t+1)}$ is well-defined.
The norm in~\eqref{eq:main} and everywhere else throughout
is the standard Euclidean norm.
Note that applying $v^{(t)}$ or $-v^{(t)}$ to an opinion $u_i^{(t)}$
results in the same updated opinion $u_i^{(t+1)}$.

\subsection{Example}
\label{sec:examples}

To illustrate our model, let us consider an empirical example with $\eta=1$.
Suppose an advertiser is marketing a new product. The opinion of the population has four dimensions. The population consists of $500$ agents, each with initial opinions $u_i^{(1)}=(u_{i,1},u_{i,2},u_{i,3},0)\in\mathbb{R}^4$
subject to $u_{i,1}^2+u_{i,2}^2+u_{i,3}^2=1$. 
The opinion on the new product is represented by the fourth
coordinate, which is initially set to zero for all agents.
These starting opinions are sampled independently at random from the uniform distribution on
the sphere. A typical arrangement of initial opinions is shown under
$t=1$ in Figure~\ref{tab:example-one}.

Suppose the advertiser chooses to repeatedly apply an intervention that couples
the product with the preexisting opinion on the first coordinate.
More concretely, let the intervention vector be
\begin{align*}
v=\left(\beta,0,0,\alpha\right)\;,
\qquad\qquad\text{where}\qquad\qquad
\alpha=\frac{3}{4}\,,\beta=\sqrt{1-\alpha^2}\;.
\end{align*}
In that case, an application of the intervention $v$ to an opinion $u_i^{(1)}=(u_{i,1},u_{i,2},u_{i,3},0)$
results in $\langle u_i^{(1)},v\rangle=\beta u_{i,1}$ and
\begin{align*}
    u_i^{(2)}=\frac{w_i^{(2)}}{\|w_i^{(2)}\|}\;,\qquad\qquad
    w_i^{(2)} = \big((1+\beta^2)u_{i,1}, u_{i,2}, u_{i,3}, \beta\alpha u_{i,1}\big)\;,
    \qquad\qquad
    \|w_i^{(2)}\|^2=1+3\beta^2 u_{i,1}^2\;.
\end{align*}
Note that after applying the intervention the first and last coordinates have the same sign.
In subsequent time step, the intervention $v$ is applied again to the updated opinions $u_i^{(2)}$
and so on.

The evolution of opinions over five consecutive 
applications of $v$ in this process is illustrated in Figure~\ref{tab:example-one}.
The interventions
increase the affinity for the product for some agents while antagonizing
others. Furthermore, they have a side effect of polarizing the agents' opinions 
also on the first three coordinates. A similar example is included
in Appendix~\ref{sec:example-two}.

\begin{figure}[!htp]
\begin{tabular}{cc}
   \multicolumn{1}{l}{$t=$ \num} & 
   \multicolumn{1}{l}{$t=$ \num}\\
  {\includegraphics[width=0.45\textwidth]{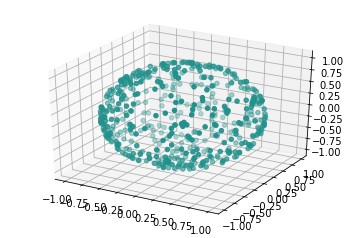}}&
    {\includegraphics[width=0.45\textwidth]{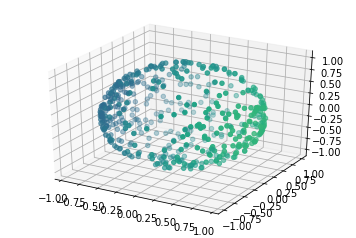}} \\
  \multicolumn{1}{l}{$t=$ \num} & 
  \multicolumn{1}{l}{$t=$ \num}\\
  {\includegraphics[width=0.45\textwidth]{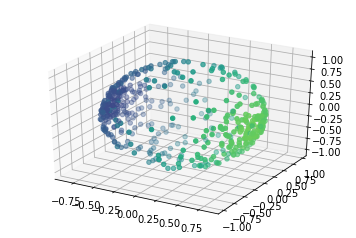}}
  &
    {\includegraphics[width=0.45\textwidth]{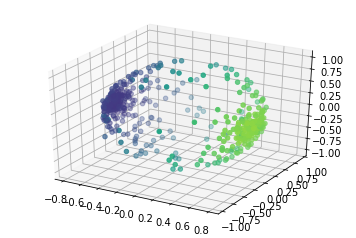}} \\
   \multicolumn{1}{l}{$t=$ \num} & 
   \multicolumn{1}{l}{$t=$ \num}\\
  {\includegraphics[width=0.42\textwidth]{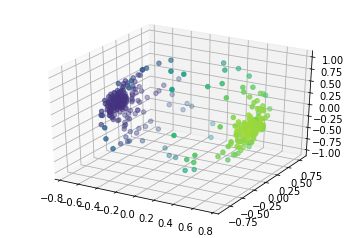}}
  &
    {\includegraphics[width=0.42\textwidth]{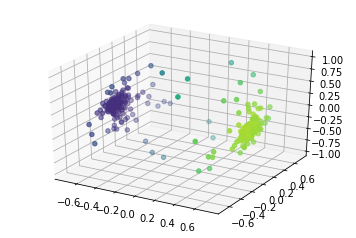}} \\
  \multicolumn{2}{c}{
  {\includegraphics[width=0.9\textwidth]{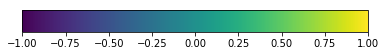}}}
\end{tabular}
\caption{Graphical illustration of the example discussed in Section~\ref{sec:examples}.
Since we are working in $d=4$, we illustrate the first three dimensions as spatial positions
and the fourth dimension with a color scale. Initially the opinions are uniformly distributed
on the sphere, with the fourth dimension equal to 0 (no opinion) everywhere.
Consecutive applications of the intervention $v=(\sqrt{7}/4,0,0,3/4)$ in $\R^4$ result in polarization
both in spatial dimensions and in the color scale.}
\label{tab:example-one}
\end{figure}

\subsection{Outline of our results}

We analyze the strategy of influencers in several settings. 

In an \textbf{``asymptotic scenario''},
the influencer wants to apply an infinite sequence of interventions
$v^{(1)},v^{(2)},\ldots,$ that maximizes
how many out of the $n$ agent opinions converge to the target vector
$v$. As is standard, we say that a sequence of vectors
$u^{(1)}, \cdots, u^{(t)},\ldots$ converges to a vector $v$ if
$\lim_{t\to\infty} ||u^{(t)} - v|| = 0$. One way to interpret this scenario
is that a campaigner wants
to establish a solid base of support for their party
platform.

In a \textbf{``multiple-influencer scenario"}, two influencers
(such as two companies or two parties) who have different objectives apply their
two respective interventions on the population in a certain order. We ask how the opinions change under such competing influences.
This scenario can be interpreted as two parties campaigning their agendas to
the population.

In a \textbf{``short-term scenario''},
the influencer is advancing a product/subject
which is expressed in the last coordinate
of opinion vectors $u_{i,d}$.
The influencer assumes some fixed threshold $0 < T < 1$ 
and an upper bound $K$
on the number of interventions, and asks, given $n$
opinions $u_1, \ldots, u_n$, how to
choose $v^{(1)},\cdots,v^{(K)}$ in order to maximize the number
of time-$(K+1)$ opinions $u_{1}^{(K+1)}, \ldots, u_n^{(K+1)}$ with $u_{i,d}^{(K+1)} > T$.
One interpretation is that advertisers only have a limited number
of opportunities to publicize their products to consumers,
and consumers with $u_i^{(K+1)} > T$ will decide to buy
the product after the interventions $v^{(1)},\cdots,v^{(K)}$ are applied.

\medskip

We briefly summarize our results for these scenarios.
In Section~\ref{sec:random-polarization} we start by showing that random interventions
lead to a strong form of polarization.
More precisely, assuming uniformly distributed initial opinions,
we prove that applying an independent uniformly
random
intervention at each time step
leads the opinions to form two equally-sized clusters converging to a pair of 
(moving) antipodal points. 

In Section~\ref{sec:densest-hemisphere-intro} we consider the asymptotic scenario,
where there is one influencer with a desired campaign agenda $v$ and unlimited numbers of interventions at its disposal.
We ask which sequence of interventions maximizes the number of opinions that
converge to the agenda $v$. Somewhat surprisingly,
we show that such optimal strategy does not necessarily promote the 
campaign agenda directly at every step. 
Instead, it finds a hemisphere containing the largest number of initial opinions, 
concentrates the opinions in this hemisphere
around an arbitrary point, and only in the last stage nudges them gradually towards the target agenda. 
We then show that it is computationally hard to approximate this
densest hemisphere (and therefore the optimal strategy) to any constant factor.
Again, strong polarization emerges from our dynamic: there exists a pair of antipodal points
such that all opinions converge to one of them. 

In Section~\ref{sec:short-term intro} we study
the short-term scenario where one influencer is allowed only one intervention.
In Section~\ref{subsec:short-term-1}, 
we describe a case study with one influencer and two agents in the population. We assume that the influencer wants
to increase the correlations of agent opinions with the target opinion $v$ above a given
threshold $T>0$. We show consequences of optimal interventions depending on if the influencer
can achieve this objective for one or both agents.
In Section~\ref{subsec:short-term-2}, we consider a similar scenario, but with a large number
of agents. In that case, it surprisingly turns out that the problem of finding optimal intervention
in this short-term setting is related to the problem analyzed in the asymptotic setting.
Finding the optimal intervention is equivalent to finding a spherical cap containing the largest number of initial opinions.

In Section~\ref{sec:two-advertisers-intro}, we study
two competing influencers.
At each time step, one of the influencers is selected at
random to apply its intervention.
One might hope that having multiple advertisers can make the resulting opinions more spread-out, but we prove that this not the case.
We show that, as time goes to infinity, 
all opinions converge to the convex cone 
between the two intervention vectors.
Furthermore, we show that the 
if the correlation between the interventions is high enough,
the strong form of polarization emerges: the opinions of the population concentrate around two antipodes moving around in the convex cones of the two interventions. 

\subsection{Design choices}
\label{sec:design-choices}
Our goal in this work is to provide a simple, elegant and analyzable model demonstrating how correlations between different topics and natural interventions lead to polarization. That being the case, there are many societal mechanisms related to polarization that we do not discuss here.

First, in contrast to majority of existing
literature, we present a mechanism independent from opinion changes induced by interactions between individuals.
Second, we do not address aspects such as replacement
of the population or unequal exposure and effects of the interventions.
We do not consider any external influences on the population in addition to
the interventions.
Our model does not align with (limited) theoretical and empirical
research suggesting that in certain settings
exposure to conflicting views can decrease polarization
\cite{PT06, MS10,GDFM17,GGPT17}
or works that question the overall extent of polarization in the society
\cite{FAP05,BG08}.

In general we assume that the influencers have full knowledge of the agent opinions. This is not a realistic
assumption and in fact our results in
Section~\ref{sec:densest-hemisphere-intro} show that in some
settings the optimal influencer strategy is infeasible to compute
even with the full knowledge of opinions. On the other hand,
we observe polarization also in settings where the influencers
apply interventions that are agnostic to the opinions,
for example with purely random interventions in Section~\ref{sec:random-polarization} or competing
influencers in Section~\ref{sec:two-advertisers-intro}.

We sometimes discuss the uniform distribution of initial opinions on $\R^d$.
We do this as the uniform distribution may be viewed as the most diverse and establishing polarization starting from the uniform distribution hints that we are modelic a generic phenomenon. 
Most of our results do not make assumptions about the initial distribution.

We assume that any group of topics can be combined into an intervention
with the effect given by~\eqref{eq:07}. A more plausible model
might feature some ``internal'' (content) correlations between topics
in addition to ``external'' (social) correlations arising out of the agents' opinion
structure.
For example, topics may have innate connections, causing
inherent correlations between corresponding opinions
(e.g., being positive
on renewable energy and recycling).
Furthermore, there are certain topics (e.g., undesirability of murder)
on which nearly all members of the population share the same inclination.
As a matter of fact, it is common for marketing strategies to exploit
unobjectionable social values (see, e.g.,~\cite{VSL77}).
However, we presume that under suitable circumstances (e.g., due to
inherent correlations we just mentioned) the ``polarizing''
topics might present a more appealing alternative for a campaign.
Our model concerns such a case, where the
``unifying'' topics might be excluded from the analysis. We note that other works have also suggested that focusing on polarizing topics may be appealing for campaigns~\cite{PY18}.

Below we discuss a couple of specific design choices in more detail:

\paragraph{Euclidean unit ball}
We make an assumption that all opinions and interventions lie on the
Euclidean unit ball. 
Note that the interpretation of this representation is somewhat ambiguous.
The magnitude of an opinion on a given subject $u_{i,k}$ might signify
the strength of the opinion, the confidence of the agent or relative importance
of the subject to the agent. 
While these are different measures,
there are psychological reasons to expect
that, e.g., ``issue interest'' and ``extremity of opinion''
are correlated \cite{LBS00,Bal07,BB07}.
Especially taking the magnitudes as signifying the relative importance, we believe that
the assumption that this ``budget of importance'' for any given agent 
is fixed is quite natural. That being said, we are also motivated
by simplicity and tractability.

Multiple ways of relaxing or modifying this assumption
are possible. While we do not study these variants in
this paper, we now discuss them very briefly. At least empirically, our basic findings about ubiquity of polarization seem to
remain valid for those modified models.

Perhaps the simplest modification is to use the same update rule
as in~\eqref{eq:main} with a different norm (eg., $\ell_1$ or
$\ell_\infty$ norm). Such variant would also assume that opinions
and interventions lie on the unit sphere of the respective norm.
Our experiments suggest that, qualitatively, 
both $\ell_1$ and $\ell_\infty$ variants behave similarly
to the Euclidean norm. 

In another direction,
rather than having all opinions on the unit sphere,  
fixed, but different norms $z_i$ 
can be specified for different agents.
Then, the update rule~\eqref{eq:main} could be modified as
\[
    w_i^{(t+1)} = u_i^{(t)} + \eta\cdot\left\langle\frac{u_i^{(t)}}{z_i},
    v^{(t)}
    \right\rangle\cdot v^{(t)}\;,
\]
with normalization preserving $\|u_i^{(t+1)}\|=z_i$.
As long as the values of $z_i$ are bounded from below and above, 
the resulting dynamic is essentially identical and our results carry over to this more general setup. 

Yet another possibility is to consider opinion unit vectors $u\in\mathbb{R}^{d+1}$ with $u_{d+1}\ge 0$ and interpret
the first $d$ coordinates as opinions and the last coordinate
as ``unused budget''. Therefore, large values of $u_{d+1}$
signify generally uncertain opinions and small values
of $u_{d+1}$ correspond to strong opinions.
There are multiple possible rules for interventions,
where an intervention can have $d$ or $d+1$ coordinates, and
with different treatments of the last coordinate. We leave
the details for another time.



\paragraph{Effects of applying $v$ and $-v$} 
In our model, an effect of an intervention $v$
is exactly the same as for the opposite intervention $-v$. This might look like
a cynical assumption about human nature, but arguably it is not
entirely inaccurate. For example,
experiments on social media show that not only exposure to similar ideas (the ``echo chamber'' effect), but also exposure to opposing opinions
causes beliefs to become more polarized \cite{BAB18}.
This is even more apparent if a broader notion of an 
intervention is considered. Using a recent example, social media platforms banning
or disassociating from certain statements can have a polarizing 
effect~\cite{bbc20}.
Furthermore, in our model this effect occurs only if all the
components of an opinion are negated.

A related, more general objection is that direct persuasion is not
possible in our model. If an agent has an opinion $u$ with
$\langle u,v\rangle<0$, directly applying $v$ only makes the situation
worse. Instead, an effective influencer needs to apply interventions utilizing
different subjects to gradually move $u$ through a sequence of intermediate
positions towards $v$. Our answer is that we posit that a lot of, if not all, persuasion actually works that way: to convince that ``$x$ is good'',
one argues that ``$x$ is good, since it is quite like $y$, which we 
both already agree is good''.

\paragraph{Notions of polarization} 
While the notion of polarization is clear when discussing one topic, it is not straightforward to interpret in higher dimensions. Let $S\subseteq\mathbb{R}^d$ be
a set of $n$ opinions. Writing
$u=(u_1,\ldots,u_d)$ for $u\in S$,
a natural measure of polarization of $S$ 
on a single topic $i$ is 
\[
\rho_i(S) = \frac{1}{|S|^2} \max_{T \subset S} \sum_{u \in T, u' \in S \setminus T} (u_i - u'_i)^2, 
\]
and we may generalize it to higher dimensions by measuring the polarization as:
\[
\rho(S) = \frac{1}{|S|^2} \max_{T \subset S} \sum_{u \in T, u' \in S \setminus T} \| u - u' \|^2. 
\]
It is clear from the definition that 
\[
\max_i \rho_i(S) \leq \rho(S) \leq \sum_i \rho_i(S). 
\]
If we consider an example set $S_1$ with $n/2$ opinions at $u$ and $n/2$ opinions at $-u$, then clearly 
$\rho(S_1) = \sum_i \rho_i(S_1)$, but in any other example, the upper bound will not be tight. 
For example, if $S_2$ is the set of the $2^d$ vertices of a hypercube, i.e., $S_2=1/\sqrt{d}\cdot\{-1,1\}^d$, then $\rho_i(S_2) = 1/d$ for all $i$, but $\rho(S_2)$ converges to $1/2$ as $n\to\infty$. 
This corresponds to the fact that while the society is completely polarized on each topic, 
two random individuals will agree on about half of the topics. In Section~\ref{sec:related} we refer to such a situation
as exhibiting \emph{issue radicalization}, but no
\emph{issue alignment}.

Ultimately, in many of our results we do not worry about these issues,
since we observe
a strong form of polarization, where all opinions
converge to two antipodal points.

\subsection{Other variants}

Other than discussed above, there are many possible variants that can lead to interesting future work. These include: 
\begin{itemize}
    \item ``Targeting'', where the influencer can select subgroups of the population and apply interventions groupwise.
    \item Models where the strength of an intervention
    $\eta$ varies across agents and/or time steps.
    \item Perturbing preferences with noise after each step.
    \item Replacement of the population, e.g., introducing new agents
    with ``fresh'' opinions or removing agents that stayed in the
      population for a long time or who already ``bought'' the product, i.e.,
      exceeded the threshold $u_{i,d}>T$. For example, this could correspond to "one-time" purchase product like a house or a fridge, or situations where the customer's opinion is more difficult to change
      as time passes.
      \item Models where the initial opinions are not observable or partially observable.
      \item Expanding the model by adding peer effects and social network structure and exploring the resulting dynamics of polarization and opinion formation. This can be done in
      different ways and we expect that polarization will
      feature in many of them. For example,
      \cite{GKT21} show polarization for random interventions
      in what they term the ``party model''.

    \item Strategic competing influencers: in the studied scenarios with competing influencers, we assume that they apply fixed interventions. One can ask: suppose the influencers have their own target opinions, what is each campaigner's optimal sequence of messages in face of the other campaigner? Then, resulting equilibrium of opinion formation could be analyzed. This can be modeled as a dynamic game where the game state is the opinion configuration and optimal strategies may be derived using sequential planning and control.
\end{itemize}

\section{Related works}
\label{sec:related}
As mentioned, there is a multitude of modeling and empirical works studying opinion polarization in different contexts
\cite{NSL90, Axe97, BG98, Noa98, HK02, MKFB03, Mul05, BB07, DGL13, DVSC17, KP18, SCP19,  PY18, BAB18}. Broadly speaking, previous works have proposed various possible sources for polarization, including peer interactions, bias in individuals' perceptions, and global information outlets. 

There is an extensive line of models of opinion exchange on networks with peer
interactions, where individuals encounter  neighboring individuals' opinions and
update their own opinions based on, e.g.,
pre-defined friend/hostile relations \cite{SPJBJ16},
or the similarity and relative strength of opinions \cite{MS10}, etc.
This branch of work often attributes polarization to homophily of one's
social network \cite{DGL13} that is induced by the self-selective nature of social relations and segregation
of like-minded people \cite{WMKL15} and
exacerbated by the echo chamber effect of social media \cite{Pariser}. 

A parallel proposed mechanism points to psychological biases in individuals' opinion formation processes. One example is biased assimilation \cite{LRL79,DGL13,BB07, BAB18}:
the tendency to reinforce one's original opinions regardless
if other encountered
opinions align with them or not. For example, \cite{BAB18} observed that even when social media users are assigned to follow accounts that share opposing opinions, they still tend to hold their old political opinions and often to a more extreme degree. On the modeling side, \cite{DGL13} showed that DeGroot opinion dynamics with the biased assimilation property on a homophilous network may lead to polarization.

Existing works have also proposed models where polarization occurs even when information is shared globally \cite{Zaller, Mul05}.
For example, \cite{Mul05} propose a model where competition for readership between global information outlets causes news to become polarized in a single-dimensional setting. 
Another example is \cite{Zaller}, a classical work on the formation of mass opinion. It theorizes that each individual has political dispositions formed in their own life experience, education and previous encounters that intermediate between the message they encounter and the political statement they make. Therefore, hearing the same political message can cause different thinking processes and changes in political preferences in different individuals. 

It is noteworthy that the majority of previous work focuses on polarization on a single topic dimension. Two exceptions are \cite{BB07}, which studies biased assimilation with opinions on multiple topics and \cite{BG08} that observed non-trivial correlations between people's attitudes on different issues. 
We note that \cite{BB07} uses a different updating rule to observe
dynamics that differ from our work: in their simulations, polarization on one issue
typically does not result in polarization on others. There is also a class of models \cite{Axe97, Noa98, MKFB03}
that concern multi-dimensional opinions where an opinion on a given topic
takes one of finitely many values (e.g., + or \textminus).
These models do not seem to have a geometric structure of opinion space similar to ours
and usually focus on formation of discrete groups
in the society rather than total polarization.
Another model~\cite{PPTF17} uses a geometric (affine) rule of
updating multi-dimensional opinions. Unlike us, they seem to be modeling
pre-existing, ``intrinsic'' correlations between topics rather than the emergence
of new ones and they are concerned mostly with convergence and
stability of their dynamics.
  
A related paper \cite{PY18} contains a geometric model of opinion (preference) structures.
Both this and our model propose mechanisms through which information outlets acting for their own benefit can lead to increased disagreement in the society. 
The key difference to our model is that their population's preferences are static and do not update, but the 
outlets are free to choose what information to
offer to their customers. 
By contrast, in our model, the influencers have pre-determined ideologies and compete to align agents' opinions with their own. In other words, \cite{PY18} focuses on modeling of competitive information acquisition, and our paper on modeling the influence of marketing on the public opinion.

Our model suggests that
under the
conditions
of biased assimilation, opinion manipulation by one or several
global information outlets can unintentionally lead to
a strong form of polarization in multi-dimensional opinion space. 
Not only do people polarize on individual issues, but also their opinions on previously unrelated issues become correlated. 
This form of polarization is known as \emph{issue alignment} \cite{BG08} in political science and sociology literature. Issue alignment refers to an opinion structure where the population's  opinions on multiple (relatively independent) issues correlate.
It is related to \emph{issue radicalization}, where the opinions polarize for each issue separately.
Compared to issue radicalization, issue alignment is theorized to pose more constraints on the opinions an individual can take, resulting in polarized and clustered mass opinions even when the public opinions are not extreme in any single topic, and presenting
more obstacles for social integration and political stability \cite{BG08}. 
In light of this, one way to view our model is as a
mathematical mechanism by which this strong form of polarization can arise and
worsen due to companies', politicians', and the media's natural attempts to gain
support from the public.

On the more technical side, we note that our update equation bears similarity to Kuramoto model \cite{JMB04} for synchronization of  oscillators on a network in the control literature.
In this model, each oscillator $i$ is associated with the point $\theta_i$ on the two-dimensional sphere, and $i$ updates its point continuously as a function of its neighbors' points $\theta_j$:

$$\dot{\theta_i} = \omega_i +\frac{K}{N}\sin(\theta_j-\theta_i),$$
where $K$ is the \emph{coupling strength} and $N$ is the number of nodes in the network. In two dimensions, our model can be compared to Kuramoto model with $\omega_i=0$ on a star graph, with the influencers at the center of the star connected to the entire population, where the influencers' opinions do not change and the update strength is qualitatively similar to $\sin((\theta_v-\theta_u)/2)$ (see ~\eqref{eq:pull_function}).
However, we note a crucial difference: in the Kuramoto dynamic, $\theta_i$ always moves towards $\theta_j$, i.e. nodes always move towards synchronization, but in our dynamic, opinions $\theta_i$ are allowed to move further away from $\theta_j$ when the angle between their opinions are obtuse. In addition, the central node in our model can be strategic in choosing its positions, while the central node in Kuramoto model follows the synchronization dynamics of the system. We think this property provides a better model for opinion interactions.

\paragraph{Subsequent work} A work 
by Gaitonde, Kleinberg and Tardos~\cite{GKT21},
announced after we posted the preprint of
this paper, proposes a framework that generalizes our random
interventions scenario from Section~\ref{sec:random-polarization}.
They prove several interesting results, including a strong form of polarization under random interventions in some related models.
They also shed more light on the scenario of dueling
influencers from Section~\ref{sec:two-advertisers-intro},
showing that in case the dueling interventions are orthogonal,
the resulting dynamics exhibits a weaker kind of polarization.

\section{Asymptotic scenario: random interventions polarize opinions}
\label{sec:random-polarization}

In this section, we analyze the long-term behavior of our model in a simple random setting.
We assume that, for given dimension $d$
and parameter $\eta$, at the initial time $t=1$ we are given $n$ opinion vectors
$u^{(1)}_1,\ldots,u^{(1)}_n$. Subsequently, we
sample a sequence of 
interventions $v^{(1)},v^{(2)},\ldots,$ each
$v^{(t)}$ iid from the uniform distribution on
the unit sphere $S^{d-1}$. At time $t$ we apply the random intervention $v^{(t)}$ to every opinion vector $u_i^{(t)}$, obtaining a new opinion $u_i^{(t+1)}$.

We want to show that the opinions $\{u_i^{(t)}\}$
almost surely polarize as time $t$ goes to infinity.
We need to be careful about defining the notion of polarization:
since the interventions change at every time step, 
the opinions cannot converge to a fixed vector. Instead,
we show that for every pair of opinions the angle between them
converges either to $0$ or to $\pi$. More formally:

\begin{theorem}\label{thm:random-polarization}
  Consider the model of iid interventions described above for some $d\ge 2$, $\eta>0$ and initial opinions $u_1^{(1)},\ldots,u_n^{(1)}$.
  For any $1\le i<j\le n$ and
  $t\to\infty$,
\begin{align*}
      \Pr\left[ 
      \|u_i^{(t)}-u^{(t)}_j\|\to 0\lor\|u_i^{(t)}+u_j^{(t)}\|\to 0  
      \right] = 1 \;.
  \end{align*}
\end{theorem}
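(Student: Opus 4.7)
The plan is to fix any pair $i, j$ and show that the Lyapunov function $L_t := 1 - c_t^2$, where $c_t := \langle u_i^{(t)}, u_j^{(t)}\rangle \in [-1,1]$, tends to $0$ almost surely. This reduces the theorem to a short sign-preservation step: $|c_t| \to 1$ together with sign preservation gives $c_t \to +1$ or $c_t \to -1$, and $\|u_i^{(t)} \mp u_j^{(t)}\|^2 = 2(1 \mp c_t)$ then yields the stated dichotomy. The starting point is an explicit multiplicative recurrence for $L_t$. Setting $\kappa := 2\eta + \eta^2$ and $a_{k,t} := \langle u_k^{(t)}, v^{(t)}\rangle$, a direct expansion yields $\langle w_i^{(t+1)}, w_j^{(t+1)}\rangle = c_t + \kappa\, a_{i,t}\, a_{j,t}$. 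Combining this with~\eqref{eq:main3} and the algebraic identity $a_{i,t}^2 + a_{j,t}^2 - 2 c_t\, a_{i,t}\, a_{j,t} = (1 - c_t^2)\, P_t$ (verified in an orthonormal frame for $\Span(u_i^{(t)}, u_j^{(t)})$, with $P_t$ the squared norm of the projection of $v^{(t)}$ onto that plane) produces the clean factorization
\[
L_{t+1} \;=\; L_t \cdot R_t, \qquad R_t \;:=\; \frac{1 + \kappa\, P_t}{(1+\kappa\, a_{i,t}^2)(1+\kappa\, a_{j,t}^2)}.
\]

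The key observation is that, by rotational invariance of the uniform law on $S^{d-1}$, the marginal distributions of $a_{i,t}^2$, $a_{j,t}^2$, and $P_t$ depend only on the dimensions of the relevant projection subspaces, not on the configuration $(u_i^{(t)}, u_j^{(t)})$. Concretely, letting $V = (V_1, \ldots, V_d)$ be uniform on $S^{d-1}$, the random variables $a_{i,t}^2$ and $a_{j,t}^2$ are each distributed as $V_1^2$, and $P_t$ is distributed as $V_1^2 + V_2^2$. Therefore
\[
\EE[\log R_t \mid \mathcal{F}_t] \;=\; \EE\bigl[\log(1 + \kappa(V_1^2 + V_2^2))\bigr] - 2\,\EE\bigl[\log(1+\kappa V_1^2)\bigr] \;=:\; -\gamma,
\]
a deterministic constant depending only on $d$ and $\kappa$. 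The pointwise identity $(1+\kappa V_1^2)(1+\kappa V_2^2) = 1 + \kappa(V_1^2+V_2^2) + \kappa^2 V_1^2 V_2^2 \ge 1+\kappa(V_1^2+V_2^2)$ gives $\gamma \ge 0$, and because $V_1 V_2 \neq 0$ almost surely for $d \ge 2$, the inequality is strict in expectation, so $\gamma > 0$.

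With a constant negative drift $-\gamma$ at every step, the martingale $M_t := \log L_t - \log L_0 + \gamma t$ has uniformly bounded increments (since $R_s \in [(1+\kappa)^{-2}, 1+\kappa]$ implies $|\log R_s| \le 2\log(1+\kappa)$), so by the strong law for martingales with bounded differences, $M_t / t \to 0$ a.s. Hence $\log L_t \sim -\gamma t \to -\infty$ and $|c_t| \to 1$ almost surely. To finish, I would use a short continuity argument based on the update $c_{t+1} = (c_t + \kappa a_{i,t} a_{j,t})/(\|w_i^{(t+1)}\|\|w_j^{(t+1)}\|)$: once $|c_t|$ is close enough to $1$, we have either $u_i^{(t)} \approx u_j^{(t)}$ (so $a_{i,t} a_{j,t} \ge 0$) or $u_i^{(t)} \approx -u_j^{(t)}$ (so $a_{i,t} a_{j,t} \le 0$), and in either case the sign of $c_{t+1}$ agrees with the sign of $c_t$. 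Thus $c_t$ eventually commits to $+1$ or $-1$. The only place I anticipate any real care is making the sign-preservation step quantitative enough to rule out infinitely many sign flips along the way to the limit; everything else follows routinely from the recurrence and the constant-drift observation.
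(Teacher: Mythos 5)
Your proposal is correct, and it takes a genuinely different route from the paper. The paper proves that the angle $\alpha_t=\arccos\langle u_i^{(t)},u_j^{(t)}\rangle$ is a bounded martingale (which requires a careful orientation-preservation argument via cross products, plus a separate reduction from $d\ge 3$ to $d=2$), and then combines martingale convergence with a qualitative ``variance in the middle'' continuity argument to exclude interior limits. You instead exhibit an explicit Lyapunov function: the factorization $L_{t+1}=L_t\cdot\frac{1+\kappa P_t}{(1+\kappa a_{i,t}^2)(1+\kappa a_{j,t}^2)}$ is correct (the identity $a_{i,t}^2+a_{j,t}^2-2c_t a_{i,t}a_{j,t}=(1-c_t^2)P_t$ checks out in the orthonormal frame you describe), the conditional expectation of $\log R_t$ is indeed the configuration-independent constant $-\gamma$ by linearity plus rotational invariance of the marginals, and $\gamma>0$ follows from your pointwise inequality together with exchangeability of the sphere's coordinates. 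Your approach buys a quantitative exponential rate $L_t\approx e^{-\gamma t}$ and treats all $d\ge 2$ uniformly; the paper's approach buys the extra structural fact that $\alpha_t$ is a martingale, which pins down the probability of converging to $0$ versus $\pi$ as $1-\alpha_1/\pi$ versus $\alpha_1/\pi$ --- information your drift argument does not recover. One small correction to your final step: when $u_i^{(t)}\approx u_j^{(t)}$ it is \emph{not} true that $a_{i,t}a_{j,t}\ge 0$ (take $v$ nearly orthogonal to both opinions), but writing $a_{i,t}a_{j,t}=c_t x^2+\sqrt{1-c_t^2}\,xy$ in your frame gives $c_t+\kappa a_{i,t}a_{j,t}\ge c_t(1+\kappa x^2)-\kappa\sqrt{1-c_t^2}/2$ when $c_t>0$, so the sign of $c_{t+1}$ agrees with that of $c_t$ as soon as $|c_t|>\kappa\sqrt{1-c_t^2}/2$, which holds for all sufficiently large $t$ since $|c_t|\to 1$; this is exactly the quantitative fix you anticipated needing, and it closes the argument.
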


This leads to a corollary
which follows
by applying the union bound (with probability 0
in each term) for each pair of opinions
$u_i^{(t)},u_j^{(t)}$:
\begin{corollary}\label{cor:random-polarization}
    For any $d\ge 2$, $\eta>0$, initial opinions $u_1^{(1)},\ldots,u_n^{(1)}$
    and a sequence of uniform iid interventions,
    almost surely, there exists $J\subseteq\{1,\ldots,n\}$
    such that the diameter of the set
\begin{align*}
    \left\{
    (-1)^{\mathbbm{1}[i\in J]} \cdot u^{(t)}_i:
    i\in\{1,\ldots,n\}
    \right\}
\end{align*}
converges to zero.
\end{corollary}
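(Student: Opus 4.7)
The plan is to reduce the corollary directly to Theorem~\ref{thm:random-polarization} via a union bound and then exhibit an explicit choice of $J$ in terms of a fixed reference agent. First, I would fix the probability~1 event $E$ on which, simultaneously for every pair $1 \le i < j \le n$, one of $\|u_i^{(t)} - u_j^{(t)}\| \to 0$ or $\|u_i^{(t)} + u_j^{(t)}\| \to 0$ holds. Since there are only $\binom{n}{2}$ pairs and each has a probability~1 event by the theorem, $E$ itself has probability~1. All subsequent arguments take place on $E$.

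Next, I would define $J$ relative to agent~$1$: set
\begin{align*}
J = \left\{ i \in \{1,\ldots,n\} : \|u_1^{(t)} + u_i^{(t)}\| \to 0 \right\}\;,
\end{align*}
so that $1 \notin J$ and every other index lies in $J$ iff its opinion is asymptotically antipodal to $u_1^{(t)}$. Observe that the two alternatives from Theorem~\ref{thm:random-polarization} are mutually exclusive: since all opinions are unit vectors, we always have $\|u_1^{(t)} - u_i^{(t)}\| + \|u_1^{(t)} + u_i^{(t)}\| \ge \|2 u_1^{(t)}\| = 2$, so both terms cannot tend to zero at once. Hence for every $i$, exactly one of the two alternatives holds, and $J$ together with its complement partitions $\{1,\ldots,n\}$ according to which one.

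The only remaining task is to check that the flipped set has vanishing diameter. For any two indices $i, j$, I would apply a simple triangle inequality on one of the three cases: if $i, j \in J$, write $u_i^{(t)} - u_j^{(t)} = (u_i^{(t)} + u_1^{(t)}) - (u_j^{(t)} + u_1^{(t)})$, and both summands vanish; if $i, j \notin J$, write $u_i^{(t)} - u_j^{(t)} = (u_i^{(t)} - u_1^{(t)}) - (u_j^{(t)} - u_1^{(t)})$; if exactly one of $i, j$ lies in $J$, sum the two vanishing quantities $u_i^{(t)} \pm u_1^{(t)}$ and $u_j^{(t)} \mp u_1^{(t)}$ to obtain that the appropriately signed difference tends to zero. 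In each case, the pairwise distance $\|(-1)^{\mathbbm{1}[i\in J]} u_i^{(t)} - (-1)^{\mathbbm{1}[j\in J]} u_j^{(t)}\|$ tends to zero. Since the diameter is the maximum over only $\binom{n}{2}$ such pairs, it tends to zero as well.

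I do not anticipate any real obstacle beyond bookkeeping: the substantive content is all in Theorem~\ref{thm:random-polarization}, and the corollary is essentially an observation that the pairwise relation ``close or antipodal'' is consistent in a $\mathbb{Z}/2$ sense once one uses that antipodal points on the unit sphere are at distance~$2$. The one place requiring a small comment is precisely this exclusivity between the two alternatives, which is what prevents the equivalence class structure from being ill-defined.
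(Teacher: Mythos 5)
Your proposal is correct and follows essentially the same route as the paper, which simply invokes a union bound over the $\binom{n}{2}$ pairs and leaves the construction of $J$ implicit. Your explicit definition of $J$ via a reference agent, the exclusivity argument from $\|u_1^{(t)}-u_i^{(t)}\|+\|u_1^{(t)}+u_i^{(t)}\|\ge 2$, and the three-case triangle-inequality check are exactly the bookkeeping the paper omits.
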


\begin{remark}
  Consider initial opinions of $n$ agents that are independently 
  sampled from a distribution $\Gamma$ that is symmetric around the origin, in the sense that 
  $\Gamma(-A) = \Gamma(A)$ for every set $A\subseteq S^{d-1}$. 
  Then, with high probability, 
  the opinions converge to two polarized clusters of size
  roughly $n/2$. Indeed, consider sampling $n$ independent vectors $u_1,\ldots,u_n$ from $\Gamma$ and $n$ independent 
  signs $\sigma_1,\ldots,\sigma_n \in \{ \pm 1\}$. Then $\sigma_1 u_1, \ldots, \sigma_n u_n$ are independent samples from $\Gamma$. Moreover, if the sizes of the two clusters for $u_1,\ldots,u_n$ are $r$ and $n-r$ then the size of each cluster for $\sigma_1 u_1, \ldots, \sigma_n u_n$ is distributed
  according to $\mathrm{Bin}(r,1/2) + \mathrm{Bin}(n-r,1/2) = \mathrm{Bin}(n,1/2)$ (this is due to the observation
  that $u_i$ and $-u_i$ converge to the opposite clusters).
\end{remark}

\begin{remark}
For simplicity we do not elaborate on this later,
but we note that, both empirically and theoretically, the convergence in our results is
quite fast. This concerns Theorem~\ref{thm:random-polarization}, as well as 
the results presented in the subsequent sections.
\end{remark}

\smallskip

We now proceed to the proof of Theorem~\ref{thm:random-polarization}:
\subsection{Notation and main ingredients}

Before we proceed with explaining the proof, let us make a general observation
that we will use frequently. Let $d\ge 2$ and $\eta>0$ and let
$f:S^{d-1}\times S^{d-1}\to S^{d-1}$ be the function
mapping an opinion $u$ and an intervention $v$
to an updated opinion $f(u, v)$, according
to~\eqref{eq:main} and~\eqref{eq:main3}. It should be clear that
this function is invariant under isometries: namely, for any 
real unitary
transformation $A:S^{d-1}\to S^{d-1}$ we have
\begin{align}\label{eq:09}
    f(Au, Av) = Af(u, v)\;.
\end{align}
In our proofs we will be often using~\eqref{eq:09} to choose a convenient 
coordinate system. 

Let us turn to Theorem~\ref{thm:random-polarization}.
Again, let $d\ge 2$ and $\eta>0$. Without loss of generality we will consider 
only two starting opinions called $u_1^{(1)}$
and $u_2^{(1)}$.
To prove Theorem~\ref{thm:random-polarization}, we need to show that
almost surely one of the vectors
$u_1^{(t)}-u_2^{(t)}$ and $u_1^{(t)}+u_2^{(t)}$
vanishes.

We proceed by using martingale convergence. Specifically,
let
\begin{align*}
    \alpha_t := \arccos\langle u_1^{(t)},u_2^{(t)}\rangle\;.
\end{align*}
That is, $0\le\alpha_t\le\pi$ is the primary angle between $u_1^{(t)}$ and $u_2^{(t)}$. 

The proof rests on
two claims. First, $\alpha_t$ is a 
martingale:
\begin{claim}\label{cl:martingale}
  $\EE[\alpha_{t+1}\mid\alpha_t] = \alpha_t$.
\end{claim}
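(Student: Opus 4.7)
The plan is to reduce the claim to a two-dimensional problem and then analyze the map on the circle induced by the update. By the isometry invariance~\eqref{eq:09}, I may assume that $u_1^{(t)}$ and $u_2^{(t)}$ lie in a fixed $2$-plane $P\subseteq\mathbb{R}^d$. Decompose the intervention as $v^{(t)} = v_P + v_\perp$ with $v_P \in P$ and $v_\perp \perp P$, and write $r = \|v_P\|$ and $\hat{v}_P = v_P/r$. Since $u_i^{(t)}\in P$ we have $\langle u_i^{(t)}, v^{(t)}\rangle = \langle u_i^{(t)}, v_P\rangle$, and a direct expansion of~\eqref{eq:main} together with~\eqref{eq:main3} yields
\[
\cos\alpha_{t+1}
=\frac{\cos\alpha_t + (2\eta+\eta^2)\,\langle u_1^{(t)},v_P\rangle\langle u_2^{(t)},v_P\rangle}
{\sqrt{1+(2\eta+\eta^2)\langle u_1^{(t)},v_P\rangle^2}\,\sqrt{1+(2\eta+\eta^2)\langle u_2^{(t)},v_P\rangle^2}}.
\]
Choosing $\tilde\eta\ge 0$ with $2\tilde\eta+\tilde\eta^2 = r^2(2\eta+\eta^2)$ turns the right-hand side into the expression one would obtain in the $d=2$ model with parameter $\tilde\eta$ and unit intervention $\hat v_P$. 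By the rotational symmetry of the uniform distribution on $S^{d-1}$, conditional on $r$ the direction $\hat v_P$ is uniform on the unit circle of $P$. So it suffices to establish the claim in $d=2$ for every $\eta>0$.

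In two dimensions, by rotational invariance I place the intervention at angle $0$ and take $u_1^{(t)}, u_2^{(t)}$ at angles $\gamma$ and $\gamma+\alpha_t$, with $\gamma$ uniform on $[0,2\pi)$. A direct computation shows that the update sends a unit vector at angle $\theta$ to the unit vector at angle $g(\theta)$, where $g(\theta)$ is the argument of the planar vector $\bigl((1+\eta)\cos\theta,\,\sin\theta\bigr)$. Three elementary properties of $g$ drive the rest of the argument: (i) $g$ is a smooth strictly increasing bijection of $\mathbb{R}$; (ii) $g(\theta+2\pi)=g(\theta)+2\pi$; and (iii) $g(\theta+\pi)=g(\theta)+\pi$, which holds because the update sends $-u$ to $-u'$.

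Combining (i) and (iii), for any $\alpha_t\in[0,\pi]$ and any $\gamma$,
\[
0 \;\le\; g(\gamma+\alpha_t)-g(\gamma) \;\le\; g(\gamma+\pi)-g(\gamma) \;=\; \pi,
\]
so the signed circle difference $g(\gamma+\alpha_t)-g(\gamma)$ already lies in $[0,\pi]$ and therefore coincides with the principal angle $\alpha_{t+1}$ between the two updated vectors. Using property (ii) and the substitution $\gamma'=\gamma+\alpha_t$,
\[
\int_0^{2\pi}\!\bigl[g(\gamma+\alpha_t)-g(\gamma)\bigr]\,d\gamma
\;=\;
\int_{2\pi}^{\alpha_t+2\pi}\! g(\gamma')\,d\gamma' - \int_0^{\alpha_t}\! g(\gamma)\,d\gamma
\;=\;
\int_0^{\alpha_t}\! 2\pi\,d\gamma
\;=\;
2\pi\alpha_t,
\]
yielding $\EE[\alpha_{t+1}\mid\alpha_t]=\alpha_t$.

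The main subtle point is the upper bound $g(\gamma+\alpha_t)-g(\gamma)\le\pi$: this is what lets the signed circle difference coincide with the principal angle $\alpha_{t+1}$ without any wrap-around. The antipodal symmetry (iii), together with $\alpha_t\le\pi$, is precisely what is needed; it is also what makes $g$ a degree-one circle map, so that the final integral telescopes into a linear function of $\alpha_t$.
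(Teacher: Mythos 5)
Your proof is correct. The reduction from $d\ge 3$ to $d=2$ is essentially identical to the paper's (same decomposition $v=v_P+v_\perp$, same reparametrization $2\tilde\eta+\tilde\eta^2=r^2(2\eta+\eta^2)$, same use of rotational symmetry conditional on $r$), but your two-dimensional argument takes a genuinely different route. The paper writes the updated angles as $\gamma$ and $\alpha_t+\gamma'$, gets $\EE\gamma=\EE\gamma'=0$ from the reflection symmetry $\gamma(\beta)=-\gamma(-\beta)$, and then must separately rule out wrap-around of the directed angle $\widehat{\alpha}=\alpha_t+\gamma'-\gamma$; that last step is the content of Claim~\ref{cl:sin-calculation}, proved in the appendix by a cross-product computation showing the pair of updated vectors preserves orientation. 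You instead package the update as a lifted circle map $g$ and extract everything from three structural facts: strict monotonicity, $g(\theta+2\pi)=g(\theta)+2\pi$, and $g(\theta+\pi)=g(\theta)+\pi$. Monotonicity of $g$ is exactly the orientation-preservation that the paper's cross-product computation establishes, but you get it from the one-line derivative
\[
g'(\theta)=\frac{1+\eta}{(1+\eta)^2\cos^2\theta+\sin^2\theta}>0
\]
(worth writing out explicitly, since property (i) is the crux), and combined with the antipodal equivariance it immediately yields $0\le g(\gamma+\alpha_t)-g(\gamma)\le\pi$, so the signed difference is the principal angle. Your expectation step is also different: rather than two separate mean-zero identities, the degree-one property makes the integral of $g(\gamma+\alpha_t)-g(\gamma)$ over a full period telescope to $2\pi\alpha_t$. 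The net effect is a cleaner argument that dispenses with the appendix computation; what you give up is only that the paper's decomposition into $\gamma$ and $\gamma'$ is reused pictorially elsewhere (Figure~\ref{fig:random-intervention}), whereas your circle-map formulation is self-contained to this claim.
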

Second, we show a property
which has been called 
``variance in the middle''~\cite{BGN18}:
\begin{claim}\label{cl:variance}
  For every $\eps>0$, there exists $\delta>0$ such that,
  \begin{align}\label{eq:12}
  \eps\le\alpha_t\le\pi/2\implies
  \Pr\big[\alpha_{t+1}<\alpha_t-\delta\mid \alpha_t\big]>\delta\;,
  \end{align}
  and, symmetrically,
  \begin{align}\label{eq:13}
  \pi/2\le\alpha_t\le\pi-\eps\implies
  \Pr\big[\alpha_{t+1}>\alpha_t+\delta\mid\alpha_t\big]>\delta\;.
  \end{align}
\end{claim}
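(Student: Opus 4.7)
The plan is to establish \eqref{eq:12} directly by exhibiting a fixed event of positive probability on which $\cos\alpha_{t+1}$ strictly exceeds $\cos\alpha_t$ by at least a definite amount, and then to deduce \eqref{eq:13} from \eqref{eq:12} by a symmetry argument. For the symmetry, observe that the pre-normalization map $u \mapsto u + \eta\langle u, v\rangle v$ is linear in $u$, so replacing $u_2^{(t)}$ by $-u_2^{(t)}$ and running the same update step produces $-u_2^{(t+1)}$. Hence the angle $\beta_t$ between $u_1^{(t)}$ and $-u_2^{(t)}$ satisfies $\beta_t = \pi - \alpha_t$ at every time step, and the event $\{\alpha_{t+1} > \alpha_t + \delta\}$ coincides with $\{\beta_{t+1} < \beta_t - \delta\}$. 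When $\alpha_t \in [\pi/2, \pi-\eps]$ we have $\beta_t \in [\eps, \pi/2]$, so \eqref{eq:13} is exactly \eqref{eq:12} applied to $(u_1^{(t)}, -u_2^{(t)})$.

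To prove \eqref{eq:12}, I would use the isometry invariance \eqref{eq:09} to choose an orthonormal frame with $u_1^{(t)} = \cos(\alpha_t/2) e_1 - \sin(\alpha_t/2) e_2$ and $u_2^{(t)} = \cos(\alpha_t/2) e_1 + \sin(\alpha_t/2) e_2$, so that $e_1$ is the bisector of the two opinions; since the uniform law on $S^{d-1}$ is rotation-invariant, $v$ remains uniform in this frame. Writing $v_j := \langle v, e_j\rangle$, $a := \langle u_1^{(t)}, v\rangle = v_1 \cos(\alpha_t/2) - v_2 \sin(\alpha_t/2)$, $b := \langle u_2^{(t)}, v\rangle = v_1 \cos(\alpha_t/2) + v_2 \sin(\alpha_t/2)$, and $K := 2\eta + \eta^2$, a direct expansion of \eqref{eq:main} and \eqref{eq:main3} gives
\[
\cos\alpha_{t+1} = \frac{\cos\alpha_t + K\,ab}{\sqrt{(1+K a^2)(1+K b^2)}},
\qquad ab = v_1^2 \cos^2(\alpha_t/2) - v_2^2 \sin^2(\alpha_t/2).
\]
The good event will be the spherical cap $E_\gamma := \{v \in S^{d-1} : \langle v, e_1\rangle \geq 1 - \gamma\}$ around the bisector, for a small $\gamma > 0$ to be fixed; it is a nonempty relatively open subset of $S^{d-1}$ and therefore has positive measure $p(d,\gamma) > 0$ depending only on $d$ and $\gamma$.

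At the cap's center $v = e_1$ one has $a = b = \cos(\alpha_t/2)$ and a direct computation yields
\[
\cos\alpha_{t+1} - \cos\alpha_t = \frac{K\cos^2(\alpha_t/2)(1-\cos\alpha_t)}{1 + K\cos^2(\alpha_t/2)},
\]
which is continuous and strictly positive in $\alpha_t$ on the compact interval $[\eps, \pi/2]$ and hence bounded below by some constant $c_0 = c_0(\eps, \eta) > 0$. Joint continuity of $(v, \alpha_t) \mapsto \cos\alpha_{t+1}$ then lets us shrink $\gamma$ so that $\cos\alpha_{t+1} - \cos\alpha_t \geq c_0/2$ on all of $E_\gamma$, uniformly in $\alpha_t \in [\eps, \pi/2]$. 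The elementary inequality $|\cos x - \cos y| \leq |x - y|$ (since $|\sin| \leq 1$) converts this into $\alpha_t - \alpha_{t+1} \geq c_0/2$, and setting $\delta := \min(p(d, \gamma), c_0/2)$ completes the proof. The one delicate point is that $c_0$ must not degenerate as $\alpha_t$ approaches either endpoint of $[\eps, \pi/2]$; this is automatic from compactness of the interval combined with strict positivity of the explicit expression above.
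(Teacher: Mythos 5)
Your proof is correct. The symmetry reduction of \eqref{eq:13} to \eqref{eq:12} via $u_2^{(t)}\mapsto -u_2^{(t)}$ is exactly the one the paper uses. For \eqref{eq:12} the overall skeleton also matches the paper's — fix a well-chosen deterministic intervention direction, show it strictly decreases the angle by an amount bounded away from zero uniformly over $\alpha_t\in[\eps,\pi/2]$, and then use continuity plus the positive measure of a small cap around that direction — but the execution of the key step is genuinely different. The paper centers its cap at $v^*=(\cos\eps,\sin\eps,0,\ldots,0)$, a point at angle $\eps$ from $u_1^{(t)}$ lying between the two opinions, and verifies the decrease \emph{without computation} via a triangle inequality in the metric $D(u,v)=\arccos\langle u,v\rangle$, using only that $f(u,v^*)$ is a strict convex combination of $u$ and $v^*$; the resulting gap $(\eps-d(\eps))/2$ is implicit. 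You instead center the cap at the bisector and compute $\cos\alpha_{t+1}$ in closed form, obtaining the explicit gap $K\cos^2(\alpha_t/2)(1-\cos\alpha_t)/(1+K\cos^2(\alpha_t/2))$, which is transparently positive and continuous on the compact interval; your route is more quantitative and arguably cleaner to verify, at the cost of a coordinate computation the paper avoids. Two cosmetic points: with $\delta:=\min(p(d,\gamma),c_0/2)$ you only get the non-strict inequalities $\Pr[\cdot]\ge\delta$ and $\alpha_{t+1}\le\alpha_t-\delta$, whereas the claim asserts strict ones, so you should take, say, $\delta:=\tfrac12\min(p(d,\gamma),c_0/2)$; and when invoking uniform continuity you should note (as you implicitly do) that on $E_\gamma$ one has $v_2^2\le 1-(1-\gamma)^2$, so $(v_1,v_2)$ tends to $(1,0)$ uniformly as $\gamma\to 0$, which is what makes the bound uniform in $\alpha_t$.
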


These two claims imply Theorem~\ref{thm:random-polarization}
by standard tools from the theory of martingales
(eg.,~\cite{Wil91}):
\begin{proof}[Claims~\ref{cl:martingale}
and~\ref{cl:variance} imply Theorem~\ref{thm:random-polarization}]
As a consequence of applying Claim~\ref{cl:variance} $\lceil\pi/\delta\rceil$ times, we obtain that
for every $\eps>0$ there exist $k_0\in\mathbb{N}$ and $\eta<1$
such that
\begin{align*}
    \eps\le\alpha_t\le\pi-\eps\implies
    \Pr\left[\forall 1\le k\le k_0: \eps\le\alpha_{t+k}\le\pi-\eps\mid
    \alpha_t\right]\le\eta\;.
\end{align*}
Subsequently, it follows that for any fixed $\eps>0$ and
$T\in\mathbb{N}$,
\begin{align}\label{eq:14}
    \Pr\left[\forall t\ge T:\eps\le\alpha_{  t}\le\pi-\eps\right]
    =0\;.
\end{align}

By Claim~\ref{cl:martingale}, the sequence of
random variables $\alpha_t$ is a bounded martingale and therefore almost
surely converges. Accordingly, let 
$\alpha^*:=\lim_{t\to\infty}\alpha_t$. We now argue that $\Pr[0<\alpha^*<\pi]=0$. To that end,
\begin{align*}
    \Pr[0<\alpha^*<\pi]&\le
    \sum_{s=1}^{\infty}
    \Pr\left[\frac{1}{s}<\alpha^*<\pi-\frac{1}{s}\right]
    \le\sum_{s=1}^{\infty}
    \Pr\left[\exists T:\forall t\ge T:
    \frac{1}{2s}<\alpha_t<\pi-\frac{1}{2s}\right]\\
    &\le\sum_{s=1}^{\infty}\sum_{T=1}^{\infty}
    \Pr\left[\forall t\ge T:\frac{1}{2s}<\alpha_t<\pi-\frac{1}{2s}
    \right]=0\;,
\end{align*}
where we applied~\eqref{eq:14} in the last line. Hence,
almost surely, either $\alpha^*=0$, which is equivalent to
$\|u_1^{(t)}-u_2^{(t)}\|\to 0$ or $\alpha^*=\pi$, equivalent
to $\|u_1^{(t)}+u_2^{(t)}\|\to 0$.
\end{proof}

In the subsequent sections we proceed with proving Claims~\ref{cl:martingale}
and~\ref{cl:variance}. In Section~\ref{sec:cl-martingale-d-2}
we prove Claim~\ref{cl:martingale} for $d=2$. In 
Section~\ref{sec:cl-martingale-d-3} we show the same claim for $d\ge 3$
by a reduction to the case $d=2$. Finally, in Section~\ref{sec:cl-variance-proof}
we use a continuity argument to prove Claim~\ref{cl:variance}.

In the following proofs, we fix $d$, $\eta$, time $t$
and the opinions of two agents at that time. For simplicity,
we will denote the relevant vectors as $u:=u_1^{(t)}$, $u':=u_2^{(t)}$
and $v:=v^{(t)}$.

\subsection{Proof of Claim~\ref{cl:martingale} for 
\texorpdfstring{$d=2$}{d=2}
}
\label{sec:cl-martingale-d-2}

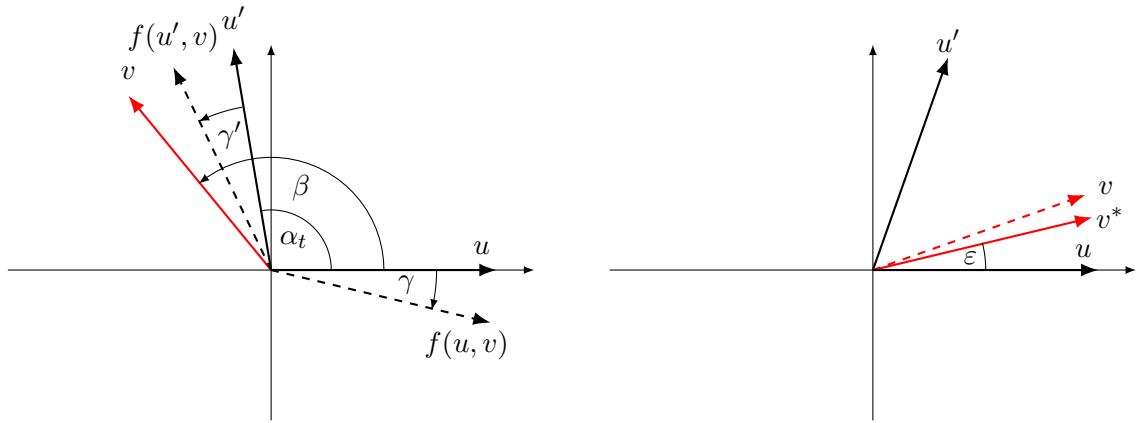
\begin{figure}[!ht]\centering\begin{tikzpicture}
    \draw [-latex] (0, -2) -- (0, 3);
    \draw [-latex] (-3.5, 0) -- (3.5, 0);
    \draw [-latex] (8, -2) -- (8, 3);
    \draw [-latex] (4.5, 0) -- (11.5, 0);

    \draw [-Latex, thick, color=red]
    (0, 0) -- (-1.9, 2.32);
    \draw [-Latex, thick, dashed]
    (0, 0) -- (-1.3, 2.70);
    \draw [-Latex, thick] (0, 0) -- (-0.5, 2.96);
    \draw [-Latex, thick] (0, 0) -- (3,0);
    \draw [-Latex, thick, dashed]
    (0, 0) -- (2.92, -0.7);
    
    \node at (-1.9, 2.62) {$v$};
    \node at (-1.3, 3.10) {$f(u',v)$};
    \node at (-0.5, 3.36) {$u'$};
    \node at (2.8, 0.3) {$u$};
    \node at (2.6, -1) {$f(u,v)$};
    
    \draw [domain=0:130, -latex] plot
    ({1.5*cos(\x)},{1.5*sin(\x)});
    \draw [domain=0:100] plot ({0.8*cos(\x)},{0.8*sin(\x)});
    \draw [domain=100:116, -latex] plot ({2.2*cos(\x)},{2.2*sin(\x)});
    \draw [domain=0:-14, -latex] plot ({2.2*cos(\x)},{2.2*sin(\x)});
    
    \node at (0.3, 0.4) {$\alpha_t$};
    \node at (0.4, 1.1) {$\beta$};
    \node at (-0.55, 1.8) {$\gamma'$};
    \node at (1.8, -0.2) {$\gamma$};
    
    \draw [-Latex, thick] (8, 0) -- (9, 2.83);
    \draw [-Latex, thick, dashed, color=red]
    (8, 0) -- (10.83, 1);
    \draw [-Latex, thick, color=red]
    (8, 0) -- (10.92, 0.7);
    \draw [-Latex, thick] (8, 0) -- (11,0);
    
    \node at (9, 3.05) {$u'$};
    \node at (11.1, 1.1) {$v$};
    \node at (11.15, 0.7) {$v^*$};
    \node at (10.8, 0.25) {$u$};
    
    \draw [domain=0:14] plot
    ({8.0+1.5*cos(\x)},{1.5*sin(\x)});
    
    \node at (9.3, 0.16) {$\eps$};
  \end{tikzpicture}
  \caption{On the left an illustration of the vectors
  and angles in the proof of Claim~\ref{cl:martingale}.
  On the right an illustration for the proof of
  Claim~\ref{cl:variance}.}
  \label{fig:random-intervention}
\end{figure}

    It follows from~\eqref{eq:09} that we can assume
    wlog that $u=(1,0)$ and $u'=(\cos\alpha_t,\sin\alpha_t)$ (recall
    that by definition $0 \leq \alpha_t\le\pi$ holds). 
    Let us write the random intervention vector as $v=(\cos\beta,\sin\beta)$, where    
    the distribution of $\beta$ is uniform
    in $[0,2\pi)$. We will also write
    (cf.~Figure~\ref{fig:random-intervention}
    for an overview)
    \begin{align*}
        f(u,v)=(\cos\gamma,\sin\gamma)\;,\qquad\qquad
        f(u',v)=(\cos(\alpha_t+\gamma'),
        \sin(\alpha_t+\gamma'))\;,\qquad\qquad
        \gamma,\gamma'\in[-\pi,\pi)\;.
    \end{align*}
    
    Note that $\gamma$ is a function of the intervention
    angle $\beta$, and it should be clear that
    $\gamma(\beta)=-\gamma(-\beta)$. Accordingly,
    the distribution of $\gamma$ is symmetric around zero
    and in particular $\EE\gamma=0$ (where the expectation
    is over $\beta$). Applying~\eqref{eq:09}, it also follows
    $\EE\gamma'=0$. 

    Let $\alphawhat:=\alpha_t+\gamma'-\gamma$.
    Since $\alphawhat$ is equal to 
    the directed angle from $f(u,v)$ to $f(u',v)$, 
    one might think that
    we have just established $\EE[\alpha_{t+1}\mid\alpha_t]=\alpha_t$.
    However, recall that we defined $\alpha_{t+1}$ 
    to be the value of the
    (primary) \emph{undirected} angle between $f(u,v)$ and $f(u',v)$.
    In particular, it holds that $0\le\alpha_{t+1}\le\pi$, but we cannot assume that about $\alphawhat$.
    On the other hand, it is clear that if 
    $0\le\alphawhat\le\pi$, then indeed 
    $\alpha_{t+1}=\alphawhat$.
    Therefore, in the following we will show
    that $0\le\alphawhat\le\pi$ always holds,
    which implies
    $\EE[\alpha_{t+1}\mid\alpha_t]=
    \EE[\alphawhat\mid\alpha_t]=\alpha_t+\EE\gamma'-\EE\gamma
    =\alpha_t$.
    
    To that end, we start with showing a weaker bound
    $-\pi<\alphawhat<2\pi$. To see this, we first establish that
    $-\pi/2<\gamma,\gamma'<\pi/2$. The argument for $\gamma$
    is as follows: if $\langle u,v\rangle\ge 0$, then
    $f(u,v)$ is a convex combination of $u$ and $v$. Therefore,
    an intervention cannot move $f(u,v)$ away from $u$ by
    an angle of more than $\pi/2$. If $v\ne u$, then
    also $f(u,v)\ne v$, so in fact the angle must be strictly
    less, that is $-\pi/2<\gamma<\pi/2$. If $\langle u,v\rangle<0$,
    then $-\pi/2<\gamma<\pi/2$ follows from the same argument
    applied to $-v$ (since the effect of both interventions
    is the same). Finally, $-\pi/2<\gamma'<\pi/2$ holds by~\eqref{eq:09} and the same proof.
    Since we know $0\le\alpha_t\le\pi$, we obtain
    $-\pi<\alphawhat<2\pi$.

    Since we know $-\pi<\alphawhat<2\pi$, the inequality
    $0\le\alphawhat\le\pi$ is equivalent to
    $\sin\alphawhat\ge 0$.
    Geometrically, this property means 
    that the ordered pair of vectors $(u,v)$
    has the same orientation as the pair $(f(u,v),f(u',v))$.
    To avoid case analysis, we prove this claim by a calculation:

\begin{claim}\label{cl:sin-calculation}
$\sin\alphawhat\ge 0$.
\end{claim}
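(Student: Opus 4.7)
The plan is to prove $\sin\widehat{\alpha}\ge 0$ by a direct computation of the signed area (2D cross product) spanned by $f(u,v)$ and $f(u',v)$. Since these are unit vectors whose oriented angle equals $\widehat{\alpha}$, we have
\[
\sin\widehat{\alpha} \;=\; f(u,v)_x\, f(u',v)_y \;-\; f(u,v)_y\, f(u',v)_x \,.
\]
Writing $f(u,v)=w/\|w\|$ and $f(u',v)=w'/\|w'\|$ with $w=u+\eta\langle u,v\rangle v$ and $w'=u'+\eta\langle u',v\rangle v$, the denominator $\|w\|\|w'\|$ is strictly positive, so the sign of $\sin\widehat{\alpha}$ equals the sign of the 2D cross product $w\times w'$. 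I would expand this cross product bilinearly and observe that the only quadratic-in-$v$ term is proportional to $v\times v=0$, so only three terms survive.

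Next, I would evaluate each surviving term using the chosen coordinates $u=(1,0)$, $u'=(\cos\alpha_t,\sin\alpha_t)$, $v=(\cos\beta,\sin\beta)$. The three pieces are $u\times u'=\sin\alpha_t$, together with $\eta\langle u,v\rangle\,(v\times u')$ and $\eta\langle u',v\rangle\,(u\times v)$. Substituting $\langle u,v\rangle=\cos\beta$, $\langle u',v\rangle=\cos(\alpha_t-\beta)$, $v\times u'=\sin(\alpha_t-\beta)$, and $u\times v=\sin\beta$, the key step is recognizing that the two $\eta$-terms collapse via the sine addition formula:
\[
\cos\beta\sin(\alpha_t-\beta)+\cos(\alpha_t-\beta)\sin\beta \;=\; \sin\alpha_t \,.
\]
This yields $w\times w' = (1+\eta)\sin\alpha_t$, and therefore
\[
\sin\widehat{\alpha} \;=\; \frac{(1+\eta)\sin\alpha_t}{\|w\|\,\|w'\|} \,.
\]
Since $0\le\alpha_t\le\pi$ gives $\sin\alpha_t\ge 0$, and $\eta>0$, the conclusion $\sin\widehat{\alpha}\ge 0$ follows.

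I do not expect any serious obstacle: the whole argument is a short coordinate computation that avoids the case analysis (over the sign of $\langle u,v\rangle$, the direction of $v$, whether $\alpha_t$ is acute or obtuse, etc.) that a purely geometric orientation argument would require. The only mildly delicate point is making sure the bilinear expansion in $v$ eliminates the $v\times v$ contribution and that the surviving cross terms combine via $\sin(a-b+b)=\sin a$; both are immediate once written out. Everything else — the positivity of the normalizing factors and the sign of $\sin\alpha_t$ — is automatic from the hypotheses already established.
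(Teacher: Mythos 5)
Your proof is correct and is essentially the paper's own argument: the paper also reduces the claim to showing that the cross product of $f(u,v)$ and $f(u',v)$ is a positive multiple of $u\times u'$, expands bilinearly, and observes that the $\eta$-cross-terms collapse to $\eta\,(u\times u')$, yielding $\sin\alphawhat=\frac{1+\eta}{\|w\|\|w'\|}\sin\alpha_t\ge 0$. The only difference is cosmetic — you carry out the cancellation in explicit coordinates via the sine addition formula, while the paper does it coordinate-free using a cross-product identity and the parallelism of the two projections onto $v^{\perp}$.
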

\begin{proof}
We defer the proof to Appendix~\ref{app:sin-proof}.
\end{proof}

\subsection{Proof of Claim~\ref{cl:martingale}
for \texorpdfstring{$d\ge 3$}{d>=3}
}
\label{sec:cl-martingale-d-3}

In this case we will write the random intervention
vector as $v=v^{\parallel}+v^{\perp}$
where $v^{\parallel}$ is projection of $v$
onto the span of $u$ and $u'$. In particular,
$v^{\parallel}$ and $v^{\perp}$ are orthogonal.
We will now prove a stronger claim
$\EE\big[\alpha_{t+1}\mid\alpha_t,\|v^{\parallel}\|\big]=\alpha_t$.

Accordingly, condition on the value
$\|v^{\parallel}\|=R$.
Observe that, by symmetry, 
vector $v^{\parallel}$ is distributed uniformly in the
two-dimensional space $\Span\{u,u'\}$ among vectors of
norm $R$. In other words, we can write
$v^{\parallel} = RV$, where $V$ is a uniform
two-dimensional unit length vector.

Denote the non-normalized
vectors after intervention as
\begin{align*}
    \uhat:=u+\eta\langle u,v^{\parallel}\rangle(v^{\parallel}+
    v^{\perp})\;,\qquad\qquad
    \uhat':=u'+\eta\langle u',v^{\parallel}
    \rangle(v^{\parallel}+v^{\perp})
    \;.
\end{align*}

Let $c:=2\eta+\eta^2$. We proceed with calculations:
\begin{align*}
    \langle\uhat,\uhat'\rangle
    &=\langle u,u'\rangle+c\langle u,v^{\parallel}\rangle\langle u',v^{\parallel}\rangle
    =\langle u,u'\rangle+cR^2\langle u,V\rangle
    \langle u',V\rangle\;,\\
    \|\uhat\|^2
    &=1+c\langle u,v^{\parallel}\rangle^2=1+cR^2\langle u,V\rangle\;,\\
    \|\uhat'\|^2&=1+c\langle u',v^{\parallel}\rangle^2
    =1+cR^2\langle u',V\rangle\;.
\end{align*}
Note that all these formulas are valid also for
$d=2$, with the only difference that
$R=1$ holds deterministically in that case.

Since $c(\eta)=2\eta+\eta^2$ is a bijection on $\mathbb{R}_{>0}$,
there exists $\widehat{\eta}>0$ such that $cR^2=2\widehat{\eta}
+\widehat{\eta}^2$.
Accordingly, for any $d\ge 3$ and $\eta>0$,
the joint distribution of
$\langle\uhat,\uhat'\rangle$, $\|\uhat\|$ and $\|\uhat'\|$
conditioned on $\alpha_t=\arccos(\langle u,u'\rangle)$
and $\|v^{\parallel}\|=R$ is the same as their joint distribution
for $d=2$ and $\widehat{\eta}$, conditioned on the same
value of $\alpha_t$.

Since $\alpha_{t+1}=\arccos\left(
\frac{\langle\uhat,\uhat'\rangle}{\|\uhat\|\|\uhat'\|}\right)$,
the same correspondence holds for the distribution
of $\alpha_{t+1}$ conditioned on $\alpha_t$ and
$\|v^{\parallel}\|=R$. Therefore, 
$\EE\big[\alpha_{t+1}\mid \alpha_t,\|v^{\parallel}\|=R\big]=\alpha_t$
follows by Claim~\ref{cl:martingale} for $d=2$, which we
already proved.
\hfill\qedsymbol

\subsection{Proof of Claim~\ref{cl:variance}}
\label{sec:cl-variance-proof}
  Again we use~\eqref{eq:09} to choose a coordinate
  system and assume wlog that $u=(1,0,\ldots,0)$ and 
  $u'=(\cos\alpha_t,\sin\alpha_t,0,\ldots,0)$.
  Our objective is to show that, with probability at least
  $\delta$, we will have $\alpha_{t+1}-\alpha_t>\delta$
  (in case $\alpha_t\le\pi/2)$ or $\alpha_{t+1}-\alpha_t<-\delta$ 
  (in case $\alpha_t\ge\pi/2)$. To start with,
  we will show that by symmetry we need to consider only the first case $\alpha_t\le\pi/2$.

  Note that the intervention function $f$
  exhibits a symmetry
  $f(-u, v) = -f(u, v)$. Furthermore, we also have
  $\arccos\langle u,u'\rangle 
  = \pi-\arccos\langle u,-u'\rangle$. Consequently,
  \begin{align*}
      \alpha_{t+1}-\alpha_t
      &=\arccos\langle f(u,v),f(u',v)\rangle
      -\arccos\langle u,u'\rangle\\
      &=\pi-\arccos\langle f(u,v),f(-u',v)\rangle
      -(\pi-\arccos\langle u,-u'\rangle)
      \\
      &=-\big(\arccos\langle f(u,v),f(-u',v)\rangle
      -\arccos\langle u,-u'\rangle\big)\;.
  \end{align*}
  As a result, indeed it is enough that we prove~\eqref{eq:12}
  and then~\eqref{eq:13} follows by replacing
  $u'$ with $-u'$.
  
  Consider vector
  $v^*:=(\cos\eps,\sin\eps,0,\ldots,0)$
  (see Figure~\ref{fig:random-intervention}). We will now show that
  if $\eps\le\alpha_t\le\pi/2$ and the intervention $v$ is sufficiently close to $v^*$,
  then $v$ decreases the angle between $u$ and $u'$.
  To that end, let us use a metric on $S^{d-1}$ given by
  \begin{align*}
      D(u, v) := \arccos\langle u,v\rangle\;.
  \end{align*}
  Note that this metric is strongly equivalent to the standard Euclidean metric restricted to $S^{d-1}$. We can now use the triangle inequality to
  write
  \begin{align}\label{eq:10}
      \alpha_{t+1}&=D(f(u,v),f(u',v))\nonumber\\
      &\le D(f(u,v),f(u,v^*))+D(f(u,v^*),v^*)
      + D(v^*,f(u',v^*))+D(f(u',v^*),f(u',v))\;.
  \end{align}
  Let us bound the terms in~\eqref{eq:10} one by one.

  First, since, by~\eqref{eq:main}, $f(u,v^*)$ is a strict
  convex combination of $u$ and $v^*$
  (note that in our coordinate system 
  neither $u$ nor $v^*$ depends on $\alpha_t$), we have
  \begin{align*}
    D(f(u,v^*),v^*) = d(\eps) < D(u,v^*)=\eps\;.
  \end{align*}
  Similarly, 
  \begin{align*}
      D(v^*,f(u',v^*))\le D(v^*,u')=\alpha_t-\eps\;.
  \end{align*}
  Second, since $f$ is continuous,
  $D(v,v^*)<\delta'$ for small enough $\delta'>0$
  implies that both $D(f(u,v),f(u,v^*))$
  and $D(f(u',v^*),f(u',v))$
  are as small as needed
  (for example, less than $(\eps-d(\eps))/4$).
  
  All in all, we have that for some $\delta'=\delta'(\eps)>0$,
  \begin{align*}
      D(v,v^*)<\delta'\implies
      \alpha_{t+1}&<\frac{\eps-d(\eps)}{4}+ d(\eps)+(\alpha_t-\eps)+\frac{\eps-d(\eps)}{4}\\
      &=\alpha_t-\frac{\eps-d(\eps)}{2} \; .
  \end{align*}
  However, clearly, the event $D(v,v^*)<\delta'$
  has some positive probability $\delta''$.
  Therefore, taking 
  $\delta:=\min(\delta''/2,(\eps-d(\eps))/2)$,
  we have
  \begin{align*}
      \Pr\left[\alpha_{t+1}<\alpha_t-\delta \mid\alpha_t\right]>\delta\;,
  \end{align*}
  as claimed in~\eqref{eq:12}.
  \hfill\qedsymbol

\section{Asymptotic scenario: finding densest hemisphere}
\label{sec:densest-hemisphere-intro}

In this section
we study the asymptotic scenario
with one influencer who wishes to propagate
a campaign agenda $v^*\in\mathbb{R}^d$. We assume that the influencer
can use an unlimited number of interventions and its objective is to
make the opinions of as many agents as possible to converge to $v^*$.
More specifically, in this section we denote the initial opinions of
agents at time $t=1$ by $u_1,\ldots,u_n$.
Given these preexisting opinions of $n$ agents,
we want to find a sequence of interventions, $v^{(1)},v^{(2)},v^{(3)}\ldots$ that
maximizes the number of agents whose opinions converge to $v^*$.

The thrust of our results is that finding a good strategy
for the influencer is computationally hard. However,
both the optimal strategy and some natural heuristics
result in the polarization of agents.

\subsection{Equivalence of optimal strategy to finding densest hemisphere}

We first argue that the problem of finding an optimal
strategy is equivalent to identifying
an open hemisphere that contains the maximum number of agents.
An \emph{(open) hemisphere} is an intersection of the unit
sphere with a \emph{homogeneous open halfspace}
of the form 
$\left\{x\in\mathbb{R}^d:\langle x,a\rangle>0\right\}$
for some $a\in\mathbb{R}^d$.

\begin{theorem}\label{thm:maximum-halfspace}
  For any $v^*$, there exists a strategy to make at least $k$ agents converge to $v^*$
  if and only if there exists an open hemisphere containing at least $k$ of the
  opinions $u_1,\ldots,u_n$.
\end{theorem}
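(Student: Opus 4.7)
My plan is to reformulate via convex hulls and then establish a bidirectional invariance under interventions. Recall that a finite subset $\{u_1,\ldots,u_k\}\subset S^{d-1}$ lies in a common open hemisphere if and only if $0\notin\mathrm{conv}\{u_1,\ldots,u_k\}$ (by strict separation of a closed convex set from the origin). The theorem thus reduces to showing that a strategy can drive $k$ opinions to $v^*$ if and only if $k$ of the initial opinions satisfy $0\notin\mathrm{conv}$.

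For the easy (``if'') direction, I construct a three-stage strategy given a hemisphere $\{x:\langle x,a\rangle>0\}$ containing selected opinions $u_{i_1},\ldots,u_{i_k}$. In Stage 1, I apply $v^{(t)}=a$ repeatedly: under this update the inner product $c=\langle u,a\rangle$ evolves via $g(c)=(1+\eta)c/\sqrt{1+(2\eta+\eta^2)c^2}$ (derived from \eqref{eq:main}, \eqref{eq:main3}), and $g(c)>c$ for $c\in(0,1)$ with $g(1)=1$, so each selected opinion's inner product with $a$ tends monotonically to $1$, clustering them arbitrarily close to $a$. In Stage 2, I transport the cluster along a sequence $a=a_0,a_1,\ldots,a_m=v^*$ on the sphere (routing through an auxiliary point if $v^*=-a$, which is possible since $d\ge 2$), choosing each $a_{j+1}$ close enough to the prior cluster center that all selected opinions retain positive inner product with $a_{j+1}$; repeated application of $a_{j+1}$ then pulls the cluster to $a_{j+1}$. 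In Stage 3, once the cluster lies in the open hemisphere around $v^*$, I apply $v^*$ forever, whence the same monotonicity $g(c)\uparrow 1$ drives each selected opinion to $v^*$.

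For the hard (``only if'') direction, I prove the key invariance: for any opinions $\{u_i\}_{i\in I}$ and any intervention $v$, one has $0\in\mathrm{conv}\{u_i\}$ if and only if $0\in\mathrm{conv}\{f(u_i,v)\}$, where $f(u_i,v)=w_i/\|w_i\|$ with $w_i=u_i+\eta\langle u_i,v\rangle v$. Forward: if $\sum_i\lambda_iu_i=0$ nontrivially with $\lambda_i\ge 0$, set $\mu_i=\lambda_i\|w_i\|$; then
\[
\sum_i\mu_if(u_i,v)=\sum_i\lambda_iw_i=\sum_i\lambda_iu_i+\eta v\Big\langle\sum_i\lambda_iu_i,v\Big\rangle=0.
\]
Reverse: if $\sum_i\lambda_if(u_i,v)=0$ nontrivially, rearranging gives $\sum_i(\lambda_i/\|w_i\|)u_i=-Cv$ with $C=\eta\sum_i(\lambda_i/\|w_i\|)\langle u_i,v\rangle$; pairing both sides with $v$ yields $C/\eta=-C$, forcing $C=0$, so $\sum_i(\lambda_i/\|w_i\|)u_i=0$ is the desired nontrivial nonnegative combination. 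Iterating this equivalence backward through the entire strategy: if $u_{i_1},\ldots,u_{i_k}$ converge to $v^*$, then for large $T$ they all satisfy $\langle u_{i_j}^{(T)},v^*\rangle>0$, hence $0\notin\mathrm{conv}\{u_{i_j}^{(T)}\}$; the invariance propagates this back to $0\notin\mathrm{conv}\{u_{i_j}^{(1)}\}$, i.e., a common open hemisphere at time $1$.

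The main obstacle is the reverse half of the invariance, where one must show $C=0$; this identity relies on the precise quadratic form of the update $u\mapsto u+\eta\langle u,v\rangle v$ and on $\|v\|=1$, and would fail for generic update rules. Everything else—separation, the monotone inner-product dynamics of $g$, and the transport argument—is standard, so this algebraic cancellation is the heart of the proof.
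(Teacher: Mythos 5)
Your proof is correct, and while the easy (``if'') direction is essentially the paper's Claim~\ref{claim:lower-bd} (cluster in the hemisphere, transport, finish at $v^*$ --- your explicit iteration $g(c)=(1+\eta)c/\sqrt{1+(2\eta+\eta^2)c^2}$ and the detour for $v^*=-a$ just make the paper's sketch more careful), the hard direction takes a genuinely different and arguably cleaner route. The paper combines three ingredients: conical combinations are preserved forward by interventions (Claim~\ref{claim:convex}), antipodal pairs stay antipodal (Claim~\ref{claim:antipodal}), and then a contradiction obtained by writing $-u_M$ as a conical combination of $u_1,\ldots,u_{M-1}$ and watching it track $v^*$ while $u_M$ does the same. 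You instead prove directly that the predicate $0\in\mathrm{conv}\{u_i\}$ is invariant under a single intervention and propagate its negation backward from a late time at which all $k$ opinions lie in the hemisphere $\{x:\langle x,v^*\rangle>0\}$; this dispenses with the antipodal lemma and the auxiliary agent entirely. One remark: you single out the reverse half of your invariance (the cancellation $C=0$) as ``the heart of the proof,'' but your argument only ever uses the forward half --- to conclude $0\notin\mathrm{conv}\{u_{i_j}^{(1)}\}$ from $0\notin\mathrm{conv}\{u_{i_j}^{(T)}\}$ you need ``$0\in\mathrm{conv}$ at time $t$ implies $0\in\mathrm{conv}$ at time $t+1$,'' contraposed, which is exactly the one-line computation $\sum_i\mu_i f(u_i,v)=\sum_i\lambda_i u_i+\eta\langle\sum_i\lambda_i u_i,v\rangle v=0$. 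The reverse half is true as you prove it (and it is a nice structural fact that the update never \emph{creates} a hemisphere where none existed), but it is not needed for the theorem, so the part you flag as the main obstacle is actually optional.
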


A surprising aspect of Theorem~\ref{thm:maximum-halfspace} is that
the maximum number of agents that can be persuaded does not depend
on the target vector $v^*$. As we argue in Remark~\ref{rem:moving-around},
this is somewhat plausible in the long-term setting with unlimited
number of interventions. We also note that the number of interventions required to bring the opinions up to a given level of closeness to $v^*$ \emph{does} depend
on $v^*$.


\begin{proof}[Proof of Theorem~\ref{thm:maximum-halfspace}]
First, we prove that the hemisphere condition is sufficient for the existence of a strategy to make the agents' opinions converge (Claim~\ref{claim:lower-bd}). Then we prove the trickier direction: that the hemisphere condition
is also \emph{necessary} for the existence of such a strategy (Claim~\ref{claim:upper-bd}).

\begin{claim}
  \label{claim:lower-bd}
If opinions $u_1,\ldots,u_k$ are contained in an open 
hemisphere,  then there is a sequence of interventions
making all of $u_1,\ldots,u_k$ converge to $v^*$.
\end{claim}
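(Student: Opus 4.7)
The plan is a two-stage scheme: first drive all $k$ opinions to a common ``pivot'' point using the hemisphere assumption, then steer the clustered opinions to $v^*$ along a discretized path on $S^{d-1}$. The key single-agent sublemma is: if $\langle u,v\rangle>0$, then iterating the intervention $v$ sends $u^{(t)}\to v$ at a geometric rate. Since the update $u\mapsto(u+\eta\langle u,v\rangle v)/\|\cdot\|$ preserves $\mathrm{span}(u,v)$, this reduces to a two-dimensional computation: writing $v=(1,0)$ and $u=(\cos\theta,\sin\theta)$ with $\theta\in(-\pi/2,\pi/2)$, one obtains $\tan\theta^{(t+1)}=\tan\theta^{(t)}/(1+\eta)$, so $\theta^{(t)}\to 0$.

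For the concentration stage, let $a$ be a normal of the open hemisphere containing all initial opinions and set $p:=a/\|a\|$, so that $\langle u_i,p\rangle>0$ for every $i$. Applying intervention $p$ repeatedly, the sublemma (invoked for each agent, with the finiteness of $k$ allowing a uniform worst case) ensures that after finitely many rounds all $u_i$ lie within angular distance, say, $\pi/8$ of $p$. For the rotation stage, fix a continuous path $\gamma:[0,1]\to S^{d-1}$ with $\gamma(0)=p$ and $\gamma(1)=v^*$ (a great-circle arc if $v^*\neq -p$; a half-meridian through any vector orthogonal to $p$ otherwise), and choose a partition $0=s_0<\cdots<s_M=1$ so that the waypoints $w_j:=\gamma(s_j)$ satisfy $\langle w_j,w_{j+1}\rangle>\cos(\pi/4)$. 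Inductively, once all opinions lie within $\pi/8$ of $w_j$, they lie within $3\pi/8<\pi/2$ of $w_{j+1}$, so iterating intervention $w_{j+1}$ sufficiently often pulls them all within $\pi/8$ of $w_{j+1}$ by the sublemma. After $M$ such phases, all opinions lie in the open hemisphere of $v^*$, and applying intervention $v^*$ indefinitely thereafter drives each $u_i^{(t)}\to v^*$.

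The delicate point is ensuring that no phase of the rotation pushes an opinion across the ``equator'' of the next waypoint, because $\langle u_i,w_{j+1}\rangle\le 0$ would cause the intervention to move $u_i$ away from $w_{j+1}$ rather than toward it. The $\pi/8$-versus-$\pi/4$ margin is chosen precisely to preserve the invariant $\langle u_i^{(t)},w_{j+1}\rangle>0$ throughout phase $j+1$; the finiteness of $k$ ensures the uniform worst-case bound is attained. The one genuinely special case is $v^*=-p$, for which a single great-circle arc is unavailable; the half-meridian substitute handles it with no change to the discretization argument.
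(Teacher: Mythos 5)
Your proof is correct and follows essentially the same two-stage strategy as the paper: first concentrate all opinions near the pole of the containing hemisphere by repeatedly applying it as the intervention, then shepherd the resulting cluster to $v^*$ along a short chain of waypoints (the paper uses at most one intermediate point, proportional to $(v^*+a)/2$). Your version is in fact slightly more careful than the paper's sketch, since you explicitly handle the antipodal case $v^*=-p$, where the paper's intermediate vector $(v^*+a)/2$ degenerates to zero, and your $\tan\theta^{(t+1)}=\tan\theta^{(t)}/(1+\eta)$ computation makes the geometric convergence rate explicit.
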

\begin{proof}
  By definition of open hemisphere, 
  there is a vector $a \in \R^d$
  such that $\langle u_i,a \rangle > 0$ for every agent $i = 1,\ldots,k$.
  By~\eqref{eq:main}, it is clear that repeated application 
  of $a$ makes all the points converge to $a$ 
  as time $t\to\infty$ .

  After all the points are clustered close enough to $a$, by a similar argument
  they can be ``moved around'' together towards another arbitrary point $v^*$.
  For example, if $\langle v^*, a\rangle>0$, the intervention $v^*$ can be applied
  repeatedly. If $\langle v^*, a\rangle\le 0$, one can proceed in two stages,
  first applying an intervention proportional to $(v^*+a)/2$, and then applying $v^*$.
\end{proof}

\begin{remark}\label{rem:moving-around}
As a possible interpretation of the mechanism in 
Claim~\ref{claim:lower-bd}, it is not unheard of in campaigns
on political issues to use an analogous strategy. First,
build a consensus around a 
(presumably compromise) opinion. Then,
``nudge'' it little by little towards another direction.

In an extreme case one can imagine this mechanism even
flipping the opinions of two polarized clusters. One example
of this could be the reversal of the opinions on certain issues of 20th century
Republican and Democratic parties in the US (this particular phenomenon can be found in many texts, e.g.~\cite{Kuziemko18}).
\end{remark}
\smallskip

To prove the other direction of Theorem \ref{thm:maximum-halfspace}, we will rely on the notions of conical combination and convex cone.
A \emph{conical combination} of points $u_1,\ldots,u_n\in\R^d$ is any point
of the form $\sum_{i=1}^n \alpha_iu_i$ where $\alpha_i\ge 0$ for every $i$.
A \emph{convex cone} is a subset of $\R^d$ that is closed under
finite conical combinations of its elements. Given a finite set of points
$S\subseteq\R^d$, the convex cone \emph{generated} by $S$ is the smallest convex cone
that contains $S$.

\begin{claim}
\label{claim:convex}
Suppose that for a given sequence of interventions, the opinions
$u_1, \ldots, u_n$ converge to the same point $v^*$.
Then, for any unit vector $u_{n+1}$ that lies in the convex cone
of $u_1,\ldots,u_n$, we have that $u_{n+1}$ also converges to $v^*$.
\end{claim}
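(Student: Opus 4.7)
The plan is to lift the opinion dynamics to an \emph{unnormalized} linear trajectory, use linearity together with the conical-combination hypothesis, and then normalize to conclude. The key observation is that the pre-normalization update $x \mapsto x + \eta \langle x, v \rangle v = (I + \eta v v^{\top})x$ is linear in $x$. Accordingly, for each $i \in \{1,\ldots,n+1\}$ I would define an unnormalized trajectory by $y_i^{(1)} := u_i$ and $y_i^{(t+1)} := y_i^{(t)} + \eta \langle y_i^{(t)}, v^{(t)} \rangle v^{(t)}$. A one-line induction on $t$, using~\eqref{eq:main} and the positive homogeneity of the update, shows that $u_i^{(t)} = y_i^{(t)}/\|y_i^{(t)}\|$ for every $t$; that is, the normalized dynamics is simply the direction of the linear dynamics.

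Since the \emph{same} linear operator $A^{(t)} := I + \eta v^{(t)}(v^{(t)})^{\top}$ acts on every agent at time $t$, and since by hypothesis $u_{n+1} = \sum_{i=1}^{n} \alpha_i u_i$ with $\alpha_i \ge 0$, linearity propagates the combination to all times:
\[
y_{n+1}^{(t)} \;=\; \sum_{i=1}^{n} \alpha_i\, y_i^{(t)} \;=\; \sum_{i=1}^{n} \alpha_i \|y_i^{(t)}\|\, u_i^{(t)}.
\]
Writing $c_i^{(t)} := \alpha_i \|y_i^{(t)}\|$ and $C^{(t)} := \sum_{i=1}^n c_i^{(t)}$, the vector $y_{n+1}^{(t)}/C^{(t)}$ is then a convex combination of the $u_i^{(t)}$'s. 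Because $u_{n+1}$ is a unit vector some $\alpha_i$ is strictly positive, and~\eqref{eq:main3} gives $\|y_i^{(t)}\| \ge 1$, so $C^{(t)} > 0$ throughout and this convex combination is well-defined.

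The triangle inequality together with the hypothesis $u_i^{(t)} \to v^*$ then yields
\[
\left\| \frac{y_{n+1}^{(t)}}{C^{(t)}} - v^* \right\| \;\le\; \max_{1 \le i \le n} \|u_i^{(t)} - v^*\| \;\xrightarrow[t\to\infty]{}\; 0,
\]
and in particular $\|y_{n+1}^{(t)}\|/C^{(t)} \to \|v^*\| = 1$. Dividing gives $u_{n+1}^{(t)} = y_{n+1}^{(t)}/\|y_{n+1}^{(t)}\| \to v^*$, as required. The argument hinges entirely on the linearity observation; the only thing that could go wrong --- vanishing of the denominator $C^{(t)}$ --- is ruled out instantly by~\eqref{eq:main3}, so I would not expect any genuine obstacle.
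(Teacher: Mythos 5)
Your proof is correct, and its engine is the same as the paper's: the pre-normalization update $x \mapsto (I + \eta v v^{\top})x$ is linear, so conical combinations are preserved. The paper packages this as a one-step induction --- it shows that if $u_{n+1}^{(t)}$ is a conical combination of $u_1^{(t)},\ldots,u_n^{(t)}$ then $u_{n+1}^{(t+1)}$ is a conical combination of $u_1^{(t+1)},\ldots,u_n^{(t+1)}$ (with explicit nonnegative coefficients $\lambda_i c_i$), and then declares the claim to follow by induction. Strictly speaking, the induction only yields cone membership at every time $t$; the remaining step --- that a \emph{unit} vector lying in the convex cone of unit vectors all converging to $v^*$ must itself converge to $v^*$ --- is left implicit. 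Your global lift to the unnormalized trajectories $y_i^{(t)}$ carries that last step out explicitly: since $y_{n+1}^{(t)} = \sum_i \alpha_i \|y_i^{(t)}\| u_i^{(t)}$ with $C^{(t)} = \sum_i \alpha_i\|y_i^{(t)}\| > 0$, the rescaled vector $y_{n+1}^{(t)}/C^{(t)}$ is an honest convex combination of the $u_i^{(t)}$, the triangle inequality gives convergence to $v^*$, and dividing by the norm (which tends to $1$) finishes. So the two proofs share their key lemma, but yours is the more complete write-up of the tail of the argument. One small nit: you invoke~\eqref{eq:main3} for $\|y_i^{(t)}\| \ge 1$, but that equation concerns the normalized update $w_i^{(t+1)}$; for the unnormalized trajectory you should instead note that $I + \eta v v^{\top}$ has all eigenvalues at least $1$ (or redo the computation of~\eqref{eq:main3} for $y$), which gives $\|y_i^{(t+1)}\| \ge \|y_i^{(t)}\| \ge \cdots \ge \|u_i\| = 1$. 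This is immediate, so it does not affect correctness.
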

\begin{proof}
  It suffices to prove that if at time $t$ an opinion
  $u_{n+1}^{(t)}$ lies in the convex cone of
  other opinions
  $u_1^{(t)},\ldots,u_n^{(t)}$, then after applying one intervention 
  $v^{(t)}$ the new
  opinion $u_{n+1}^{(t+1)}$ lies in the convex cone of $u_1^{(t+1)},\ldots,u_n^{(t+1)}$.
  Then the claim follows by induction.

  To prove this, we can simply write out $u_{n+1}^{(t+1)}$, using the relation
  $u_{n+1}^{(t)}=\sum_{i=1}^n \lambda_i u_i^{(t)}$ (where we use the notation $u \propto v$
  to mean that $u = c\cdot v$ for some constant $c>0$):
\begin{align}
  u_{n+1}^{(t+1)}
  &\propto  u_{n+1}^{(t)} + \eta \left\langle u_{n+1}^{(t)}, v^{(t)} \right\rangle\cdot v^{(t)}  \nonumber \\
  &= \sum\limits_{i=1}^{n} \lambda_i u_i^{(t)} + \eta \cdot \sum\limits_{i=1}^{n} \lambda_i \left\langle u_i^{(t)}, v^{(t)} \right\rangle\cdot v^{(t)}  \nonumber \\
                      &=\sum\limits_{i=1}^{n} \lambda_i 
                      \left( u_i^{(t)} + \eta \cdot \left\langle  u_i^{(t)}, v^{(t)}\right\rangle\cdot v^{(t)} \right) \nonumber \\
    &= \sum\limits_{i=1}^{n} \lambda_i \cdot c_i u_i^{(t+1)} \label{eq:convex-1}
\end{align}

where the constants in~\eqref{eq:convex-1} are
$c_i := \left\|u_i^{(t)} + \eta \cdot \langle  u_i^{(t)}, v^{(t)}\rangle \cdot v^{(t)} \right\|$. 
Specifically, they are all nonnegative.
\end{proof}

\begin{claim}
\label{claim:antipodal}
Suppose there are two opinions $u_1, u_2$ that are antipodal, i.e.,
$u_1 = - u_2$. Then these two opinions will remain antipodal in
future time steps. In particular, they will never converge to a single point.
\end{claim}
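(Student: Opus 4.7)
The plan is to verify directly that the update rule in \eqref{eq:main} commutes with negation of the opinion vector, and then conclude by induction. Concretely, I will first prove the one-step statement: if $u_1^{(t)} = -u_2^{(t)}$, then after applying any intervention $v^{(t)}$ we still have $u_1^{(t+1)} = -u_2^{(t+1)}$.

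For the one-step argument I would simply expand both sides. Writing $u := u_1^{(t)}$ so that $u_2^{(t)} = -u$, the pre-normalized vectors are
\begin{align*}
w_1^{(t+1)} &= u + \eta \langle u, v^{(t)} \rangle\, v^{(t)}, \\
w_2^{(t+1)} &= -u + \eta \langle -u, v^{(t)} \rangle\, v^{(t)} = -\bigl(u + \eta \langle u, v^{(t)} \rangle\, v^{(t)}\bigr) = -w_1^{(t+1)},
\end{align*}
using bilinearity of the inner product. In particular $\|w_2^{(t+1)}\| = \|w_1^{(t+1)}\|$, and both are nonzero by the observation following \eqref{eq:main3}, so the normalization step is well-defined and yields $u_2^{(t+1)} = -u_1^{(t+1)}$.

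Iterating this over all subsequent interventions gives by induction that $u_1^{(s)} = -u_2^{(s)}$ for every $s \ge t$. Consequently $\|u_1^{(s)} - u_2^{(s)}\| = 2$ for all such $s$, so the two sequences cannot share a limit point; in particular they cannot both converge to a common vector.

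There is no real obstacle here: the argument is a direct one-line calculation followed by induction, relying only on linearity of the map $u \mapsto u + \eta \langle u, v\rangle v$ in $u$ and the fact that the normalization in \eqref{eq:main} depends only on the norm, which is invariant under negation. The only subtlety worth flagging is ensuring $w_1^{(t+1)} \ne 0$, which is immediate from \eqref{eq:main3}.
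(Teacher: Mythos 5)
Your proof is correct and follows exactly the paper's argument: both observe that $u_1 + \eta\langle u_1,v\rangle v = -\bigl(u_2 + \eta\langle u_2,v\rangle v\bigr)$ by linearity, so negation commutes with the update (including the normalization), and the claim follows by induction. You have merely spelled out the normalization and well-definedness details that the paper leaves implicit.
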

\begin{proof}
  This follows directly from~\eqref{eq:main}, noting that, for any intervention
  $v$, we have 
  $u_1 + \eta \cdot \left\langle  u_1, v\right\rangle \cdot v =\allowbreak -\left(u_2 + \eta \cdot \left\langle  u_2, v\right\rangle  \cdot v \right)$.
  \hfill\qedsymbol
\end{proof}
We will also use the following consequence of the separating hyperplane theorem:
\begin{claim}
\label{fact:convex}
A collection of unit vectors $a_1, \ldots, a_n$ cannot be placed in an open hemisphere
if and only if the zero vector lies in the convex hull of
$a_1, \ldots, a_n$.
\end{claim}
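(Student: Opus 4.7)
The claim is a classical consequence of the (strict) separating hyperplane theorem, so the plan is to prove both implications directly, with the forward direction being essentially a one-line calculation and the reverse direction relying on separation. The statement ``cannot be placed in an open hemisphere'' unwinds to: there is no $a\in\R^d$ with $\langle a_i,a\rangle>0$ for all $i$. Call the convex hull $C:=\mathrm{conv}(a_1,\ldots,a_n)$; note $C$ is compact since it is the convex hull of finitely many points in $\R^d$.

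For the easy direction (hemisphere implies $0\notin C$), I would assume there exists $a$ with $\langle a_i,a\rangle>0$ for every $i$. Then for any convex combination $x=\sum_i\lambda_i a_i$ with $\lambda_i\ge 0$ and $\sum_i\lambda_i=1$, linearity of the inner product yields $\langle x,a\rangle=\sum_i\lambda_i\langle a_i,a\rangle>0$, so $x\ne 0$. Hence $0\notin C$.

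For the reverse direction ($0\notin C$ implies the hemisphere exists), I would apply the strict separating hyperplane theorem to the singleton $\{0\}$ and the compact convex set $C$: there exist $a\in\R^d$ and $b\in\R$ with $\langle a,0\rangle<b$ and $\langle a,x\rangle>b$ for every $x\in C$. In particular $b>0$, and specializing $x=a_i$ gives $\langle a_i,a\rangle>b>0$ for every $i$, so $a_1,\ldots,a_n$ all lie in the open hemisphere with normal $a$.

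There is no real obstacle here; the only point requiring a bit of care is to invoke \emph{strict} separation rather than the weak form, which is justified by compactness of $C$ (strict separation of a point from a closed convex set that does not contain it is standard, e.g., by taking $a$ to be the direction from $0$ to the nearest point of $C$ and $b$ to be half the squared distance). Everything else is just unwinding definitions.
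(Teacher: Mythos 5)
Your proof is correct, and it follows exactly the route the paper intends: the paper states this claim without proof, introducing it only as ``a consequence of the separating hyperplane theorem,'' which is precisely the strict-separation argument you carry out. Both directions are handled properly, including the justification for strict (rather than weak) separation via compactness of the convex hull.
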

Now we are ready to establish the reverse implication
in Theorem~\ref{thm:maximum-halfspace}.

\begin{claim}
\label{claim:upper-bd}
Suppose that we start with agent opinions $u_1, \ldots, u_n$ and that there is
no hemisphere that contains $M$ of those opinions.
Then, there is no strategy that makes $M$ of the opinions converge to the same
point.
\end{claim}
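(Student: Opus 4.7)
The plan is to argue by contrapositive: I would assume that some strategy makes $M$ of the opinions, say (after relabeling) $u_1,\ldots,u_M$, converge to a common limit $v^*$, and then show that these $M$ opinions must lie in some open hemisphere. The three key ingredients are already in place: Claim~\ref{claim:convex} (the convex cone of opinions is preserved under a shared sequence of interventions), Claim~\ref{claim:antipodal} (antipodal pairs remain antipodal forever), and Claim~\ref{fact:convex} (the hemisphere/convex-hull dichotomy from the separating hyperplane theorem).

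Assume towards contradiction that $u_1,\ldots,u_M$ are not contained in any open hemisphere. By Claim~\ref{fact:convex}, the origin lies in their convex hull, so I can write $0=\sum_{k=1}^M \lambda_k u_k$ with $\lambda_k\ge 0$ and $\sum_k \lambda_k = 1$. Since the $\lambda_k$ sum to $1$, at least one of them, say $\lambda_\ell$, is strictly positive. Rearranging gives
\[
-u_\ell \;=\; \sum_{k\ne \ell}\frac{\lambda_k}{\lambda_\ell}\,u_k,
\]
exhibiting the unit vector $-u_\ell$ as a conical combination of $\{u_k: k\ne \ell\}$, hence as an element of the convex cone generated by $u_1,\ldots,u_M$.

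Now I would introduce a hypothetical additional agent whose time-$1$ opinion is $-u_\ell$ and imagine running it through the very same sequence of interventions. By Claim~\ref{claim:convex}, membership in this convex cone is preserved step by step, and in fact the same argument shows the hypothetical agent's trajectory must share the common limit of $u_1,\ldots,u_M$, namely $v^*$. On the other hand, the real agent with initial opinion $u_\ell$ also converges to $v^*$ by hypothesis, while Claim~\ref{claim:antipodal} forces $u_\ell$ and $-u_\ell$ to remain exact antipodes at every time $t$, so they can never share a common limit. This contradiction completes the proof.

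I do not anticipate a genuine obstacle here; the only subtlety is the bookkeeping that upgrades the convex-hull relation $\sum \lambda_k u_k = 0$ into an honest antipodal pair of unit vectors inside the convex cone, which is exactly where the normalization $\sum_k \lambda_k = 1$ (rather than arbitrary nonnegative weights) is used to guarantee some $\lambda_\ell > 0$. One small point to make explicit in the write-up is that Claim~\ref{claim:convex}, as stated, gives that a unit vector in the convex cone \emph{converges to $v^*$} when the generating opinions do, which is precisely the implication needed.
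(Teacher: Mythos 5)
Your proposal is correct and follows essentially the same route as the paper's own proof: use Claim~\ref{fact:convex} to place the origin in the convex hull, extract an antipode $-u_\ell$ as a conical combination of the remaining opinions, and then derive a contradiction from Claim~\ref{claim:convex} together with Claim~\ref{claim:antipodal}. The only cosmetic difference is that the paper relabels so that the positive coefficient sits on $u_M$, whereas you carry a general index $\ell$.
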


\begin{proof}
  Assume towards contradiction that there exists a strategy that makes $M$ opinions converge to the same point,
  and assume wlog~that they are $u_{1}, \ldots, u_{M}$. By assumption, we know that
  there is no hemisphere that contains all of $u_1,\ldots,u_M$, hence, by Claim~\ref{fact:convex}, there is a convex combination of
  $u_1,\ldots,u_M$ that equals 0. Therefore, there is also a conical combination
  of $u_{1}, \ldots, u_{M-1}$ that equals $- u_{M}$, where wlog we assume
  that the coefficient on $u_{M}$ is initially nonzero. 
  By Claim~\ref{claim:convex}, we conclude that if $u_{1}, \ldots, u_{M-1}$ converge to the same point, then so does $- u_{M}$. But that means that $-u_M$ and $u_M$ converge to
  the same point, which is a contradiction by Claim~\ref{claim:antipodal}.
\end{proof}
That concludes the proof of Theorem~\ref{thm:maximum-halfspace}.
\end{proof}

\begin{remark}
  One consequence of Theorem~\ref{thm:maximum-halfspace} is that if the
  agent opinions are initially distributed uniformly on the unit sphere, and if the number of agents $n$ is large compared
  to the dimension $d$,
  an optimal strategy converging as many opinions as possible to $v^*$ results, with high probability, in dividing
  the population into two groups of roughly equal size, where
  the opinions inside each group converge to one of two antipodal
  limit opinions (i.e., $v^*$ and $-v^*$).
  Furthermore, this optimal strategy, which, as discussed below,
  might not be easy to implement, will not perform significantly
  better than a very simple strategy of fixing a random intervention
  and applying it repeatedly. Of course the simple strategy will also
  polarize the agents into two approximately equally large groups. 
\end{remark}
\smallskip

\subsection{Computational equivalence to learning halfspaces}
\label{sec:halfspace-reduction}

Theorem~\ref{thm:maximum-halfspace} implies
that an optimal strategy for the influencer 
is to compute the open hemisphere that is the densest,
i.e., it contains the most opinions,
and then apply the procedure from Claim~\ref{claim:lower-bd}
to converge the opinions from this hemisphere to $v^*$.
In this section we study the computational complexity
of this problem. While different approaches are possible,
we focus on hardness of approximation and worst-case complexity.

\begin{definition}[Densest hemisphere]
    The input to the \emph{densest hemisphere} problem
    consists of parameters $n$ and $d$
    and a set of $n$ unit vectors
    $D=\{u_1,\ldots,u_n\}$ with $u_i\in S^{d-1}$.
    The objective is to find vector $a\in S^{d-1}$
    maximizing the number of points from $D$ 
    that belong to the open halfspace
    $\left\{x\in\mathbb{R}^d: \langle x,a\rangle>0\right\}$.
\end{definition}

We analyze the computational complexity of the densest hemisphere
problem in terms of the number of vectors $n$, regardless of
dimension $d$. In particular, the computationally hard instances
that exist as we will show in Theorem~\ref{thm:hardness-of-approximation} 
have high dimension, without any guarantees
beyond $d\le n$ (which can always be assumed wlog).
On the other hand, the algorithms from 
Theorem~\ref{thm:stable-hemisphere-algo} run in time polynomial
in $n$ uniformly for all values $d\le n$.

In contrast, the case of finding densest hemisphere in fixed
dimension $d$ can be solved efficiently.
For example, an optimal solution can be found by considering $O(n^d)$ halfspaces defined by $d$-tuples of input vectors.
We omit further details.

Our main result in this section relies on equivalence
of the densest hemisphere problem 
and the problem of \emph{learning noisy halfspaces}.
Applying a work by Guruswami and Raghavendra~\cite{GR09} we 
will show that it is computationally difficult to even approximate
the densest hemisphere up to any non-trivial constant factor:

\begin{theorem}\label{thm:hardness-of-approximation}
  Unless P=NP, for any $\eps>0$, there is no polynomial time algorithm $A_\eps$
  that distinguishes between instances of densest hemisphere problem
  such that, letting $D:=\{u_1,\ldots,u_n\}$:
  \begin{itemize}
  \item Either there exists a hemisphere $H$ such that $|D\cap H|/n > 1-\eps$.
  \item Or for every hemisphere $H$ we have $|D\cap H|/n <1/2+\eps$.
  \end{itemize}
  Consequently, unless P=NP, for any $\eps>0$ there is no polynomial time algorithm $A_\eps$ that,
  given an instance $D$ that has a hemisphere with density more than $1-\eps$,
  always outputs a hemisphere with density more than $1/2+\eps$.
\end{theorem}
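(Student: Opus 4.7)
The plan is to exhibit a tight, approximation-preserving reduction between the densest hemisphere problem and the \emph{agnostic halfspace learning} problem, and then invoke the hardness result of Guruswami and Raghavendra~\cite{GR09}. Recall that in agnostic halfspace learning, one is given labeled points $(x_1,y_1),\ldots,(x_n,y_n)\in\R^d\times\{-1,+1\}$ and seeks a homogeneous halfspace, specified by $a\in S^{d-1}$, maximizing the number of correctly classified examples, i.e.\ maximizing $|\{i:\sgn\langle x_i,a\rangle = y_i\}|$. Guruswami and Raghavendra show that, unless $\mathrm{P}=\mathrm{NP}$, no polynomial-time algorithm can, for any fixed $\eps>0$, distinguish between instances where some halfspace correctly classifies a $(1-\eps)$-fraction of examples and those where every halfspace correctly classifies at most a $(1/2+\eps)$-fraction.

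The reduction itself is essentially one line: given a labeled instance $\{(x_i,y_i)\}$, build a densest hemisphere instance $D:=\{u_1,\ldots,u_n\}$ by setting $u_i := y_i\cdot x_i/\|x_i\|$ (discarding any $x_i=0$, which contributes to neither objective). For any candidate $a\in S^{d-1}$ we have $\langle u_i,a\rangle>0$ iff $\sgn\langle x_i,a\rangle = y_i$, so the fraction of points contained in the open hemisphere cut out by $a$ equals exactly the classification accuracy of $a$ on the labeled instance. The reduction therefore preserves both the completeness and soundness thresholds, and a polynomial-time algorithm separating the two cases of Theorem~\ref{thm:hardness-of-approximation} would immediately yield one separating the two cases of Guruswami--Raghavendra, contradicting $\mathrm{P}\ne\mathrm{NP}$.

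The single technical wrinkle to address is the distinction between the \emph{open} hemisphere in our problem and the fact that the learning problem is usually stated without distinguishing open from closed (i.e.\ points on the separating hyperplane are counted arbitrarily). I expect this to be the main obstacle, though a minor one: in Guruswami--Raghavendra's hard instances the entries of the $x_i$ can be taken to be rationals of bounded bit-complexity, and one can perturb each $u_i$ by an arbitrarily small amount (or alternatively replace each $u_i$ by a pair of points located symmetrically just off a generic hyperplane) so that no hemisphere has any point exactly on its boundary, at a cost in density that is $o(1)$ in the input size and hence absorbable into the slack $\eps$. This way the open/closed distinction contributes at most an $\eps$ term to the thresholds, so the hardness gap survives.

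Finally, the stated consequence is a direct contrapositive: if there were a polynomial-time algorithm that, given any instance with a hemisphere of density $>1-\eps$, always produced one of density $>1/2+\eps$, then running it and checking the density of its output would solve the promise distinguishing problem, again contradicting $\mathrm{P}\ne\mathrm{NP}$.
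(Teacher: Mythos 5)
Your overall strategy --- reduce from Guruswami--Raghavendra's hardness of maximum agreement for halfspaces and transfer the $(1-\eps)$ versus $(1/2+\eps)$ gap verbatim --- is exactly the paper's. The difference lies in the reduction itself, and there is a genuine gap there. The hardness result being invoked (Theorem~\ref{thm:gr}, following~\cite{GR09}) concerns \emph{affine} halfspaces $H=\{x:\langle x,a\rangle>c\}$ with an arbitrary threshold $c$, not homogeneous ones as your proposal assumes. Your map $u_i:=y_i x_i/\|x_i\|$ preserves the objective only in the homogeneous case: when $c\neq 0$, rescaling $x_i$ to unit length can change which side of $\{x:\langle x,a\rangle>c\}$ the point lies on, so an affine halfspace with agreement $1-\eps$ on the source instance need not correspond to any hemisphere of comparable density in your target instance, and the completeness direction of the reduction fails. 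The paper repairs precisely this by lifting one dimension: it maps $(x_i,y_i)$ to $x'_i=\frac{1}{\sqrt{1+\|x_i\|^2}}(y_i x_i,1)\in\R^{d+1}$, which places affine open halfspaces in $\R^d$ in exact one-to-one, agreement-preserving correspondence with homogeneous open halfspaces in $\R^{d+1}$ (the normal $(a,-c)$ absorbs the threshold into the new coordinate). If you want to keep your one-line normalization, you must first justify that the hard instances remain hard under the restriction to homogeneous halfspaces --- and the cleanest way to do that is the same lifting, so the extra coordinate is not really avoidable.

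Two smaller points. Your discussion of the open-versus-closed boundary is a reasonable precaution, though once the reduction is a bijection between \emph{open} affine halfspaces and \emph{open} homogeneous halfspaces (as in the paper's Claim~\ref{cl:hemisphere-reduction}) no perturbation is needed. Your derivation of the second, algorithmic statement from the distinguishing statement is the same routine contrapositive the paper uses and is fine.
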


In other words, even if guaranteed the existence of an
extremely dense hemisphere, no polynomial time algorithm
can do significantly better than choosing an arbitrary
hyperplane and outputting the one of its two hemispheres
that contains the larger number of points.
At the same time, \cite{BDS00} (relying on earlier work \cite{BDES02})
shows that there exists an algorithm that finds a dense hemisphere provided
that this hemisphere is stable in the sense that it remains dense even after a small
perturbation of its separating hyperplane:
\begin{theorem}[\cite{BDS00}]
\label{thm:stable-hemisphere-algo}
  For every $\mu > 0$, there exists a polynomial time algorithm $A_\mu$, that,
  given an instance $D=\{u_1,\ldots,u_n\}$ of the densest hemisphere problem, provides the
  following guarantee:

  Let $a\in S^{d-1}$ be the vector that maximizes the size
  of intersection $|D\cap H_{a,\mu}|$ for halfspace
  $H_{a,\mu} = \{x: \langle x,a\rangle>\mu\}$.
  Then, the algorithm $A_\mu$ outputs
  a hemisphere corresponding to a homogeneous halfspace 
  $H_{a'}=\{x:\langle x,a'\rangle>0\}$ such that 
  $|D\cap H_{a'}|\ge|D\cap H_{a,\mu}|$.
\end{theorem}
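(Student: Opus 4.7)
The plan is to reduce the problem to learning a margin-$\mu$ halfspace via the classical Perceptron algorithm. The core observation is that if some unit vector $a$ separates a set $S\subseteq D$ with margin $\mu$, then by the Perceptron convergence bound there exists a vector $a'$ expressible as a sum of at most $\lceil 1/\mu^2\rceil$ points from $S$ such that $a'$ also separates $S$ in the zero-margin sense. Since there are only $n^{O(1/\mu^2)}$ such short index sequences, we can enumerate all of them in time polynomial in $n$ for any fixed $\mu>0$.

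Concretely, the algorithm $A_\mu$ would iterate over every ordered sequence $(i_1,\ldots,i_T)$ of indices in $\{1,\ldots,n\}$ with $T\le\lceil 1/\mu^2\rceil$, form the candidate normal $a'=\sum_{j=1}^T u_{i_j}$, and evaluate $N(a'):=|\{\,i : \langle u_i,a'\rangle > 0\,\}|$. It outputs the homogeneous halfspace $H_{a'}$ that maximizes $N(a')$. The total number of candidates is $n^{O(1/\mu^2)}$, each evaluated in $O(nd)$ time, for a running time polynomial in $n$ when $\mu$ is fixed.

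For correctness, let $a\in S^{d-1}$ be the vector maximizing $|D\cap H_{a,\mu}|$, set $k:=|D\cap H_{a,\mu}|$, and let $S:=D\cap H_{a,\mu}$. Imagine running Perceptron on $S$ with target $a$: initialize $a^{(0)}=0$, and while some $u_i\in S$ has $\langle a^{(t)},u_i\rangle\le 0$, pick any such $u_i$ and update $a^{(t+1)}=a^{(t)}+u_i$. Because every $u_i\in S$ satisfies $\langle a,u_i\rangle>\mu$ and $\|u_i\|=1$, the standard Perceptron potential argument (tracking $\langle a^{(t)},a\rangle$ and $\|a^{(t)}\|^2$) yields termination within $T\le 1/\mu^2$ updates, at which point $\langle a^{(T)},u_i\rangle > 0$ for every $u_i\in S$. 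The final iterate $a^{(T)}$ is a sum of at most $T$ elements of $D$, so the index sequence used by Perceptron is among those enumerated by $A_\mu$. Hence $A_\mu$ returns a hemisphere containing at least $|S|=k$ points of $D$, as claimed.

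The main step that needs checking is the Perceptron convergence bound with the strict-inequality conclusion, but this is a classical result and the rest of the argument is routine enumeration plus counting; no essentially new ingredient seems to be required for this reduction.
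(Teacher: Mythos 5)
Your proof is correct. Note that the paper does not actually prove Theorem~\ref{thm:stable-hemisphere-algo}; it is quoted from~\cite{BDS00} without proof, so there is no in-paper argument to compare against --- but your reduction (Perceptron convergence on the margin-$\mu$ subset $S=D\cap H_{a,\mu}$ to certify the existence of a separator of $S$ that is a sum of at most $\lceil 1/\mu^2\rceil$ data points, followed by exhaustive enumeration of all $n^{O(1/\mu^2)}$ such sums) is exactly the standard mechanism behind the cited result, and all the steps check out: the potential argument gives $T\mu<\|a^{(T)}\|\le\sqrt{T}$, hence $T<1/\mu^2$, termination yields strict positivity on all of $S$ as required by the open halfspace, and enumerating ordered index sequences with repetition correctly covers the multiset of Perceptron updates. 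The running time $n^{O(1/\mu^2)}$ is polynomial in $n$ for each fixed $\mu$ but not uniformly in $1/\mu$, matching the caveat stated after the theorem.
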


We emphasize that the only inputs to the algorithms
are $n$, $d$ and the set of vectors $D$,
and that the complexity is measured as a function of $n$.
For example, the algorithm $A_\mu$
runs in polynomial time for every $\mu>0$, but the running
time is not uniformly polynomial in $1/\mu$.

\medskip

In the remainder of this section we elaborate on how to obtain
Theorem~\ref{thm:hardness-of-approximation} from known results.
To that end, we start with defining the related problem
of finding maximum agreement halfspace.

\begin{definition}[Maximum Agreement Halfspace]
  In the problem of \emph{maximum agreement halfspace}, 
  the inputs are parameters $n$ and $d$, 
  and a labeled set of points
  $D = \{(x_1,y_1),\ldots,(x_n,y_n)\}\in\mathbb{R}^d\times\{\pm 1\}$.
  The objective
  is to find a halfspace $H = \{x: \langle x,a\rangle > c\}$
  for some $a\in\mathbb{R}^d$ and $c\in\mathbb{R}$ which
  maximizes the
  \emph{agreement}
  \begin{align*}
    A(D, H) = \frac{\sum_{i=1}^n \mathbbm{1}\left[
    y_i\cdot x_i \in H
    \right]}{n}\;.
  \end{align*}
\end{definition}

There is a strong hardness of approximation
result for maximum halfspace agreement
\cite{GR09} (see also \cite{FGK06, BB06, BEL03, AK98} for related work):
\begin{theorem}[\cite{GR09}]\label{thm:gr}
Unless P=NP, for any $\eps>0$, there is no polynomial time
  algorithm $A_\eps$ that distinguishes the following
  cases of instances of maximum agreement halfspace
  problem:
  \begin{itemize}
  \item There exists a halfspace $H$ such that $A(D, H)>1-\eps$.
  \item For every halfspace $H$ we have $A(D, H)<1/2+\eps$.
  \end{itemize}
\end{theorem}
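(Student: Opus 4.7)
The plan is to prove Theorem~\ref{thm:gr} by a PCP-based reduction from a smooth variant of Label Cover to the maximum agreement halfspace problem, following the template of Guruswami and Raghavendra~\cite{GR09}. First, I would invoke the NP-hardness (via the PCP theorem plus parallel repetition) of the \emph{Smooth Label Cover} problem: given a bipartite constraint graph with label sets $[R]$ and $[L]$ and projections $\pi_e : [R] \to [L]$ on each edge, it is NP-hard to distinguish instances in which a $(1-\eta)$-fraction of edges can be satisfied from those in which at most a $\xi$-fraction can be satisfied, with the additional smoothness guarantee that for any vertex $u$ and two random labels, the probability that their projections collide along a random edge is at most $\xi$. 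The smoothness will be used to control ``label collisions'' during the soundness decoding step.

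Second, I would construct a long-code based gadget. For each vertex $u$ of the Label Cover instance, introduce a block of $2^R$ coordinates indexed by $\{\pm 1\}^R$, and for each edge generate labeled points $(x,y)\in\R^d\times\{\pm 1\}$ drawn from a carefully chosen test distribution so that the ``dictator'' halfspace $x \mapsto \sgn(x_\ell)$ (for $\ell$ the true label of $u$) achieves agreement at least $1-\eps$, whereas halfspaces far from every dictator cannot. Completeness is then a direct calculation: given a labeling satisfying a $(1-\eta)$ fraction of edges, the halfspace induced by combining the dictators for each vertex achieves agreement at least $1-\eps$ on the reduction instance, once $\eta$ is taken small enough relative to $\eps$.

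The main obstacle, as usual for such inapproximability results, is the soundness analysis: one must show that any halfspace $h$ with agreement exceeding $1/2+\eps$ can be decoded into a labeling satisfying a non-negligible fraction of Label Cover constraints, contradicting its hardness. The argument proceeds by Fourier analysis. One examines the coordinate influences of $h$ inside each vertex's block: if every coordinate has small influence, then an invariance principle (a Berry--Esseen-type theorem tailored to linear threshold functions) shows that $h$ behaves on the test distribution like a Gaussian threshold, and the test distribution is engineered precisely so that Gaussian thresholds achieve agreement at most $1/2 + o(1)$, yielding a contradiction. Consequently, on each vertex's block, $h$ must have a small list of coordinates with non-negligible influence; these influential coordinates constitute a short candidate list of labels for the vertex, and the smoothness property ensures that a random label drawn from these lists satisfies enough edge projections to violate the Label Cover hardness gap. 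The delicate technical heart of the argument is the quantitative tuning of $\eta$, $\xi$, the invariance-principle error, and the parameters of the test distribution so that the hardness gap achieves the tight form $(1-\eps, 1/2+\eps)$ rather than something weaker.
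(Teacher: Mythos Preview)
The paper does not prove Theorem~\ref{thm:gr} at all: it is stated with attribution to~\cite{GR09} and invoked as a black box, so the ``paper's own proof'' consists solely of the citation. You have instead sketched a proof of the underlying result itself, and your outline---reduction from Smooth Label Cover, long-code dictatorship test, completeness via dictators, soundness via an invariance principle and influence decoding---does accurately capture the architecture of the Guruswami--Raghavendra argument.

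That said, what you have written is a high-level plan rather than a proof: you have not specified the test distribution, written down the Fourier/invariance estimate, or carried out the parameter tuning you yourself flag as ``the delicate technical heart of the argument.'' For the purposes of \emph{this} paper none of that is needed, since the theorem is imported wholesale from~\cite{GR09}; if your goal is to match the paper, the correct response is simply to cite~\cite{GR09} and move on to the reduction in Claim~\ref{cl:hemisphere-reduction}.
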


As in Theorem~\ref{thm:hardness-of-approximation}, 
the hard instances are not guaranteed to have any dimension bounds beyond
trivial $d\le n$.

As pointed out in~\cite{BDS00}, there exists a reduction from the
maximum agreement halfspace problem to the densest hemisphere problem that
preserves the quality of solutions. Since this reduction is only briefly
sketched in~\cite{BDS00}, we describe it below.

The reduction proceeds as follows: Given a labeled set
$D = \{(x_1,y_1),\ldots,(x_n,y_n)\}\in\mathbb{R}^d\times \{\pm 1\}$, we map it
to $D' =\{x'_1,\ldots,x'_n\}\in\mathbb{R}^{d+1}$ using the formula
\begin{align*}
  x'_i = \frac{1}{\sqrt{1+\|x_i\|^2}}\cdot (y_i x_i, 1)\;. 
\end{align*}
In other words, we proceed in three steps: 
first, we negate
each point that came with negative label $y_i=-1$.
Then, we add a 
new coordinate and set its
value to $1$ for every point $x_i$. Finally, we normalize each resulting point
so that it lies on the unit sphere in $S^{d}$. 

This is a so-called ``strict reduction'', which is expressed in the
following claim:
\begin{claim}\label{cl:hemisphere-reduction}
  The solutions (halfspaces) for an instance $D$
  of Maximum Agreement Halfspace are in one-to-one correspondence
  with solutions (hemispheres) for the reduced instance
  $D'$ of Densest Hemisphere.
  Furthermore,
  for a corresponding pair of solutions $(H, H')$ the agreement $A(D,H)$ is
  equal to the density $|D'\cap H'|/n$.
\end{claim}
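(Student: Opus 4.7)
The plan is to make the bijection between halfspaces in $\mathbb{R}^d$ and homogeneous halfspaces in $\mathbb{R}^{d+1}$ explicit, and then verify that it preserves the agreement/density count pointwise. Concretely, given any halfspace $H = \{x \in \mathbb{R}^d : \langle x, a\rangle > c\}$ parametrized by $(a, c) \in \mathbb{R}^d \times \mathbb{R}$, I would associate to it the hemisphere $H' = \{x' \in \mathbb{R}^{d+1} : \langle x', a'\rangle > 0\}$ where $a' := (a, -c) \in \mathbb{R}^{d+1}$. Conversely, any hemisphere of $S^d$ is cut out by a homogeneous halfspace with normal $a' = (b, e)$ for some $b \in \mathbb{R}^d$ and $e \in \mathbb{R}$, to which I associate the halfspace defined by $(b, -e)$. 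Since halfspaces (resp.\ hemispheres) are invariant under positive scaling of their defining parameters in the same way, this gives a well-defined bijection at the level of sets.

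The key computation is a one-line check. For each input point $x_i \in \mathbb{R}^d$ with label $y_i \in \{\pm 1\}$, the image is $x'_i = s_i \cdot (y_i x_i, 1)$ where $s_i := 1/\sqrt{1+\|x_i\|^2} > 0$. Then
\begin{align*}
  \langle x'_i, a'\rangle \;=\; s_i\bigl(y_i \langle x_i, a\rangle \;-\; c\bigr),
\end{align*}
and since $s_i > 0$, the sign of $\langle x'_i, a'\rangle$ agrees with the sign of $y_i \langle x_i, a\rangle - c$. Hence $x'_i \in H'$ if and only if $\langle y_i x_i, a\rangle > c$, i.e., if and only if $y_i x_i \in H$. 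Summing the indicators,
\begin{align*}
  \frac{|D' \cap H'|}{n} \;=\; \frac{1}{n}\sum_{i=1}^n \mathbbm{1}\bigl[x'_i \in H'\bigr] \;=\; \frac{1}{n}\sum_{i=1}^n \mathbbm{1}\bigl[y_i x_i \in H\bigr] \;=\; A(D, H),
\end{align*}
which is exactly the density--agreement equality claimed.

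There is no real obstacle here; the only minor thing to be careful about is that the normalization factor $s_i$ is strictly positive and so cannot flip the strict inequality, and that the ``lifting'' coordinate $1$ guarantees the constant term $-c$ appears correctly in $\langle x'_i, a'\rangle$. Points lying on the boundary hyperplane of $H$ map to points on the boundary of $H'$ (and are excluded by strict inequality in both problems), so the correspondence of solutions is exact. Combined with Theorem~\ref{thm:gr}, this strict reduction immediately yields Theorem~\ref{thm:hardness-of-approximation}: a polynomial-time algorithm distinguishing the two cases of densest hemisphere with gap $(1-\eps, 1/2+\eps)$ would, through the reduction, distinguish the corresponding cases of maximum agreement halfspace with the same gap, contradicting $\mathrm{P}\ne\mathrm{NP}$.
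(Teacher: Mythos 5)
Your proof is correct and follows essentially the same route as the paper: both use the map $H = \{x : \langle x, a\rangle > c\} \mapsto H' = \{x' : \langle x', (a,-c)\rangle > 0\}$ and verify that $y_i x_i \in H$ if and only if $x'_i \in H'$ because the positive normalization factor cannot flip the strict inequality. You merely spell out the inner-product computation more explicitly than the paper does.
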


\begin{proof}
  It is more convenient to think of solutions for $D'$ as homogeneous, open
  halfspaces $H' = \{x\in\mathbb{R}^{d+1}: \langle x,a\rangle > 0\}$.

  With that in mind, we map a solution to the maximum agreement halfspace
  problem $H=\{x\in\mathbb{R}^d:\langle x,a\rangle > c\}$ to a solution to
  the densest hemisphere problem $H'=\{(x,x_{d+1})\in\mathbb{R}^{d+1}:
  \langle (x,x_{d+1}),(a,-c))\rangle >0\}$. Clearly, this is a one-to-one mapping
  between open halfspaces in $\mathbb{R}^d$ and homogeneous open halfspaces
  in $\mathbb{R}^{d+1}$.

  Furthermore, it is easy to verify that $y_i\cdot x_i\in H$ if and only if
  $x'_i \in H'$ and therefore $A(D, H) = |D'\cap H'|/n$.
\end{proof}

Theorem~\ref{thm:hardness-of-approximation} follows
from Theorem~\ref{thm:gr} and
Claim~\ref{cl:hemisphere-reduction}
by standard (and straightforward) arguments
from complexity theory.

\section{Short-term scenario: polarization as externality}
\label{sec:short-term intro}
The analysis of the asymptotic setting with unlimited interventions tells us what is feasible and what is not. A fundamentally different question is how to persuade as many as possible with a limited number of interventions. This is motivated by bounded resources or time that usually allow
only limited placements of campaigns and advertisements.
Furthermore, arguably only the initial interventions can be considered effective:
in the long run the opinions might shift due to external factors
and become more unpredictable and harder to control. Therefore, in this section we discuss
strategies where the influencer has only one intervention at its
disposal, and its goal is to get as many agents as possible to exceed certain
``threshold of agreement'' with its preferred opinion. 
Throughout this section, we fix $\eta=1$ in Equation~\ref{eq:07}, so an opinion $u$ is updated
to be proportional to $w = u + \langle u,v\rangle\cdot v$.

Both scenarios we discuss in this section describe a situation where
a ``new'' product or idea is introduced. Therefore, we assume
that the agents have some preexisting opinions in $\mathbb{R}^{d-1}$
and that they are neutral as to the new idea, with the $d$-th coordinate
set to zero for every agent. Our results indicate significant potential for
polarization in such a situation. This is in spite of the fact that
the influencer might only care about persuading a number of agents
towards the new subject, without intention to polarize. 

Since we are dealing with scenarios with only one intervention,
we use the following notational convention: an initial
opinion of agent $i$ is denoted $u_i$ and the opinion after
intervention is denoted $\uafter_i$.

\subsection{One intervention, two agents: polarization costs}
\label{subsec:short-term-1}

We consider a simple example that features only two agents 
and one influencer who is allowed one intervention.
We imagine a new product, such that
the agents are initially agnostic about it, i.e., $u_{i,d}=0$ for
$i=1,2$. Given an intervention $v$, we are interested in two issues:
First, what will be new opinions of agents about the product $\uafter_{i,d}$?
Second, assuming that the initial correlation between opinions is
$c=\langle u_1,u_2\rangle$, what will be the new correlation
$\cafter=\langle \uafter_1,\uafter_2\rangle$? We think of the correlation as a measure of
agreement between the agents and therefore interpret differences in correlation
as changes in the extent of polarization.

In order to answer these questions, we introduce notions of two- and
one-agent interventions corresponding to two natural
strategies:
\begin{definition}
    The \emph{two-agent intervention} is an intervention
    that maximizes $\min(\uafter_{1,d},\uafter_{2,d})$. The
    \emph{one-agent intervention} maximizes
    $\max(\uafter_{1,d}, \uafter_{2,d})$.
\end{definition}

The motivation for this definition is as follows. Assume that there exists a threshold $T>0$
such that agent $i$ is going to make a positive decision (e.g., buy the product or vote
a certain way) if its coordinate $\uafter_{i,d}$ exceeds $T$.
Then, if the influencer cares only about inducing agents to make the decision, it
has two natural choices for the intervention.
One option is the case where it is possible to induce two decisions, i.e.,
achieve $\uafter_{1,d},\uafter_{2,d}>T$. By continuity considerations, it is not difficult to see
that an intervention that achieves this can be assumed to maximize $\min(\uafter_{1,d},\uafter_{2,d})$
with $\uafter_{1,d}=\uafter_{2,d}$
(such intervention is also optimal if the influencer bets on convincing both agents without
knowing $T$).
The other case is to appeal only to one of the agents, disregarding
the second agent and concentrating only on achieving, say, $\uafter_{1,d}>T$.

Let $c=\langle u_1,u_2\rangle$ be the initial correlation
between opinions and let $\ctwo$ and 
$\cone$ be the correlations
after applying, respectively, the two- and one-agent
interventions. Our main result in this section is:
\begin{proposition}\label{prop:polarization-cost}
Let $\rho:=\ctwo-\cone$ be a value that we call the
\emph{polarization cost}. Then, we always have $\rho\ge 0$
with exact values given as
\begin{align}\label{eq:25}
\ctwo=1-\frac{\sqrt{2}(1-c)}{\sqrt{3c+5}}\;,
\qquad\qquad
\cone=\frac{c\sqrt{2}}{\sqrt{c^2+1}}\;.
\end{align}
\end{proposition}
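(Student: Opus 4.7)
Since $u_{i,d}=0$, the $d$-th coordinate after intervention simplifies to
\[
\uafter_{i,d}=\frac{\langle u_i,v\rangle\,v_d}{\sqrt{1+3\langle u_i,v\rangle^2}}\,,
\]
and a short expansion yields
\[
\langle\uafter_1,\uafter_2\rangle=\frac{c+3\langle u_1,v\rangle\langle u_2,v\rangle}{\sqrt{1+3\langle u_1,v\rangle^2}\,\sqrt{1+3\langle u_2,v\rangle^2}}\,.
\]
The map $x\mapsto x/\sqrt{1+3x^2}$ is strictly increasing on $\mathbb{R}$, so (provided $v_d\neq 0$) the equation $\uafter_{1,d}=\uafter_{2,d}$ is equivalent to $\langle u_1-u_2,v\rangle=0$. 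These two observations drive the two computations below.

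\textbf{Two-agent intervention.} By a symmetrization argument exploiting the $u_1\leftrightarrow u_2$ reflection symmetry of the problem, an optimal $v$ can be assumed to satisfy $\uafter_{1,d}=\uafter_{2,d}$, and hence $v\perp u_1-u_2$. I then decompose
\[
v=\alpha\widehat{\overline u}+\beta e_d+\gamma w\,,\qquad \widehat{\overline u}:=\frac{u_1+u_2}{\|u_1+u_2\|}\,,
\]
with $w$ orthogonal to the first two vectors; the relation $u_{i,d}=0$ gives $e_d\perp\widehat{\overline u}$, and $\gamma$ enters neither $\uafter_{i,d}$ nor $\langle\uafter_1,\uafter_2\rangle$, so I set $\gamma=0$. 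Writing $a:=\langle u_i,v\rangle=\alpha\sqrt{(1+c)/2}$ (the same for $i=1,2$), the inner-product formula collapses to
\[
\ctwo=\frac{c+3a^2}{1+3a^2}=1-\frac{1-c}{1+3a^2}\,,
\]
a function of $a^2$ alone and strictly increasing in $a^2$. Maximizing $\uafter_{i,d}=a\beta/\sqrt{1+3a^2}$ subject to $\alpha^2+\beta^2=1$ is now a one-variable calculus problem whose critical point satisfies $1+3(a^*)^2=\sqrt{(5+3c)/2}$, and substituting delivers the stated formula for $\ctwo$.

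\textbf{One-agent intervention.} I maximize $\uafter_{1,d}$ over $\|v\|=1$ using the analogous decomposition $v=\alpha u_1+\beta e_d+\gamma w$ (noting $e_d\perp u_1$); once again $\gamma$ does not help, so $\gamma=0$. Optimizing $\alpha\beta/\sqrt{1+3\alpha^2}$ under $\alpha^2+\beta^2=1$ yields $\alpha^{*2}=1/3$, so $v^*=u_1/\sqrt{3}+\sqrt{2/3}\,e_d$. Then $\langle u_1,v^*\rangle=1/\sqrt{3}$ and $\langle u_2,v^*\rangle=c/\sqrt{3}$, and inserting into $\langle\uafter_1,\uafter_2\rangle$ (numerator simplifies to $2c$, denominator to $\sqrt{2}\sqrt{1+c^2}$) produces $\cone=c\sqrt{2}/\sqrt{1+c^2}$.

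\textbf{Polarization cost.} Finally, $\rho\ge 0$ rearranges to
\[
\sqrt{(5+3c)(1+c^2)}\ge\sqrt{2}\,\bigl[\,c\sqrt{5+3c}+(1-c)\sqrt{1+c^2}\,\bigr]\qquad(c\in[-1,1])\,,
\]
which after squaring (with a brief case-check when $c<0$ to verify both sides are nonnegative before squaring) reduces to a polynomial inequality in $c$ that is manifestly nonnegative on $[-1,1]$, with equality exactly at $c=\pm 1$. I expect the symmetrization step in the two-agent case to be the main technical obstacle; the remainder is coordinate bookkeeping and routine algebra.
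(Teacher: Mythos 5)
Your proof follows essentially the same route as the paper's: restrict the intervention to the plane spanned by $(u_1+u_2)/\|u_1+u_2\|$ and $e_d$ (the paper does the identical reduction by choosing coordinates $u_1=(\sin\alpha,\cos\alpha,0)$, $u_2=(-\sin\alpha,\cos\alpha,0)$ and arguing that mass outside the relevant coordinates is wasted), compute the two optimal interventions, and compare the resulting correlations. All of your formulas check out: with $a=\langle u_1,v\rangle=\langle u_2,v\rangle$ one gets $\ctwo=\frac{c+3a^2}{1+3a^2}=1-\frac{1-c}{1+3a^2}$, the critical point satisfies $1+3(a^*)^2=\sqrt{(5+3c)/2}$, and the one-agent computation with $\langle u_1,v^*\rangle=1/\sqrt3$, $\langle u_2,v^*\rangle=c/\sqrt3$ gives $\cone=c\sqrt2/\sqrt{1+c^2}$, exactly reproducing~\eqref{eq:25}. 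Your one genuine improvement is noticing that $\ctwo$ depends on $v$ only through $a^2$, which turns the two-agent optimization into a one-variable problem solvable by hand; the paper instead delegates this step to a computer algebra system.

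Two soft spots, both of which the paper also leaves informal. First, the symmetrization claim that an optimal two-agent intervention satisfies $\uafter_{1,d}=\uafter_{2,d}$ is asserted rather than proved (the paper says only ``by continuity considerations''); a clean argument is that if $\uafter_{1,d}<\uafter_{2,d}$ with $\uafter_{1,d}<1/3$, a small perturbation of $v$ toward the maximizer of $\uafter_{1,d}$ strictly increases the minimum, a contradiction. Second, your endgame is not quite right as stated: both sides of
\begin{align*}
\sqrt{(5+3c)(1+c^2)}\;\ge\;\sqrt{2}\,\bigl[\,c\sqrt{5+3c}+(1-c)\sqrt{1+c^2}\,\bigr]
\end{align*}
are in fact nonnegative on all of $[-1,1]$ (so no case-check is needed \emph{before} squaring), but a single squaring does \emph{not} produce a polynomial inequality, because the cross term $4c(1-c)\sqrt{(5+3c)(1+c^2)}$ survives; one must isolate it and square a second time, now with a genuine sign discussion since $c(1-c)$ changes sign at $c=0$. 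This is routine but not the one-line finish you describe. The paper omits this verification entirely (it only records a Taylor expansion near $c=1$), so your sketch is no less complete than the published argument, but the phrase ``manifestly nonnegative after squaring'' oversells the final step.
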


The values of $\rho,\ctwo$ and $\cone$ as functions
of $c$ are illustrated in Figure~\ref{fig:tradeoff-example-left}.
Proposition~\ref{prop:polarization-cost} states that the
one-agent intervention always results in smaller correlation
than the two-agent intervention. 
Note that we made a modeling assumption that the influencer will
always choose an intervention as opposed to doing nothing. This is
consistent with a scenario where the influencer's objective
is to increase the opinions above the threshold
$T$. In that case doing nothing is certain
to give no gain to the influencer.

The main conclusion of this theorem is consistent with our other results.
In the setting we consider,
in the absence of any external mitigation, the self-interested influencer without
direct intention to polarize might be 
incentivized to choose the intervention that increases polarization.
If polarization is regarded as undesirable, the polarization cost can be thought of
as the externality imposed on the society.

Looking at Figures~\ref{fig:tradeoff-example-left} and~\ref{fig:tradeoff-example-right}, this effect seems most pronounced
for initial correlation around $c\approx-0.5$, where the one-agent intervention increases polarization,
the polarization cost is large and
the range of thresholds $T$ for which the influencer profits from the one-agent
strategy is relatively large. This suggests that a situation
where the society is already somewhat polarized is particularly vulnerable
to spiraling out of control.  It also suggests that situations where the level of commitment
required for the decision (i.e., the threshold $T$) is large increase the risk of polarization.

We also note that this overall picture is complicated by the case of positive initial
correlation $c>0$. In that case both two- and one-agent interventions actually increase
the correlation between the agents, even though the two-agent intervention does so to a larger extent. The analysis leading to the proof of Proposition~\ref{prop:polarization-cost} is contained in 
Appendix~\ref{app:polarization-cost}.

\begin{figure}[!ht]
    \centering
    \begin{minipage}{0.45\textwidth}
    \centering
    \includegraphics[width=0.9\textwidth]{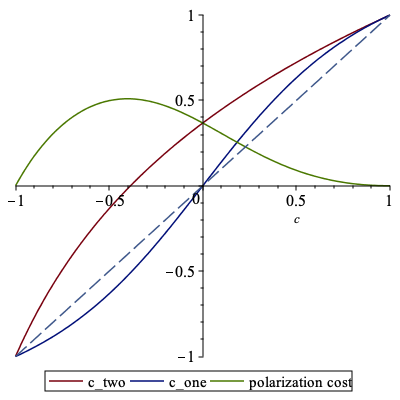} 
    \caption{Illustration of the polarization cost as a function of the initial
    correlation $c$. The dashed line is the initial 
    correlation included
    as a reference point.
    The red and blue lines are correlations after applying
    two- and one-agent interventions respectively.
    The green line shows the polarization cost $\ctwo-\cone$.}
    \label{fig:tradeoff-example-left}
\end{minipage}\hfill
    \begin{minipage}{0.45\textwidth}
    \centering
    \includegraphics[width=0.9\textwidth]{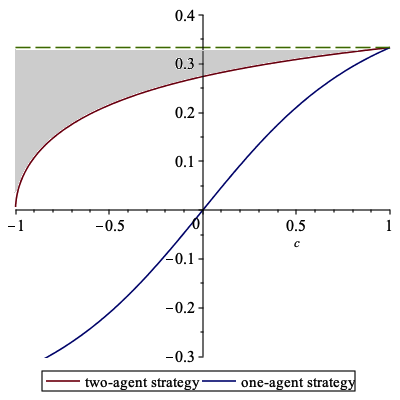} 
    \caption{The after-intervention opinions of both
    agents $\uafter_{i,d}$ as functions of initial correlation
    $c$.
    The red line represents the opinion of either agent 
    after applying
    the two-agent intervention. The blue line is the opinion
    of the second agent after the one-agent intervention. For reference, the dashed
    line ($1/3$) shows the opinion of the first agent in the one-agent intervention (which does not depend on $c$).
    The grey area represents the range of thresholds $T$ where it is preferable for the influencer to
    apply the one-agent intervention.}
    \label{fig:tradeoff-example-right}
    \end{minipage}
\end{figure}

\subsection{One intervention, many agents: finding the densest spherical cap}
\label{subsec:short-term-2}

A more general version of the problem of persuading with limited number of interventions features
$n$ agents with opinions $u_1, \ldots, u_n \in \mathbb{R}^d$.
The influencer is given a threshold $0\le T< 1$
and can apply one intervention $v$ with the objective of maximizing the number of agents
such that $\uafter_{i,d} > T$. As before, 
we assume that intially $u_{i,d}=0$ and that $T$ can be interpreted as a
threshold above which a consumer decides to buy the newly advertised product, 
or more generally take a desired action, such as voting, donating, etc. 

Interestingly, we show that this problem
is equivalent to a generalization of the densest hemisphere problem
from the long-term scenario discussed in~Section~\ref{sec:densest-hemisphere-intro}.
More precisely, it is equivalent to finding a densest \emph{spherical cap}
of a given radius (that depends on the threshold $T$) in $d-1$ dimensions.

We give the technical statement in the proposition below. 
We make an assumption $0\le T<1/3$, since $1/3$ is the maximum
value that can be achieved in the $d$-th coordinate
by a single intervention, cf.~Figure~\ref{fig:tradeoff-example-right}.
In order to state Proposition~\ref{prop:short-term-cap}, we slightly 
abuse notation
and write vectors $u\in\mathbb{R}^d$ as
$u=(u^*, u_d)$ for $u^*\in\mathbb{R}^{d-1}$, $u_d\in\mathbb{R}$.

\begin{proposition}\label{prop:short-term-cap}
In the setting above, let
\begin{align*}
    c:=\frac{2T}{1-3T^2}\;,\qquad
    z:=\frac{\sqrt{\sqrt{1+3c^2}-1}}{\sqrt{3}c}\;,\qquad
    \beta:=\arccos(z)\;.
\end{align*}
Then, the number of agents with $\uafter_{i,d}> T$ is
maximized by applying an intervention
\begin{align}\label{eq:08}
    v := (\cos\beta\cdot v^*,\sin\beta)
\end{align}
for a unit vector $v^*\in\mathbb{R}^{d-1}$ that maximizes
the number of agents satisfying
\[
    \langle u_i^*, v^*\rangle> c\;.
\]
\end{proposition}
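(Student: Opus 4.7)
The plan is to reduce the joint optimization over $v \in S^{d-1}$ to a decoupled pair: a scalar optimization over a ``vertical tilt'' $\beta$ and a vector optimization over a ``horizontal direction'' $v^* \in S^{d-2}$.

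First I would set up coordinates. Since $v$ and $-v$ induce the same update and the condition $\uafter_{i,d} > T \ge 0$ requires positive signs, we may assume without loss of generality that $v = (\cos\beta \cdot v^*, \sin\beta)$ with $\beta \in (0, \pi/2)$ and $v^* \in S^{d-2}$. Because $u_{i,d}=0$, we have $\langle u_i, v\rangle = \cos\beta \cdot s_i$ where $s_i := \langle u_i^*, v^*\rangle$. Substituting into~\eqref{eq:main} and~\eqref{eq:main3} with $\eta=1$ yields
\[
    \uafter_{i,d} \;=\; \frac{\cos\beta\,\sin\beta\,\cdot\, s_i}{\sqrt{1 + 3\cos^2\beta\cdot s_i^2}}\,.
\]

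Second I would reduce $\uafter_{i,d} > T$ to a linear condition on $s_i$. Both sides must be positive (forcing $s_i > 0$), so squaring is valid and rearrangement gives $s_i^2\cos^2\beta(\sin^2\beta - 3T^2) > T^2$, equivalently
\[
    s_i \;>\; c_\beta \;:=\; \frac{T}{\cos\beta\,\sqrt{\sin^2\beta - 3T^2}},
\]
assuming $\sin^2\beta > 3T^2$ (otherwise no $s_i \in [-1,1]$ qualifies). The crucial observation is that $c_\beta$ depends only on $\beta$ and $T$, not on the index $i$ or on $v^*$. Therefore for every fixed $\beta$, the best $v^*$ is one that maximizes $|\{i: \langle u_i^*, v^*\rangle > c_\beta\}|$, i.e., it picks the densest spherical cap of angular radius $\arccos(c_\beta)$ centered at $v^*$. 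Since this maximum count is weakly non-increasing in the threshold $c_\beta$, the joint optimum is attained at the $\beta^*$ that minimizes $c_\beta$, coupled with the densest-cap solution for that threshold.

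Third I would carry out the one-variable minimization of $c_\beta$. Letting $x = \cos^2\beta$, minimizing $c_\beta$ is equivalent to maximizing the concave quadratic $f(x) = x(1 - x - 3T^2)$ on $[0,1]$; the interior maximizer (which lies in $(0,1)$ since $T < 1/3$) is $x^* = (1-3T^2)/2$, yielding $c_{\beta^*} = 2T/(1-3T^2) = c$. A short computation using the identity $1+3c^2 = (1+3T^2)^2/(1-3T^2)^2$ then verifies that $\cos\beta^* = \sqrt{(1-3T^2)/2}$ agrees with the expression $z = \sqrt{\sqrt{1+3c^2}-1}/(\sqrt{3}\,c)$ in the proposition.

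The main obstacle is conceptual rather than technical: noticing that the optimization over $\beta$ and $v^*$ decouples because the threshold $c_\beta$ is independent of $v^*$. Once that is in hand, the rest is routine calculus and algebraic simplification. The only subtlety requiring care is the sign handling when squaring $\uafter_{i,d} > T$, so that we neither discard the legitimate solutions nor admit spurious ones from the opposite hemisphere.
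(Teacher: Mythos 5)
Your argument is correct and arrives at the paper's reduction to the densest spherical cap, but via a transposed decomposition of the joint optimization over $(\beta, v^*)$. The paper fixes $v^*$ first: for each agent it maximizes $\uafter_{i,d}$ over $\beta$, obtaining the increasing function $c_i\mapsto\bigl(\sqrt{1+3c_i^2}-1\bigr)/(3c_i)$, inverts it at $T$ to get the threshold $c=2T/(1-3T^2)$, and then uses monotonicity of the update formula in $c_i$ for fixed $\beta$ to check that the single value $\beta=\arccos(z)$ simultaneously pushes every agent with $c_i>c$ above $T$. You instead fix $\beta$ first, solve the inequality $\uafter_{i,d}>T$ for $s_i$ to obtain a threshold $c_\beta$ that depends on neither $v^*$ nor the agent, observe that the optimal cap count is non-increasing in that threshold, and minimize $c_\beta$ over $\beta$ by one-variable calculus on $x=\cos^2\beta$. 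Your route buys two things: the fact that a single $\beta$ serves all qualifying agents at once is automatic (the condition is already of the form $s_i>c_\beta$), and the monotonicity of the densest-cap count in the threshold is stated explicitly rather than left implicit; the small price is the extra algebraic verification that $\cos\beta^*=\sqrt{(1-3T^2)/2}$ coincides with the stated $z$, which you carry out correctly. Your sign handling when squaring, the restriction to $\sin^2\beta>3T^2$, and the dismissal of the degenerate cases $\beta\in\{0,\pi/2\}$ (where $\uafter_{i,d}=0$ for every agent) are all in order.
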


The proof of Proposition~\ref{prop:short-term-cap} is contained in
Appendix~\ref{app:short-term-cap}.
Note that the solution to this short-term problem for $T$ going to zero approaches the densest hemisphere solution
to the long-term problem discussed in
Section~\ref{sec:densest-hemisphere-intro}.

\section{Asymptotic effects of two dueling influencers: two randomized interventions polarize}
\label{sec:two-advertisers-intro}

Finally, we analyze a scenario where there are two influencers
with differing agendas, represented by different\footnote{We also assume that $v \neq -v'$, as otherwise the intervention effects are the same in our model.}
intervention vectors $v$ and $v'$.
We consider the \emph{randomized} setup, where at each time step, one of the
influencers is randomly chosen
to apply their intervention. We demonstrate that this setting also results, in
most cases and in a certain sense, in the polarization of agents. 

Recall that a convex cone of two vectors $v$ and $v'$
is the set $\{\alpha v + \beta v': \alpha, \beta\ge 0\}$.
A precise statement that we prove is:
\begin{theorem}
\label{thm:two-advertisers}
Let $\langle v,v'\rangle > 0$ and let a starting opinion $u^{(1)}$ be such that
$\langle u^{(1)}, v\rangle \ne 0$ or $\langle u^{(1)}, v'\rangle \ne 0$. Then,
as $t$ goes to infinity and almost surely,
either the Euclidean distance between $u^{(t)}$ and
the convex cone generated by $v$ and $v'$ or between 
$u^{(t)}$ and
the convex cone generated by $-v$ and $-v'$ goes to 0.
\end{theorem}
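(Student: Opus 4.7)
The plan is to reduce the theorem to a planar (two-dimensional) statement and then analyze that planar dynamic. Let $V := \mathrm{span}(v,v')$ and write $u^{(t)} = p^{(t)} + q^{(t)}$ with $p^{(t)}\in V$ and $q^{(t)}\in V^{\perp}$. Because every intervention lies in $V$, the perpendicular part evolves by a pure rescaling, $q^{(t+1)} = q^{(t)}/\|w^{(t+1)}\|$, and since $\|w^{(t+1)}\|\ge 1$ by~\eqref{eq:main3}, $\|q^{(t)}\|$ is non-increasing and $\|p^{(t)}\|$ is non-decreasing. The hypothesis on $u^{(1)}$ ensures $p^{(1)}\neq 0$, so $\|p^{(t)}\|\ge\|p^{(1)}\|>0$ for all $t$. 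A short calculation using $\langle u^{(t)},v^{(t)}\rangle = \langle p^{(t)},v^{(t)}\rangle$ shows that the normalized projection $\hat p^{(t)} := p^{(t)}/\|p^{(t)}\|$ evolves within $V$ by exactly the two-dimensional opinion dynamic of~\eqref{eq:main} driven by the same interventions $v$ and $v'$.

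The main task is then the 2D analysis. Choose coordinates in $V$ so that $v=(\cos\phi,\sin\phi)$ and $v'=(\cos\phi,-\sin\phi)$; the hypothesis $\langle v,v'\rangle>0$ gives $\phi\in(0,\pi/4)$, the cone of $v,v'$ corresponds to the arc $\theta\in[-\phi,\phi]$, and the cone of $-v,-v'$ to $\theta\in[\pi-\phi,\pi+\phi]$. Using $\hat p\mapsto -\hat p$ as a symmetry, it suffices to analyze one of the two transition regions, say $(\phi,\pi-\phi)$. On either arc both interventions keep $\hat p^{(t)}$ inside, so it suffices to show $\hat p^{(t)}$ almost surely converges in angle to one of the two arcs. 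I would partition the transition region according to the signs of $\langle\hat p,v\rangle$ and $\langle\hat p,v'\rangle$: on $(\phi,\pi/2-\phi)$ both are positive and both interventions strictly decrease $\theta$ toward the cone arc; on $(\pi/2+\phi,\pi-\phi)$ both are negative and both strictly increase $\theta$ toward the anti-cone arc; on the middle sub-interval $(\pi/2-\phi,\pi/2+\phi)$ the signs disagree, so applying $v$ strictly decreases $\theta$ while applying $v'$ strictly increases it, producing a non-trivial random walk. On each monotone sub-interval, $\theta_t$ is strictly monotone, and a continuity/compactness argument rules out convergence to an interior point, so $\theta_t$ either reaches the adjacent arc in finite time or converges to its boundary.

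The main obstacle is the middle sub-interval. This is the analog of the unstable region in Section~\ref{sec:random-polarization}, and I would handle it with a \emph{variance in the middle} argument in the spirit of Claim~\ref{cl:variance}: on any compact sub-interval of the middle region, the one-step probability of a step of at least some fixed magnitude in either direction is uniformly bounded below, forcing almost sure exit into a monotone sub-interval in finite time. Combining the stages, $\hat p^{(t)}$ almost surely converges in angle to the cone or the anti-cone in $V$; once it is close to either arc, $|\langle\hat p^{(t)},v^{(t)}\rangle|$ is bounded below by a positive constant, and the recursion $\|q^{(t+1)}\|^2 = \|q^{(t)}\|^2/(1+(2\eta+\eta^2)\|p^{(t)}\|^2\langle\hat p^{(t)},v^{(t)}\rangle^2)$ then forces geometric decay of $\|q^{(t)}\|$. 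By the triangle inequality, the Euclidean distance from $u^{(t)}$ to the appropriate convex cone in $\R^d$ is bounded by $\|q^{(t)}\|$ plus the in-plane distance of $p^{(t)}$ from that cone, both of which tend to zero.
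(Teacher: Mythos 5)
Your proposal is correct in outline and reaches the same conclusion, but it is organized differently from the paper's proof, and the comparison is instructive. The paper's argument has three pieces: Proposition~\ref{prop:two-advertisers-subspace} shows $\|u^{(t)}_W\|\to 0$ \emph{unconditionally}, using the observation that whenever $\|u_V\|\ge c>0$ at least one of $v,v'$ has $|\langle u_V,\vbar\rangle|$ bounded below (so each step shrinks $\|u_W\|$ with probability $1/2$); Proposition~\ref{prop:two-advertisers-convergence} shows the cones are absorbing for $u_V$; and Proposition~\ref{prop:two-advertisers-cvx-hull} shows that from any state there is a bounded-length intervention sequence, occurring with probability $2^{-T}$, that lands $u_V$ in a cone (converge to the nearest of $\pm v,\pm v'$, then apply the other intervention once). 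You instead observe that the \emph{normalized} projection $\hat p^{(t)}$ evolves autonomously by the exact within-$V$ copy of~\eqref{eq:main} with the same $\eta$ (this is correct, by homogeneity of the update in $u$, and is a cleaner reduction than the paper makes explicit), and you then classify the circle into absorbing arcs, two monotone transition regions, and a contested middle region where the two interventions pull in opposite directions. Your sign computations for the three regions check out for $\phi\in(0,\pi/4)$, and your escape from the middle region, though labelled ``variance in the middle,'' is in substance the same $2^{-k}$ bounded-escape-sequence argument as Proposition~\ref{prop:two-advertisers-cvx-hull} (note there is no martingale here, unlike in Section~\ref{sec:random-polarization}, so the Section~3 machinery does not literally apply --- what you need and what your mechanism delivers is a uniform lower bound on the probability of a fixed-length exiting block). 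Two things your route buys: the phase portrait of the planar dynamic (monotone funnels feeding the absorbing arcs) is more informative than the paper's single escape lemma, and the exact 2D reduction is reusable. What the paper's route buys: the unconditional decay of $\|u_W\|$ in Proposition~\ref{prop:two-advertisers-subspace} is needed again for Proposition~\ref{cor:orthogonal-advertisers} and Theorem~\ref{thm:contraction}, whereas your version only establishes geometric decay of $\|q^{(t)}\|$ after the angle has settled near an arc (which suffices for Theorem~\ref{thm:two-advertisers} since $\|q^{(t)}\|$ is at least non-increasing throughout, but would not directly cover the orthogonal case). If you write this up, do spell out the boundary points $\theta=\pi/2\pm\phi$ of the middle region, where one intervention acts trivially and the step-size lower bound for the other degenerates on one side; these are handled by noting the effective intervention is applied infinitely often almost surely.
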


In order to justify the assumptions of Theorem~\ref{thm:two-advertisers},
note that if an agent starts with
an opinion $u^{(1)}$ such that
\begin{align}\label{eq:04}
  \langle u^{(1)},v \rangle = \langle u^{(1)},v'\rangle=0\;,
\end{align}
applying $v$ or $v'$ never changes their opinion.
In Theorem~\ref{thm:two-advertisers} we show that if~\eqref{eq:04} does
not hold and,
additionally, $\left< v, v'\right> > 0$,
(if $\langle v,v'\rangle<0$ we can exchange $v'$ with
$-v'$ without changing the effects of any interventions),
the opinion vector with probability 1
ends up either converging to the convex cone generated by $v$ and $v'$
or the convex cone generated by $-v$ and $-v'$.
In particular, since vectors $u$ for which~\eqref{eq:04} holds form a
set of measure 0, if $n$ initial opinions are sampled iid from an absolutely
continuous distribution, almost surely all opinions converge to the convex cones (which are themselves sets of measure 0 for $d>2$).

Furthermore, this notion of polarization is strengthened if the correlation between the two interventions is large.
As in Theorem~\ref{thm:random-polarization}, the best we can hope for is that for each pair of opinions either the distance
between $u_1^{(t)}$ and $u_2^{(t)}$ or between 
$u_1^{(t)}$ and $-u_2^{(t)}$ converges to 0.
Letting $V:=\vecspan\{v,v'\}$ and $W:=V^{\perp}$ and
writing any vector $u$ as a sum of its respective projections
$u = u_V + u_W$, we show:

\begin{theorem}\label{thm:contraction}
  Suppose that $\langle v,v'\rangle > 1/\sqrt{2+\eta}$ and let $u_1^{(1)}, u_2^{(1)}$ be such that
  $(u_1^{(1)})_V\ne 0$, $(u_2^{(1)})_V\ne 0$. Then, almost surely,
  either $\|u_1^{(t)}-u_2^{(t)}\|$ converges to 0, or
  $\|u_1^{(t)}+u_2^{(t)}\|$ converges to 0.
\end{theorem}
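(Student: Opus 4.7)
The plan is to combine Theorem~\ref{thm:two-advertisers} with a direct contraction estimate on the squared correlation $r_t^2$, where $r_t := \langle u_1^{(t)}, u_2^{(t)}\rangle$. By Theorem~\ref{thm:two-advertisers} and the hypothesis $(u_i^{(1)})_V\ne 0$, almost surely each $u_i^{(t)}$ approaches either the cone $C^+ := \{\alpha v + \beta v' : \alpha,\beta\ge 0\}$ or $C^- := -C^+$. Since the update satisfies $f(-u,w) = -f(u,w)$, the sequence $-u_2^{(t)}$ is itself a valid trajectory under the same randomized interventions, so I may replace $u_2$ by $-u_2$ if necessary and assume both $u_1^{(t)}$ and $u_2^{(t)}$ approach the \emph{same} cone $C^+$. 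It then suffices to show $r_t \to 1$ almost surely; translating back, this yields one of the two conclusions of the theorem.

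I would next derive an exact formula for the one-step change of $1 - r_t^2$. Writing $a := \langle u_1^{(t)}, w\rangle$, $b := \langle u_2^{(t)}, w\rangle$, $c := 2\eta + \eta^2$, and letting $w_{\parallel}$ denote the projection of the chosen intervention $w\in\{v,v'\}$ onto $\Span\{u_1^{(t)}, u_2^{(t)}\}$, the pre-normalization identities already used in Section~\ref{sec:cl-martingale-d-3} give $\langle \uhat_1, \uhat_2\rangle = r_t + cab$ and $\|\uhat_i\|^2 = 1 + c\langle u_i^{(t)},w\rangle^2$. Together with the Gram-type identity $a^2 + b^2 - 2r_t ab = (1 - r_t^2)\|w_{\parallel}\|^2$ (verified by expanding $w_\parallel$ in an orthonormal basis of $\Span\{u_1^{(t)}, u_2^{(t)}\}$ in the nondegenerate case $r_t^2 < 1$), a short algebraic simplification yields
\[
  1 - r_{t+1}^2 \;=\; (1 - r_t^2)\cdot\frac{1 + c\|w_{\parallel}\|^2}{(1 + ca^2)(1 + cb^2)}\;.
\]

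Third, I would use the hypothesis $\cos\theta := \langle v, v'\rangle > 1/\sqrt{2+\eta}$, which is equivalent to $c\cos^2\theta > \eta$ (since $c = \eta(2+\eta)$), to bound the ratio asymptotically. Every unit vector in $C^+$ has inner product at least $\cos\theta$ with both $v$ and $v'$, so by Theorem~\ref{thm:two-advertisers} and continuity, for any $\eps'>0$ there is an almost surely finite time $T_0$ such that $\min(a^2,b^2) \ge \cos^2\theta - \eps'$ for every $t\ge T_0$ and \emph{every} choice of intervention $w\in\{v,v'\}$. Choosing $\eps'$ small compared to the gap $\cos^2\theta - 1/(2+\eta) > 0$, we obtain $ca^2, cb^2 \ge \eta + \delta$ for some $\delta>0$. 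Combining with $\|w_{\parallel}\|\le 1$ and $1 + c = (1+\eta)^2$ in the numerator, this gives the deterministic contraction
\[
  1 - r_{t+1}^2 \;\le\; \lambda\,(1 - r_t^2)\quad\text{for all }t\ge T_0,\qquad \lambda \;:=\; \frac{(1+\eta)^2}{(1+\eta+\delta)^2} \;<\; 1,
\]
so $1 - r_t^2 \to 0$ a.s. Since $\theta < \pi/2$ and both opinions approach $C^+ \cap S^{d-1}$, we have $r_t \ge \cos\theta - o(1) > 0$ eventually, ruling out $r_t \to -1$ and forcing $r_t \to 1$.

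The main obstacle will be the interface between the two ingredients: Theorem~\ref{thm:two-advertisers} only guarantees that the distance from $u_i^{(t)}$ to $C^+$ vanishes, not membership at any finite time, so one has to argue carefully that $\langle u_i^{(t)},v\rangle$ and $\langle u_i^{(t)},v'\rangle$ eventually exceed the \emph{sharp} threshold $1/\sqrt{2+\eta}$ with enough slack to yield a strict contraction factor $\lambda<1$, using only the strict gap in the hypothesis. A secondary subtlety is the degenerate case where $u_1^{(t)}$ and $u_2^{(t)}$ become nearly parallel and $\Span\{u_1^{(t)},u_2^{(t)}\}$ drops to one dimension (so $\|w_\parallel\| < 1$); this only shrinks the numerator and in any case points directly toward the desired conclusion $r_t^2 \to 1$, so it is handled by a short separate case analysis.
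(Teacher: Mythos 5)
Your proposal is correct, and it takes a genuinely different route from the paper. The paper works inside the two-dimensional subspace $V=\Span\{v,v'\}$, parametrizes unit vectors by their angles, and proves (Lemma~\ref{lem:fv-contraction}) that the induced angle maps $f_v,f_{v'}$ are contractions on the cone, via the explicit pull function~\eqref{eq:pull_function}, elementary calculus and a compactness argument; it then needs a separate uniform-continuity argument for the extensions $g_{\vbar}$ to handle opinions with nonzero $W$-component. You instead track the single scalar $1-r_t^2$ with $r_t=\langle u_1^{(t)},u_2^{(t)}\rangle$ and derive the exact identity
$1-r_{t+1}^2=(1-r_t^2)\cdot\frac{1+c\|w_\parallel\|^2}{(1+ca^2)(1+cb^2)}$,
which checks out: the numerator of $1-r_{t+1}^2$ equals $1-r_t^2+c\left(a^2+b^2-2r_tab\right)$, and your Gram identity $a^2+b^2-2r_tab=(1-r_t^2)\|w_\parallel\|^2$ is correct for $r_t^2<1$. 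This identity lives in ambient $\R^d$, so no projection or continuity step is needed, and the hypothesis enters sharply: with $a^2,b^2\ge\cos^2\theta$ and $\|w_\parallel\|\le 1$ the ratio is below $1$ exactly when $(1+c\cos^2\theta)^2>1+c$, which is equivalent to $\cos^2\theta>1/(2+\eta)$ --- the same threshold $\theta^*$ at which the paper's pull function loses monotonicity, so the provenance of the constant $1/\sqrt{2+\eta}$ becomes transparent. Your reduction to the ``same cone'' case via $f(-u,w)=-f(u,w)$, the use of Theorem~\ref{thm:two-advertisers} to obtain the eventual lower bound on $a^2,b^2$ (with slack supplied by the strict inequality in the hypothesis; note the paper's proof actually gives that the projections \emph{enter} the cone at a finite time, which makes this step even easier), and the exclusion of $r_t\to-1$ via $r_t\ge\cos\theta-o(1)>0$ are all sound; the degenerate case $r_t^2=1$ is settled by Claim~\ref{claim:antipodal}. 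What each approach buys: the paper's angle-contraction lemma gives geometric insight into the one-dimensional dynamics on the cone and reuses machinery already set up for Theorem~\ref{thm:two-advertisers}, while your identity is shorter, avoids the omitted calculus computations and the compactness-plus-continuity layer, and isolates the threshold cleanly. Both arguments rely on Theorem~\ref{thm:two-advertisers} in the same way.
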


In other words, the stronger notion of convergence, same as in Section \ref{sec:random-polarization} with uniformly drawn random interventions, reappears in case the correlation between two interventions $v$ and $v'$
is larger than $1/\sqrt{2+\eta}$.
In particular, we have strong convergence
for any $\eta>0$ and 
$\langle v,v'\rangle\ge\sqrt{2}/2\approx 0.71$,
and for $\eta=1$ for
$\langle v,v'\rangle>\sqrt{3}/3\approx 0.58$.
Our experiments suggest that this convergence occurs also for other non-zero values
of the correlation $\langle v,v'\rangle$, but we do not prove it here.

Also note that same in spirit as Remark 3.1, the usual argument from symmetry shows that if the
initial opinions are independent samples from a symmetric
distribution, then with high probability the opinions divide
into two clusters of roughly equal size.

\medskip

The case when $v$ and $v'$ are orthogonal is
different. As we mentioned, if $\langle v,v'\rangle>0$, i.e.,
the angle between $v$ and $v'$ is less than $\pi/2$,
then all opinions converge to the two ``narrow'' convex cones,
respectively between $v$ and $v'$ and between $-v$ and $-v'$ 
--- namely, the pairs of vectors among $v, v', -v$, and $-v'$ between which there are acute angles.
Similarly, if $\langle v,v'\rangle<0$, 
then the opinions converge to two cones
between $v$ and $-v'$ and between $-v$ and $v'$.
In case $\langle v,v'\rangle=0$ the four convex cones
form right angles, so such a result is not possible.

However, we can still show that
an initial opinion $u^{(1)}$ converges to the same quadrant in which it starts
with respect to $v$ and $v'$. 
Namely, for all $t$, we have that
$\sgn\left(\left< u^{(t)}, v\right>\right) = \sgn\left( \left\langle u^{(1)}, v\right\rangle \right)$ and $\sgn\left(\left\langle u^{(t)}, v'\right\rangle \right) = \sgn\left(\left\langle u^{(1)}, v'\right\rangle\right)$, and furthermore the distance between
$u^{(t)}$ and the subspace $V$ goes to 0 with $t$:

\begin{proposition}
\label{cor:orthogonal-advertisers}
Suppose that $\langle v, v'\rangle = 0$ and let an initial opinion $u^{(1)}$ be such
that $\langle u^{(1)}, v\rangle\ne 0$ \emph{and}
$\langle u^{(1)}, v'\rangle\ne 0$. Then, almost surely, the following facts hold:
\begin{enumerate}
\item $\|u^{(t)}_{W}\| \to 0$ as $t \to \infty$.
\item For all $t$, $\sgn\left(\left< u^{(t)}, v\right>\right) = \sgn\left( \left\langle u^{(1)}, v\right\rangle \right)$ and $\sgn\left(\left\langle u^{(t)}, v'\right\rangle \right) = \sgn\left(\left\langle u^{(1)}, v'\right\rangle\right)$.
\end{enumerate}
\end{proposition}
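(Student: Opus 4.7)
The plan is to dispose of Part~2 by direct computation, and then to establish Part~1 by combining monotonicity of $\|u^{(t)}_W\|$ with a conditional Borel--Cantelli argument applied to the random intervention choices.

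For Part~2, when $v$ is applied the opinion becomes $\widetilde{u} = w/\|w\|$ with $w = u + \eta\langle u,v\rangle v$; using $\langle v,v'\rangle = 0$, we get
\begin{equation*}
\langle \widetilde{u}, v\rangle \;=\; \frac{(1+\eta)\langle u,v\rangle}{\|w\|}, \qquad \langle \widetilde{u}, v'\rangle \;=\; \frac{\langle u,v'\rangle}{\|w\|}.
\end{equation*}
Since $\|w\| > 0$ and $1+\eta > 0$, both signs are preserved. The case of $v'$ is symmetric, and Part~2 follows by induction on $t$.

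For Part~1, I observe that the update $w - u^{(t)}$ always lies in $V = \Span\{v,v'\}$, so the $W$-component of $w$ equals $u^{(t)}_W$. Setting $r_t := \|u^{(t)}_W\|$, $c := 2\eta+\eta^2$, $X_t := \langle u^{(t)},v\rangle^2$, and $Y_t := \langle u^{(t)},v'\rangle^2$, the identity~\eqref{eq:main3} yields
\begin{equation*}
r_{t+1} \;=\; \frac{r_t}{\sqrt{1 + c\, A_t}}, \qquad A_t \in \{X_t,\, Y_t\},
\end{equation*}
depending on which of $v,v'$ is applied at step $t$. In particular $r_t$ is non-increasing and converges almost surely to some $r_\infty \in [0, r_1]$; note also that the hypotheses $\langle u^{(1)},v\rangle\ne 0$ and $\langle u^{(1)},v'\rangle\ne 0$ give $r_1 < 1$. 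It remains to show $r_\infty = 0$ a.s.

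Suppose for contradiction that $\Pr[r_\infty > 0] > 0$. By Pythagoras, $X_t + Y_t + r_t^2 = 1$, so on $\{r_\infty > 0\}$ we have $X_t + Y_t \to 1 - r_\infty^2 \geq 1 - r_1^2 > 0$. Setting $\alpha := (1-r_1^2)/4$, this forces $\max(X_t, Y_t) \geq \alpha$ eventually on this event, so after possibly swapping $v$ and $v'$ we obtain a positive-probability sub-event on which $X_t \geq \alpha$ infinitely often. At each such time $t$, the intervention $v$ is chosen with conditional probability $1/2$ given the past; by the conditional second Borel--Cantelli lemma (L\'evy's extension), a.s.~on this sub-event there are infinitely many $t$ at which $v$ is applied while $X_t \geq \alpha$, each producing $|\log r_{t+1} - \log r_t| \geq \tfrac{1}{2}\log(1 + c\alpha) > 0$. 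This contradicts convergence of $\log r_t$ to the finite limit $\log r_\infty$, so $r_\infty = 0$ a.s. The main technical care required is the ``WLOG swap'' and the correct conditional formulation of Borel--Cantelli, but beyond this bookkeeping no new ideas are needed.
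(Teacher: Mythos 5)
Your proof is correct and follows essentially the same route as the paper: Part 1 rests on the monotone decrease of $\|u^{(t)}_W\|$ together with the observation that, with conditional probability $1/2$ at each step, a contraction by a factor bounded away from $1$ occurs (the paper packages this as Propositions 5.1 and 5.2 for general $\langle v,v'\rangle\ge 0$ and cites them; you re-derive the orthogonal case via $X_t+Y_t=1-r_t^2$ and an explicit L\'evy/Borel--Cantelli argument), and Part 2 is the same sign-preservation-plus-induction argument, which you carry out by direct computation of $\langle \widetilde{u},v\rangle$ and $\langle \widetilde{u},v'\rangle$ rather than via the paper's conical-combination language. No gaps; your version is merely more self-contained and explicit.
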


Fascinatingly, Gaitonde, Kleinberg and Tardos~\cite{GKT21} 
showed subsequently to our initial preprint
that strong polarization does not occur for orthogonal interventions.
Specifically, they proved that two opinions in $S^{d-1}$ with
random interventions chosen iid from the standard basis
$\{e_1,\ldots,e_d\}$ do not polarize in the sense of
$u_1^{(t)}-u_2^{(t)}$ or $u_1^{(t)}-u_2^{(t)}$ vanishing,
but they do exhibit a weaker form of polarization.
We refer to their paper for more details.

\medskip

In order to prove Theorem~\ref{thm:two-advertisers},
we first show that the distance between $u^{(t)}$ and $V$ almost surely goes to 0 as
$t \to \infty$, by showing that the norm of the projection of $u^{(t)}$
onto $W$ converges to 0.
Then, we demonstrate that the convex cone spanned by $v$ and $v'$ is absorbing:
when the projection of $u^{(T)}$ onto $V$ falls in the cone,
then the projections of $u^{(t)}$ for $t \geq T$ always stay in the cone
as well. 

Finally, 
we show that 
almost surely
the projection of $u^{(t)}$ onto $V$ eventually enters either
the cone spanned by $v$ and $v'$, or the cone spanned by $-v$ and $-v'$.
More concretely, we show that at any time $t$, there is a sequence of $T$
interventions that lands the projection of $u^{(t+T)}$ in one of the cones,
for some $T$ that is independent of $t$.
Since this sequence occurs with probability $2^{-T}$, which is 
independent of $t$, the opinion almost surely eventually enters one of the cones.


\subsection{Proofs of Theorem~\ref{thm:two-advertisers}
and Proposition~\ref{cor:orthogonal-advertisers}}

We start with the fact the opinions converge to the subspace $V$ spanned
by the two intervention vectors.
Recall that $V=\vecspan\{v,v'\}$ and that $W=V^{\perp}$. In the following we will
write $\langle v,v'\rangle=\cos\theta$
for $0<\theta\le\pi/2$.

\begin{proposition}
\label{prop:two-advertisers-subspace}
Let $\langle v,v'\rangle\ge 0$ and take an opinion vector $u$ such that
$\|u_V\| = c \ge 0$. Furthermore, let $\uafter$ be the vector resulting from
randomly intervening on $u$ with either $v$ or $v'$.
Then:
\begin{enumerate}
\item $\|\uafter_W\|^2 \leq \|u_W\|^2$.
\item With probability at least 1/2,
  $\|\uafter_{W}\|^2 \leq \|u_{W}\|^2 \cdot (1 - \xi)$,
  where
  \begin{align*}
  \xi = \min\left(\frac{1}{2},
  (\eta+\eta^2/2) \cdot \frac{c^2\theta^2}{16}
  \right)\;.
  \end{align*}
\end{enumerate} 
\end{proposition}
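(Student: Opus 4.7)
The plan is to decompose the update into its $V$- and $W$-components and use the fact that, since both interventions lie in $V$, the $W$-component changes only through the normalization step. Write $w := u + \eta\langle u, v^{(t)}\rangle v^{(t)}$ for the unnormalized post-intervention vector. Because $v^{(t)} \in V$, we have $w_W = u_W$, and by~\eqref{eq:main3}, $\|w\|^2 = 1 + (2\eta+\eta^2)\langle u, v^{(t)}\rangle^2 \geq 1$. Since $\uafter = w/\|w\|$, this gives $\|\uafter_W\|^2 = \|u_W\|^2/\|w\|^2 \leq \|u_W\|^2$, which is part (1), and it holds deterministically.

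For part (2), set $s := (2\eta+\eta^2)\langle u, v^{(t)}\rangle^2$, so that $\|\uafter_W\|^2/\|u_W\|^2 = 1/(1+s)$. A routine estimate shows $\frac{1}{1+s} \leq 1 - \min\!\left(\tfrac12,\, \tfrac{s}{2}\right)$ for all $s \geq 0$: when $s \leq 1$ use $s/(1+s) \geq s/2$, and when $s \geq 1$ use $1/(1+s) \leq 1/2$. It therefore suffices to show that, with probability at least $1/2$ over the random choice of $v^{(t)} \in \{v, v'\}$, we have $s \geq (2\eta+\eta^2) c^2 \theta^2/16$, which yields $\xi \geq \min\!\bigl(1/2,\, (\eta+\eta^2/2)c^2\theta^2/16\bigr)$ as stated.

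The heart of the proof is the planar inequality
\begin{align*}
\max\bigl(\langle u, v\rangle^2,\; \langle u, v'\rangle^2\bigr) \;\geq\; c^2\sin^2(\theta/2) \;\geq\; \frac{c^2\theta^2}{16}\;,
\end{align*}
where $\theta \in (0,\pi/2]$. Since $\langle u, v\rangle = \langle u_V, v\rangle$ and similarly for $v'$, this reduces to a two-dimensional fact: among unit vectors $v, v'$ in a plane separated by angle $\theta$, the length-$c$ vector $u_V$ minimizing $\max(|\langle u_V, v\rangle|, |\langle u_V, v'\rangle|)$ must be perpendicular to the angle bisector of $v$ and $v'$, in which case both inner products have absolute value $c\sin(\theta/2)$. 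The final inequality $\sin(\theta/2) \geq \theta/4$ for $\theta \in [0,\pi/2]$ follows from concavity of $\sin$ on $[0,\pi/4]$. Since $v^{(t)}$ is uniform in $\{v, v'\}$, with probability at least $1/2$ we pick the intervention realizing the max, which completes the argument.

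I expect the main obstacle to be the planar trigonometric estimate rather than the algebra; the reduction of $\|\uafter_W\|^2$ to $\|u_W\|^2/\|w\|^2$ is immediate from $v^{(t)} \in V$, and converting a lower bound on $\langle u, v^{(t)}\rangle^2$ into a contraction factor is a short analysis of $1/(1+s)$. The only place where both the $c$ and $\theta$ dependence enter nontrivially is in lower-bounding the maximum over the two possible interventions, and this is where the symmetric worst-case configuration (perpendicular to the bisector) is used.
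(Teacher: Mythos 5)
Your proposal is correct and follows essentially the same route as the paper: part (1) via $\uafter_W = u_W/\|w\|$ with $\|w\|\ge 1$, and part (2) via the planar observation that at least one of $v,v'$ satisfies $|\langle u_V,\vbar\rangle|\ge c\sin(\theta/2)\ge c\theta/4$, combined with the elementary bound $1/(1+s)\le\max(1/2,\,1-s/2)$ (which is identical to your $1-\min(1/2,s/2)$). The only cosmetic difference is that you phrase the geometric step as a worst-case minimization over $u_V$ (perpendicular to the bisector) while the paper states it directly as an angle bound, so there is nothing substantive to flag.
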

\begin{proof}
  Recall from~\eqref{eq:main}--\eqref{eq:main3}
  that if $\vbar\in\{v,v'\}$ is the intervention vector,
  then
\[ \uafter = k (u + \eta\left\langle u,\vbar\right\rangle\cdot \vbar)\]
where $k=\sqrt{\frac{1}{1+(2\eta + \eta^2)\cdot\langle u,\vbar\rangle^2}}$ is the normalizing constant.
Observe that when we project onto $W$, the component in the direction of $\vbar$ vanishes, so we have that
\[ \uafter_{W} = k \cdot u_{W}\;,\]
and the first claim easily follows since $k\le 1$.

To establish the second point, we need to show that with probability 1/2
we have $k^2 < 1$ or, equivalently,
$\langle u,\vbar\rangle^2=\langle u_V,\vbar\rangle^2 > 0$.
Since $\theta \ne 0$, the projected vector $u_V$ cannot be orthogonal both
to $v$ and $v'$ (cf.~Figure~\ref{fig:angles}).
More precisely, for at least one of $\vbar\in\{v,v'\}$ the primary angle between
$u_V$ and $\vbar$ (or $-\vbar$) must be at most $\pi/2-\theta/2$ and consequently
\[
  \left|\left\langle u_V,\vbar\right\rangle\right| \geq \|u_V\|
  \cdot |\cos(\pi/2 - \theta/2)| \geq c \cdot \theta / 4 \;,
\]
resulting in
\[
  k^2=
  \frac{1}{1+(2\eta + \eta^2)\cdot\langle u_V, \vbar\rangle^2}
  \le \max\left(\frac{1}{2},
  1-(\eta+\eta^2/2) \cdot 
  \frac{c^2\theta^2}{16}\right)\;.
  \qedhere
\]
\end{proof}

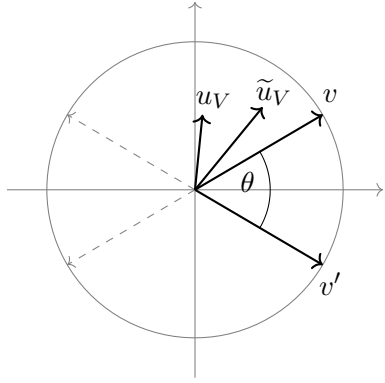
\begin{figure}[ht]\centering
  \begin{tikzpicture}
    \draw [->, color=gray] (-2.5, 0) -- (2.5, 0);
    \draw [->, color=gray] (0, -2.5) -- (0, 2.5);
    \draw [color=gray, domain=0:360, samples=100] plot ({1.97*cos(\x)}, {1.97*sin(\x)});
    
    \draw [thick, ->] (0, 0) -- (1.7, 1); 
    \draw [thick, ->] (0, 0) -- (1.7, -1);
    \draw [color=gray, dashed, ->] (0, 0) -- (-1.7, 1); 
    \draw [color=gray, dashed, ->] (0, 0) -- (-1.7, -1);
    \node at (1.8, 1.25) {$v$};
    \node at (1.8, -1.25) {$v'$};
    \draw [thick, ->] (0, 0) -- (0.1, 1);
    \node at (0.25, 1.2) {$u_V$};
    \draw [thick, ->] (0, 0) -- (0.9, 1.1);
    \node at (1.05, 1.3) {$\uafter_V$};

    \draw [domain = -31:31] plot ({cos(\x)}, {sin(\x)});
    \node at (0.7, 0.1) {$\theta$};
  \end{tikzpicture}
  \caption{Projection onto the subspace $V=\vecspan\{v,v'\}$.}
  \label{fig:angles}
\end{figure}

Next, we show that the convex cone of vectors
$v$ and $v'$ is absorbing:

\begin{proposition}
\label{prop:two-advertisers-convergence}
Let $\left\langle v, v' \right\rangle \geq 0$ and take $u$ to be an opinion vector
and $\uafter$ to be a vector resulting from intervening on $u$ with either $v$ or
$v'$. If $u_V$ is a conical combination of $v$ and $v'$, then also $\uafter_V$ is such
a conical combination.
\end{proposition}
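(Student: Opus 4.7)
The plan is to reduce the proposition to a direct coefficient computation by exploiting the fact that projecting onto $V$ commutes nicely with an intervention in direction $\vbar \in \{v, v'\} \subseteq V$. Throughout I will use that for any $\vbar \in V$, $\langle u, \vbar \rangle = \langle u_V, \vbar \rangle$, and that $u \mapsto u + \eta \langle u, \vbar \rangle \vbar$ preserves the orthogonal decomposition with respect to $V$ and $W$.

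First, I would fix an intervention, say with $\vbar = v$ (the case $\vbar = v'$ is symmetric). By~\eqref{eq:main} and the observation above,
\begin{align*}
    \uafter_V \;\propto\; u_V + \eta\, \langle u_V, v\rangle\, v\;,
\end{align*}
where the proportionality constant $1/\|w\|$ is strictly positive. So it suffices to show that $u_V + \eta\,\langle u_V, v\rangle\, v$ is a conical combination of $v$ and $v'$ whenever $u_V$ is.

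Second, I would write $u_V = \alpha v + \beta v'$ with $\alpha, \beta \geq 0$ (this uses the hypothesis). Plugging in and using $\langle v, v\rangle = 1$ and $\langle v', v\rangle = \cos\theta \geq 0$ gives
\begin{align*}
    u_V + \eta\,\langle u_V, v\rangle\, v \;=\; \bigl(\alpha + \eta(\alpha + \beta\cos\theta)\bigr)\, v \;+\; \beta\, v'\;.
\end{align*}
The coefficient of $v'$ equals $\beta \geq 0$, and since $\alpha, \beta \geq 0$ and $\cos\theta \geq 0$, the coefficient of $v$ is also nonnegative. Hence $\uafter_V$ is a positive scalar times a conical combination of $v$ and $v'$, and therefore is itself a conical combination. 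The case $\vbar = v'$ is handled identically after swapping roles.

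There is no real obstacle here; the only subtle point is making sure that $\{v, v'\}$ is linearly independent so that the representation $u_V = \alpha v + \beta v'$ is unambiguous. This is guaranteed by the assumption $v \neq \pm v'$ stated at the beginning of Section~\ref{sec:two-advertisers-intro}, together with $\langle v, v'\rangle \geq 0$ (which rules out $v = -v'$ anyway). If one wanted to be fully explicit, one could also verify that the nonnegativity of the coefficients is strict enough to be preserved under the normalization step, but since scaling by a positive constant preserves conical combinations, nothing further is needed.
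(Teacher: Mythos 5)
Your proof is correct and follows essentially the same route as the paper's: both reduce to the observation that the intervention adds a nonnegative multiple of $v$ (or $v'$) to $u_V$, with nonnegativity of $\langle u_V,v\rangle$ coming from the conical hypothesis together with $\langle v,v'\rangle\ge 0$. Your explicit coefficient computation is a harmless elaboration of the paper's statement that $\uafter_V$ is a nonnegative combination of $u_V$ and $v$; the linear-independence worry is unnecessary, since a conical combination only requires the existence of some nonnegative coefficients, not their uniqueness.
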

\begin{proof}
  Assume wlog that the vector applied is $v$ and
  let $k$ be the same constant as in the proof of
  Proposition~\ref{prop:two-advertisers-subspace}.
  Then,
  \[ \uafter/k =  u + \eta \cdot \left\langle u, v\right\rangle\cdot v = u_V +
    \eta \cdot \left\langle u_V, v\right\rangle\cdot v + u_W\;.\]
  Therefore,
  $\uafter_V$ can be written as a nonnegative linear combination of $u_V$ and $v$, where we use the fact that $\left\langle u_V, v\right\rangle $ is nonnegative, which follows since $u_V$ is a conical combination of $v$ and $v'$, and $\left\langle v, v' \right\rangle \geq 0$.
\end{proof}

Next, we prove that when $\langle v,v' \rangle>0$, the
opinion $u^{(t)}$ not only approaches subspace $V$, but also
a specific area of $V$, namely, either cone$(v,v')$ or cone$(-v,-v')$.

\begin{proposition}
\label{prop:two-advertisers-cvx-hull}
Let $\left\langle v, v'\right\rangle > 0$ and consider a vector $u^{(t)}$
such that $\|u^{(t)}_{V}\|\ge c > 0$. Then, there exists $T := T(c, \theta, \eta)$
such that with probability at least $2^{-T}$,
vector $u_{V}^{(t+T)}$ will either be a conical combination of $v$ and $v'$
\emph{or} a conical combination of $-v$ and $-v'$.
\end{proposition}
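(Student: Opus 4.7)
The plan is to exhibit, for every initial opinion $u^{(t)}$ with $\|u^{(t)}_V\|\ge c$, a deterministic sequence of at most $T=T(\theta,\eta)$ interventions from $\{v,v'\}$ whose effect places $u_V^{(t+T)}$ inside one of the two target cones. Since every fixed length-$T$ sequence is realised by the random process with probability exactly $2^{-T}$, this will yield the claim. (The $c$-dependence in the statement turns out to be vacuous: the angular dynamics within $V$ does not depend on $\|u_V\|$.)

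First I would reduce to two dimensions. Since each intervention vector lies in $V$, a brief computation gives $\uafter_V = k(u_V + \eta\langle u_V,\vbar\rangle\vbar)$ for $\vbar\in\{v,v'\}$, so the direction of $u_V$ in $V$ evolves exactly under the 2D opinion-update rule on the unit circle of $V$. Fix coordinates so that $v$ is at angle $0$ and $v'$ at angle $\theta\in(0,\pi/2)$, and let $\phi\in[0,2\pi)$ denote the angle of $u_V$; the target cones are the arcs $[0,\theta]$ and $[\pi,\pi+\theta]$. Writing $F_v$ and $F_{v'}$ for the angular maps, one has on each basin the contraction $\tan F_v(\phi)=\tan\phi/(1+\eta)$, with $F_v$'s attractors $\{0,\pi\}$ and repellers $\{\pi/2,3\pi/2\}$; similarly $F_{v'}$ has attractors $\{\theta,\theta+\pi\}$ and repellers $\{\theta+\pi/2,\theta+3\pi/2\}$. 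Crucially the two repeller sets are disjoint, at angular distance $\theta$.

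Next I would establish two continuity lemmas. \emph{Near attractors:} a direct trigonometric calculation gives $F_{v'}(0)=\arctan\bigl(\eta\sin\theta\cos\theta/(1+\eta\cos^2\theta)\bigr)\in(0,\theta)$ and similarly $F_{v'}(\pi)\in(\pi,\pi+\theta)$, so by continuity there exists $\delta_0=\delta_0(\theta,\eta)>0$ with $F_{v'}([-\delta_0,\delta_0])\subseteq[0,\theta]$ and $F_{v'}([\pi-\delta_0,\pi+\delta_0])\subseteq[\pi,\pi+\theta]$. \emph{Near the repellers of $F_v$:} since $\theta\not\equiv 0\pmod\pi$, one checks $F_{v'}(\pi/2)\ne\pi/2$ and $F_{v'}(3\pi/2)\ne 3\pi/2$, so by compactness of the closed $\theta/4$-neighborhood of $\{\pi/2,3\pi/2\}$ and continuity of $F_{v'}$ there exists $c_1=c_1(\theta,\eta)>0$ such that every $\phi$ within $\theta/4$ of an $F_v$-repeller is sent by $F_{v'}$ to distance at least $c_1$ from both $F_v$-repellers. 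The three-phase strategy is then: (i) if $\phi_0$ lies within $\theta/4$ of $\{\pi/2,3\pi/2\}$, apply $F_{v'}$ once, otherwise skip (the result $\phi_1$ is at distance at least $\min(\theta/4,c_1)$ from both $F_v$-repellers, hence $|\tan\phi_1|\le M(\theta,\eta)$ for an explicit $M$); (ii) apply $F_v$ for $n_1=\lceil\log(M/\tan\delta_0)/\log(1+\eta)\rceil$ steps, using the contraction to land $\phi$ within angular distance $\delta_0$ of whichever of $\{0,\pi\}$ is in its basin; (iii) apply $F_{v'}$ once, entering the target cone by the attractor lemma. The total length is $T(\theta,\eta)\le n_1+2$, uniform over the starting $\phi_0$.

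The main obstacle will be the repeller continuity lemma: establishing that applying $F_{v'}$ to an angle arbitrarily close to $\pi/2$ or $3\pi/2$ still moves it away by a uniform positive amount $c_1(\theta,\eta)$. The underlying reason is simply that $\pm\pi/2$ is not a fixed point of $F_{v'}$ when $\theta\not\equiv 0\pmod\pi$ and $\eta>0$, but promoting this pointwise non-stationarity to a uniform bound needs compactness of a closed neighborhood together with the explicit trigonometric formula for $F_{v'}$.
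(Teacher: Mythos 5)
Your overall plan---passing to the scale-invariant angular dynamics on the unit circle of $V$ and exhibiting one deterministic word of bounded length $T(\theta,\eta)$ that drives every starting angle into a target arc---is sound in outline, and your observation that the $c$-dependence is unnecessary for the angular part is correct. The paper's proof differs at exactly the point you flag as the main obstacle: it chooses the \emph{first} contraction target adaptively among $v,v',-v,-v'$ (whichever is within angle $\pi/2-\theta/2$ of $u_V$), so the iterate starts at angular distance at least $\theta/2$ from the relevant repeller and no escape lemma is needed; it then applies the other intervention once to enter the cone.

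The repeller escape lemma, as you state it, is false, and the compactness argument you sketch does not prove it. Knowing $F_{v'}(\pi/2)\ne\pi/2$ and $F_{v'}(3\pi/2)\ne 3\pi/2$ gives positivity of $\mathrm{dist}\bigl(F_{v'}(\phi),\{\pi/2,3\pi/2\}\bigr)$ only at the two centre points; to extract a uniform $c_1>0$ on the closed $\theta/4$-neighborhood by compactness you need this distance to be positive at \emph{every} point of that neighborhood, i.e.\ the neighborhood must avoid the preimage $F_{v'}^{-1}(\{\pi/2,3\pi/2\})$. It need not. Explicitly, $F_{v'}^{-1}(\pi/2)=\theta+\arctan\bigl((1+\eta)\cot\theta\bigr)$, whose distance to $\pi/2$ equals $\arctan\bigl((1+\eta)\cot\theta\bigr)-\arctan(\cot\theta)\approx\frac{\eta}{1+\eta}\,\theta$ for small $\theta$; whenever $\eta<1/3$ this is below $\theta/4$ (e.g.\ $\eta=0.1$, $\theta=0.1$ gives distance $\approx 0.009<0.025$). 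So the $\theta/4$-neighborhood of $\pi/2$ contains a point that $F_{v'}$ maps exactly onto $\pi/2$, which is a fixed point of $F_v$, and your phases (ii)--(iii) then make no progress; even away from this exact preimage, $|\tan\phi_1|$ is unbounded over the neighborhood, so no finite $M(\theta,\eta)$ exists for phase (ii). The gap is repairable: replace $\theta/4$ by a radius $r(\theta,\eta)>0$ small enough that the closed $r$-neighborhood of $\{\pi/2,3\pi/2\}$ is disjoint from the two-point set $F_{v'}^{-1}(\{\pi/2,3\pi/2\})$ (possible since that set lies at positive distance from $\{\pi/2,3\pi/2\}$), after which your compactness argument does yield a uniform $c_1$; or simply adopt the paper's adaptive choice of the first map to iterate.
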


\begin{proof}
  First, for any vector $u^{(t)}$ such that $\|u^{(t)}_V\| \ge c > 0$,
  at least one of $v,v',-v,-v'$ has positive inner product with
  $u^{(t)}$ (and $u^{(t)}_V$) which can be lower bounded by
  a function of $c$ and $\theta$ (see Figure~\ref{fig:angles}). Take such a vector and call it $\vbar$.
  By the argument from Proposition~\ref{prop:two-advertisers-subspace}, applying $\vbar$ repeatedly will bring $u^{(t+T)}$ arbitrarily close to it.
  More precisely, for every $\eps>0$, there exists
  $T_1=T_1(c,\theta,\eta,\eps)$ such that
  $\|u_V^{(t+T_1)}-\vbar\|<\eps$
  and $\|u^{(t+T_1)}-\vbar\|<\eps$ both hold.
  
  Furthermore, since $\langle v,v'\rangle>0$,
  there exists $\eps>0$ such that if  
  $\|u^{(t)}-\vbar\|<\eps$, then applying the other 
  intervention vector ($v$ or $v'$) once guarantees that
  $u_V^{(t+1)}$ enters the convex cone between $v$ and $v'$
  or, respectively, between $-v$ and $-v'$.
  In particular, if $u_V^{(t)}$ already is in the convex cone,
  then applying either intervention will keep it inside
  by Proposition~\ref{prop:two-advertisers-convergence}.
  On the other hand, if $u_V^{(t)}$ is not yet in the cone,
  but at the distance at most $\eps$ to $\vbar$, then applying the other 
  intervention will bring it inside the cone
  (see Figure~\ref{fig:angles}).
  
  Therefore, there exists a sequence of $T(c,\theta, \eta)=T_1+1$ interventions that make $u_V^{(t+T)}$ enter cone$(v,v')$ or cone$(-v,-v')$. Clearly, this sequence occurs with probability
  $2^{-T}$.
\end{proof}

We are now ready to prove Theorem~\ref{thm:two-advertisers}.

\begin{proof}[Proof of Theorem~\ref{thm:two-advertisers}]
  Let $\|u^{(1)}_V\| = c > 0$.
  Proposition~\ref{prop:two-advertisers-subspace} tells us that the squared
  norm of the projection $u^{(t)}_W$ onto subspace $W = V^\perp$ never increases,
  and with  probability 1/2 decreases by the multiplicative factor
  $1-\xi(c,\eta,\theta)<1$.
  By induction (note that $\xi$ increases with $c$), $u^{(t)}_W$ converges to 0, and
  consequently $\|u^{(t)}-u^{(t)}_V\|$ converges to 0, almost surely. 

  In order to show that convergence to one of the two convex cones
  occurs, we apply Proposition~\ref{prop:two-advertisers-cvx-hull}.
  Since at \emph{any time step} $t$, there exists a sequence of
  $T$ choices that puts $u^{(t+T)}_V$ in one of the convex cones, and
  since $T$ depends only on the starting parameters $c$, $\theta$, and $\eta$,
  we get that $u^{(t)}_V$ almost surely eventually enters one of the cones.
  By Proposition~\ref{prop:two-advertisers-convergence} and induction,
  once $u^{(t)}_V$ enters a convex cone, it never leaves.
\end{proof}

Proposition~\ref{cor:orthogonal-advertisers} follows
as a corollary of Propositions~\ref{prop:two-advertisers-subspace} and~\ref{prop:two-advertisers-convergence}:

\begin{proof}[Proof of Proposition~\ref{cor:orthogonal-advertisers}]
  The first statement is an inductive application of
  Proposition~\ref{prop:two-advertisers-subspace}, exactly the same as in the
  proof of Theorem~\ref{thm:two-advertisers}.
  
  The second statement follows from noting that out of four orthogonal pairs
  of vectors $\{v, v'\}$, $\{ v, -v'\}$, $\{-v, v'\}$, or $\{-v, -v'\}$, there
  is exactly one such that $u^{(1)}_V$ is a (strict) conical combination of this pair
  (by assuming $\langle u^{(1)},v\rangle\ne 0$ and $\langle u^{(1)},v'\rangle\ne 0$
  we avoid ambiguity in case $u^{(1)}_V$ is parallel to $v$ or $v'$).
  By the same argument as in Proposition~\ref{prop:two-advertisers-convergence}
  and by induction, if the initial projection $u^{(1)}_V$ is strictly inside one
  of the convex cones, the projection
  $u^{(t)}_V$ remains strictly inside forever.
\end{proof}

\subsection{Proof of Theorem~\ref{thm:contraction}}
\label{sec:appendix_pull_strength}


  Consider the subspace $V=\vecspan\{v,v'\}$ with some coordinate system
  (cf.~Figure~\ref{fig:angles}) imposed on it.
  As is standard, a unit vector $u \in V$ can be
  represented in this system by its angle $\alpha(u)\in[0,2\pi)$
  as measured counterclockwise from the positive $x$-axis.

  Given a unit vector $\vbar\in V$, let
  $f_{\vbar}:[0, 2\pi)\to[0,2\pi)$ be the function with the following meaning:
  given a unit vector $u \in V$ with angle $\alpha = \alpha(u)$,
  the value $f_{\vbar}(\alpha) = \alpha(\uafter)$ represents the angle of vector $\uafter$
  resulting from applying intervention $\vbar$
  to vector $u$. Note that $\alpha(\vbar)$ is a fixed point of $f_{\vbar}$.
  Also, by Proposition~\ref{prop:two-advertisers-convergence},
  both functions $f_v$ and $f_{v'}$ map the interval
  corresponding to cone$(v,v')$ to itself.  

  The main part of our argument is the following lemma,
  which we prove last:
  \begin{lemma}\label{lem:fv-contraction}
    If $\langle v,v'\rangle=\cos\theta>1/\sqrt{2+\eta}$, then
    functions $f_v$ and $f_{v'}$ restricted to the convex cone of
    $v$ and $v'$ are contractions, i.e., there exists
    $k=k(\theta, \eta)<1$ such that for all vectors
    $u,u'\in\mathrm{cone}(v,v')$, letting
    $\alpha:=\alpha(u),\beta:=\alpha(u'),\vbar\in\{v,v'\}$, we have
    \begin{align}\label{eq:06}
      \left|f_{\vbar}(\beta)-f_{\vbar}(\alpha)\right| \le k\cdot |\beta-\alpha|\;,
    \end{align}
    where the distances $|f_{\vbar}(\beta)-f_{v^{*}}(\alpha)|$ and $|\beta-\alpha|$
    are in the metric induced by $S^1$, i.e., ``modulo $2\pi$''.
  \end{lemma}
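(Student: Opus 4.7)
The plan is to reduce to a concrete one-variable calculation: compute the derivative $f_{\bar v}'(\alpha)$ in closed form, show that it is bounded by some $k = k(\theta,\eta) < 1$ throughout the interval of angles corresponding to $\mathrm{cone}(v,v')$, and conclude \eqref{eq:06} by the mean value theorem. Using the isometry invariance~\eqref{eq:09} applied inside $V$, I can fix an orthonormal coordinate system on $V$ so that $\bar v = (1,0)$ and the other intervention lies at angle $\pm\theta$; the convex cone $\mathrm{cone}(v,v')$ then corresponds to an interval of length $\theta$ adjacent to $\alpha(\bar v) = 0$, say $[0,\theta]$. On this interval $\langle u,\bar v\rangle = \cos\alpha \ge \cos\theta > 0$, so the update rule gives
\[
    w = u + \eta\cos\alpha\cdot\bar v = \bigl((1+\eta)\cos\alpha,\,\sin\alpha\bigr),
\]
and since the first coordinate is positive, $f_{\bar v}(\alpha) = \arctan\!\bigl(\tan\alpha/(1+\eta)\bigr)$ using the principal branch. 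In particular $f_{\bar v}$ is smooth and monotone on the cone, and there is no wrap-around issue in the $S^1$ distance because $\theta < \pi/2$.

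Direct differentiation and the identity $\cos^2\alpha + \sin^2\alpha = 1$ collapse the derivative to
\[
    f_{\bar v}'(\alpha) \;=\; \frac{1+\eta}{1 + (2\eta + \eta^2)\cos^2\alpha}\,,
\]
which is non-negative and is maximized on $[0,\theta]$ at $\alpha = \theta$. Setting
\[
    k \;:=\; \frac{1+\eta}{1 + (2\eta + \eta^2)\cos^2\theta}\,,
\]
the bound $k < 1$ is equivalent to $(2\eta + \eta^2)\cos^2\theta > \eta$, i.e.\ $(2+\eta)\cos^2\theta > 1$, which is exactly the hypothesis $\cos\theta > 1/\sqrt{2+\eta}$. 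By the mean value theorem applied to $f_{\bar v}$ on an interval contained in $[0,\theta]$, we obtain $|f_{\bar v}(\beta) - f_{\bar v}(\alpha)| \le k\,|\beta - \alpha|$. The same argument with the roles of $v$ and $v'$ swapped (placing $v'$ at angle $0$ and $v$ at angle $-\theta$) handles $f_{v'}$, and yields the same constant $k$ by symmetry of the formula.

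The main subtleties are two bookkeeping points rather than a deep obstacle. First, one must justify that the naive branch of $\arctan$ correctly computes the angle of $\tilde u$; this works because $\langle u,\bar v\rangle > 0$ throughout the cone, so the first coordinate of $w$ stays strictly positive and $w$ stays in the open right half-plane. Second, one must check that the $S^1$ metric coincides with ordinary distance on the relevant interval, which is immediate from $\theta < \pi/2 < \pi$. Once these are observed, the algebraic heart of the lemma is the identity $f_{\bar v}'(\alpha) = (1+\eta)/\bigl(1+(2\eta+\eta^2)\cos^2\alpha\bigr)$ and the coincidence of its threshold for contractivity with the stated bound $\cos\theta > 1/\sqrt{2+\eta}$.
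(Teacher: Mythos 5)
Your proposal is correct, and it takes a genuinely more direct route than the paper. The paper works with the same function (its equation~\eqref{eq:pull_function} is the $\arccos$ form of your $\arctan\bigl(\tan\alpha/(1+\eta)\bigr)$), but it only asserts three qualitative monotonicity properties ``by elementary calculus (we omit the details)'' --- in particular that the pull $\alpha - f(\alpha)$ is strictly increasing on $[0,\theta^*]$ with $\theta^* = \arccos\bigl(1/\sqrt{2+\eta}\bigr)$ --- and then deduces the existence of \emph{some} contraction constant $k<1$ via a compactness-and-continuity contradiction argument on the closed interval $[0,\theta]\subset[0,\theta^*)$. You instead compute the derivative in closed form, $f_{\vbar}'(\alpha) = (1+\eta)/\bigl(1+(2\eta+\eta^2)\cos^2\alpha\bigr)$ (which checks out: the denominator is $(1+\eta)^2\cos^2\alpha+\sin^2\alpha$), observe it is nonnegative and increasing in $\alpha$ on $[0,\theta]$, and read off the explicit Lipschitz constant $k=(1+\eta)/\bigl(1+(2\eta+\eta^2)\cos^2\theta\bigr)$, with the hypothesis $\cos\theta>1/\sqrt{2+\eta}$ appearing exactly as the condition $k<1$; the mean value theorem then finishes. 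What your approach buys is an explicit, sharp value of $k(\theta,\eta)$ and a fully self-contained verification of the calculus the paper omits; what the paper's approach buys is avoiding the derivative computation at the cost of a nonconstructive $k$. Your two bookkeeping points (the branch of $\arctan$ is safe because $\cos\alpha\ge\cos\theta>0$ on the cone, and there is no $S^1$ wrap-around since $\theta<\pi/2$) are exactly the right things to check and are handled correctly.
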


\begin{proof}[Proof of Theorem \ref{thm:contraction}]
  Lemma~\ref{lem:fv-contraction} implies that
  the angle distance between two opinions $u_1^{(t)},u_2^{(t)}\in V$ starting in the convex
  cone deterministically converges to 0 as $t$ goes to infinity.
  Of course, this is equivalent to their Euclidean distance
  $\|u_1^{(t)}-u_2^{(t)}\|$ converging to 0.
  We now make a continuity argument to show that such
  convergence
  almost surely occurs also for general
  $u_1^{(t)}, u_2^{(t)} \in S^{d-1}$. To this end, we let
  $g_v, g_{v'}: S^{d-1}\to[0,2\pi)$ as natural extensions of $f_v,f_{v'}$:
  the value $g_{\vbar}(u)$ denotes the angle of the projection $\uafter_V$ of the new
  opinion onto $V$, after applying $\vbar$ on opinion $u$
  (cf.~Figure~\ref{fig:angles}). Note that the value
  $g_{\vbar}(u)$ depends only on the angle $\alpha(u_V)$ and the orthogonal projection
  length $\|u_W\|$:
  \[
    g_{\vbar}(u) = g_{\vbar}\big(\alpha(u_V), \|u_W\|\big)\;.
  \]
  In this parametrization, for $u\in V$ we have
  $f_{\vbar}(\alpha(u)) = g_{\vbar}(u) = g_{\vbar}(\alpha(u), 0)$.

  By Theorem~\ref{thm:two-advertisers}, for any starting
  opinions $u_1^{(1)}$ and $u_2^{(1)}$ having non-zero
  projections onto $V$,
  almost surely there exists a $t$ such that $(u_1^{(t)})_V$
  and $(u_2^{(t)})_V$ end up inside (possibly different) convex cones. We consider the
  case of $u_1^{(t)}$ and $u_2^{(t)}$ both in cone$(v, v')$, other three cases being
  analogous. Furthermore, almost surely, $\|(u_1^{(t)})_W\|$ and $\|(u_2^{(t)})_W\|$ converge
  to 0. Hence, it is enough that we show that almost surely
  $|\alpha((u_1^{(t)})_V)-\alpha((u_2^{(t)})_V)|$ (in $S^1$ distance)
  converges to zero.

  To this end, let $\delta > 0$. By uniform continuity of $g_v$, we know that
  for small enough value of $r$, we have
  \[
    \left| g_v(\alpha, r) - g_v(\alpha, 0)\right| < \frac{1-k}{4}\cdot \delta
  \]
  for every $\alpha\in[0,2\pi)$, where $k$ is the Lipschitz constant
  from~\eqref{eq:06}. Therefore, almost surely, for $t$ large enough,
  for $u_1^{(t)}$ and $u_2^{(t)}$ parameterized as 
  $u_1^{(t)}=(\alpha_1, r_1)$ and
  $u_2^{(t)}=(\alpha_2, r_2)$ we have
  \begin{align*}
    \left|g_v(\alpha_1, r_1)-g_v(\alpha_2,r_2)\right|
    &\le
      \left|g_v(\alpha_1,r_1)-g_v(\alpha_1,0)\right|+
      \left|g_v(\alpha_1,0)-g_v(\alpha_2,0)\right|+
      \left|g_v(\alpha_2,0)-g_v(\alpha_2,r_2)\right|\\
    &\le\frac{1-k}{4}\cdot\delta+
      k\cdot|\alpha_1-\alpha_2|+
      \frac{1-k}{4}\cdot\delta
    \le \left(k + \frac{1-k}{2}\right)\cdot\max(|\alpha_1-\alpha_2|,\delta)\;.
  \end{align*}
  Since $k+(1-k)/2 < 1$, and applying
  the same argument to $f_{v'}$, we conclude by induction that
  the distance $|\alpha_1(t)-\alpha_2(t)|$ must
  decrease and stay below $\delta$ in a finite number of steps.
  Since $\delta > 0$ was arbitrary, it must be
  that $|\alpha_1(t)-\alpha_2(t)|$ converges to 0, 
  concluding the proof of Theorem~\ref{thm:contraction}.\hfill\qedsymbol

It remains to prove Lemma~\ref{lem:fv-contraction}:
\proof{Proof of Lemma~\ref{lem:fv-contraction}.}
  Recall that we assumed a two-dimensional coordinate system on $V$.
  Let $f:=f_{(1,0)}$, i.e., $f$ corresponds to the intervention along the
  $x$-axis in this coordinate system. Clearly, functions $f_v$ and $f_{v'}$ are cyclic shifts of $f$ modulo $2\pi$.
  More precisely, we have
  \begin{align}\label{eq:05}
    f_{\vbar}(\alpha) = \alpha(\vbar) + f\big(\alpha-\alpha(\vbar)\big)\;,
  \end{align}
  where arithmetic in~\eqref{eq:05} is modulo $2\pi$. Furthermore,
  $f$ is symmetric around the intervention vector, i.e.,
  $f(\alpha)=2\pi-f(2\pi-\alpha)$ for $0<\alpha\le\pi$.
  Hence, to prove that $f_v$ and $f_{v'}$ restricted to cone$(v,v')$
  are contractions, it is enough that we show that $f$
  restricted to the interval $[0, \theta]$ is a contraction
  (recall that we assumed $\cos^2(\theta)>1/(2+\eta)$).
  
  \begin{figure}[ht]\centering
    \includegraphics[width=0.6\textwidth, trim={0 4.5cm 0 4cm}, clip]
    {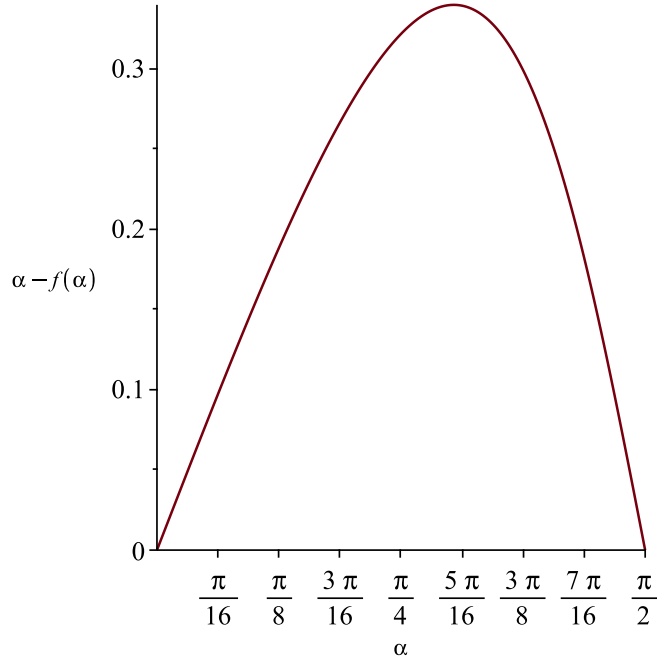}
    \caption{The graph of the ``pull function'' $\alpha-f(\alpha)$ in
      case $\eta=1$.}
    \label{fig:pull}
  \end{figure}
  
  To that end, we use~\eqref{eq:main} to calculate the formula for $f$ for
  $0\le\alpha\le\pi/2$ as
\begin{equation}
\label{eq:pull_function}
   f(\alpha)=\arccos\left(
   \frac{(1+\eta)\cos\alpha}{\sqrt{1+(2\eta+\eta^2)\cos^2\alpha}}\right)\;.
\end{equation}
    
  More computation using elementary calculus
  (we omit the details) establishes that, additionally, for every
  $0\le\alpha<\beta\le\pi/2$:
  \begin{enumerate}
  \item $f(\alpha) \le \alpha$. In other words, applying the intervention
    brings vector $u$ closer to the intervention vector.
  \item $f(\alpha) < f(\beta)$, i.e., applying the intervention does not
    change relative ordering of vectors wrt the intervention vector.
  \item If $\beta\le\theta^*:=\arccos\left(\sqrt{\frac{1}{2+\eta}}\right)$, then
    $0\le \alpha-f(\alpha) < \beta-f(\beta)$, i.e., in absolute terms,
    the ``pull'' on a vector is stronger the further away it
    is from the intervention vector (until the correlation reaches the threshold
    $1/\sqrt{2+\eta}$, cf.~Figure~\ref{fig:pull}).
  \end{enumerate}
  The preceding items taken together imply that for every $0\le\alpha<\beta\le\theta^*$
  we have $0 < f(\beta)-f(\alpha) < \beta-\alpha$. To conclude that $f$ is
  a contraction, we observe that $f$ and its derivative $f'$ are continuous on the interval $[0,\theta^*]$.
  If there exist sequences $(\alpha_k)$ and $(\beta_k)$
  in $[0,\theta]$ for $\theta<\theta^*$
  such that
  $|f(\alpha_k)-f(\beta_k)|/|\beta_k-\alpha_k|$ converges to 1,
  then, by compactness, there exist convergent sequences 
  $\alpha_k\to \alpha^*$ and $\beta_k\to\beta^*$ such that
  $|f(\alpha_k)-f(\beta_k)|/|\beta_k-\alpha_k|\to 1$. Then,
  \begin{enumerate}
      \item Either $\alpha^*\ne\beta^*$ and by continuity we get
      $f(\beta^*)-f(\alpha^*)=\beta^*-\alpha^*$, contradicting
      the third property above.
      \item Or $\alpha^*=\beta^*$, which by continuity of $f'$ implies
      $f'(\alpha^*)=1$ for some $0\le\alpha^*<\theta^*$. But
      that would imply that the derivative of $\alpha-f(\alpha)$,
      i.e., $1-f'(\alpha)$, vanishes at $\alpha^*<\theta^*$,
      again contradicting the third property above
      (see also Figure~\ref{fig:pull}).\qedhere
  \end{enumerate}
\end{proof}


\appendix

\section{Proof of Claim~\ref{cl:sin-calculation}}\label{app:sin-proof}
  Let us embed our underlying space $\mathbb{R}^2$ in $\mathbb{R}^3$
  by setting the last coordinate to zero.
  Letting $\times$ denote the cross product, we have
  \begin{align*}
    u\times u'=(0,0,\sin\alpha_t)\;,
    \qquad\qquad
    f(u,v)\times f(u',v)=(0,0,\sin\alphawhat)\;.
  \end{align*}
  Since the case $\alpha_t\in\{0,\pi\}$ is easily handled
  by noticing that $\alphawhat=\alpha_t$,
  we can assume that
  $0<\alpha_t<\pi$. In that case, 
  it is enough that we prove
  \begin{align}\label{eq:24}
      \left\langle
      u\times u', f(u,v)\times f(u',v)
      \right\rangle=\sin\alpha_t\sin\alphawhat \ge 0\;.
  \end{align}
  Setting 
  $C(w):=\sqrt{1+(2\eta+\eta^2)\langle w,v\rangle^2}$,
  we apply~\eqref{eq:main} and bilinearity of cross product to
  compute
  \begin{align}
      f(u,v)\times f(u',v)
      &= \frac{1}{C(u)C(u')}\Big(
      u\times u' + \eta\big(
      \langle u,v\rangle(v\times u')
      +\langle u',v\rangle(u\times v)
      \big)\Big)\nonumber\\
      &=\frac{1}{C(u)C(u')}\Big(
      u\times u' + \eta\big(
      u\times u'+(\langle u,v\rangle v-u)\times
      (u'-\langle u',v\rangle v)
      \big)\Big)\label{eq:22}\\
      &=\frac{1+\eta}{C(u)C(u')}\cdot u\times u' \;,\label{eq:23}
  \end{align}
  where in~\eqref{eq:22} we used the identity
  $a\times b+c\times d=a\times d+c\times b+(a-c)\times (b-d)$,
  and in~\eqref{eq:23} we used that both
  $\langle u,v\rangle v-u$ and $u'-\langle u',v\rangle v$
  are projections of vectors onto the line orthogonal to $v$,
  and therefore they are parallel and their cross product vanishes.
  
  Consequently, we conclude that 
  $f(u,v)\times f(u',v)$ is parallel to $u\times u'$ with
  a positive proportionality constant, which implies~\eqref{eq:24} and concludes the proof.
  \hfill\qedsymbol

\section{Example with two advertisers}\label{sec:example-two}

For another slightly more involved example, suppose there are two advertisers marketing their products. Agents' opinions now have
five dimensions ($d=5$) with the fourth and fifth coordinates corresponding to the opinions on these two products.
Initially, $500$ opinions on the first three coordinates are distributed randomly and uniformly on a
three-dimensional sphere, and the last two coordinates are equal to zero:
\begin{align*}
    u_i = (u_{i,1},u_{i,2},u_{i,3},0,0)
    \qquad\text{ subject to }\qquad
    u_{i,1}^2+u_{i,2}^2+u_{i,3}^2=1\;.
\end{align*}
Suppose the two advertisers apply interventions $v_1$ and $v_2$
in an alternating fashion. We take $v_1$ and $v_2$ to be orthogonal,
letting
\begin{align*}
    v_1=(\beta, 0, 0, \alpha, 0)\;,\qquad\qquad
    v_2=(0, \beta, 0, 0, \alpha)\;,\qquad\qquad
    \alpha=\frac{3}{4},\beta=\sqrt{1-\alpha^2}\;.
\end{align*}
We proceed to apply $v_1$ and $v_2$ in an alternating fashion.
In Figure~\ref{fig:example-two} we illustrate the agents' opinions after each
advertiser applied their intervention two, four and six times 
(so the total of, respectively, four, eight and twelve interventions
have been applied). A pattern of polarization on the fourth and fifth
coordinates can be observed. At the same time, the pattern on the first
three coordinates is more complicated. The opinions on these dimensions are scattered around
a circle on the plane spanned by the first two coordinates.
This is a somewhat special behavior that arises because vectors
$v_1$ and $v_2$ are orthogonal. 
It is connected to the difference between
Theorem~\ref{thm:two-advertisers} and
Proposition~\ref{cor:orthogonal-advertisers}
discussed in Section~\ref{sec:two-advertisers-intro}.

\begin{figure}[!htp]
\begin{tabular}{c}
  $t=5$\putindeepbox[7pt]
  {\includegraphics[width=0.7\textwidth,height=2.4in]{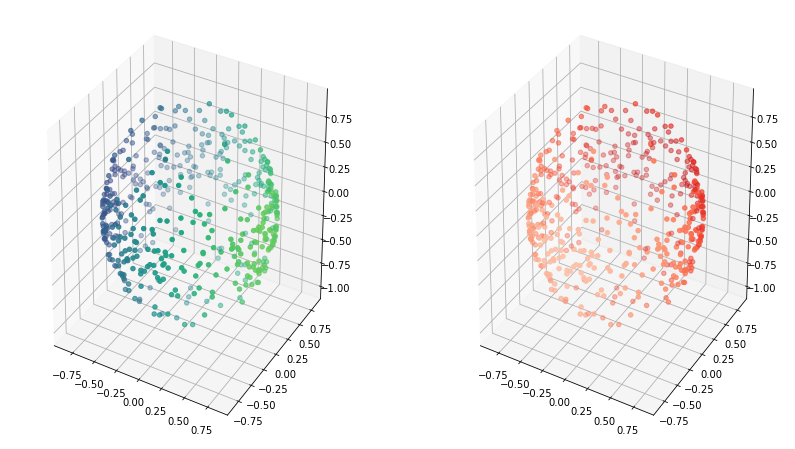}}\\
  $t=9$\putindeepbox[7pt]
  {\includegraphics[width=0.7\textwidth, height=2.4in]{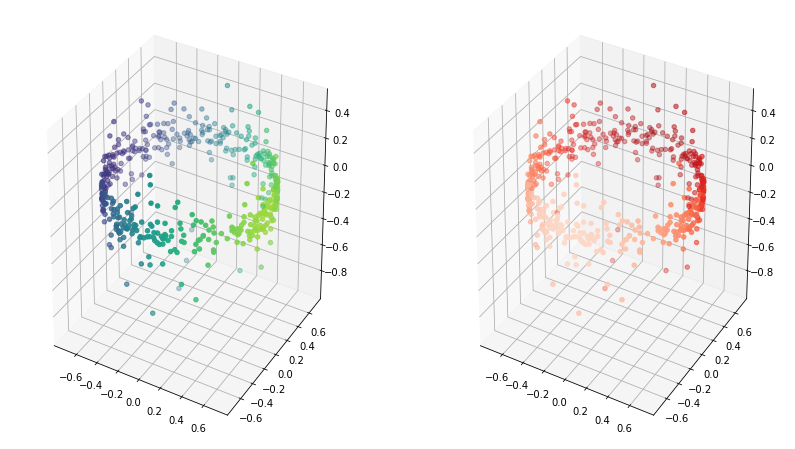}}\\
  $t=13$\putindeepbox[7pt]
  {\includegraphics[width=0.7\textwidth,height=2.4in]{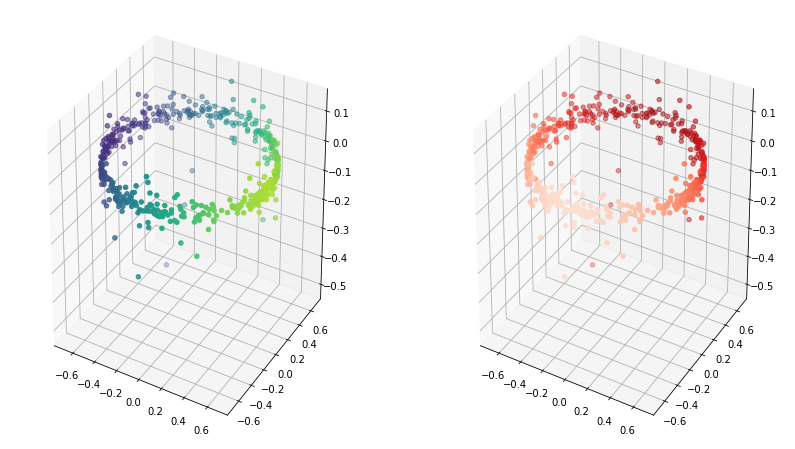}}\\
  \begin{tabular}{cc}
    {\includegraphics[width=0.45\textwidth]{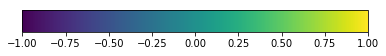}}&
    {\includegraphics[width=0.45\textwidth]{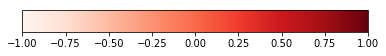}}
  \end{tabular}
\end{tabular}
\caption{Illustration of the process described in Appendix~\ref{sec:example-two}.
This time we need to visualize five dimensions. This is done
with spatial positions for the first three dimensions
$j=1,2,3$ and
two different color scales for $j=4,5$. Accordingly, 
two figures are displayed for each time step $t=5,9,13$.
In each pair of
figures the points in the left figure have the same spatial 
positions as in the right figure and the colors illustrate dimensions $j=4$ (on
the left) and $j=5$ (on the right). 
}
\label{fig:example-two}
\end{figure}

\section{Proof of Proposition~\ref{prop:polarization-cost}}
\label{app:polarization-cost}

Recall that the two-agent intervention maximizes $\min(\uafter_{1,d},\uafter_{2,d})$.
Due to symmetry, we will consider wlog the one-agent intervention that maximizes
$\uafter_{1,d}$. Substituting into~\eqref{eq:main}, we get that applying an intervention
$v$ results in
\begin{align}\label{eq:19}
    \uafter_{i,d} =\frac{\langle u_i,v\rangle\cdot v_d}{\sqrt{1+3\langle u_i,v\rangle^2}}\;. 
\end{align}
Recalling~\eqref{eq:09}, we can apply any unitary transformation on the opinions without changing the correlations, and hence assume that
\begin{align}\label{eq:16}
    u_1:= (\sin\alpha, \cos\alpha, 0, \ldots, 0)\;,\qquad\qquad
    u_2:= (-\sin\alpha, \cos\alpha, 0,\ldots, 0)\;
\end{align}
for some $0\le\alpha\le\pi/2$ and accordingly,
$c=\cos^2\alpha-\sin^2\alpha = \cos(2\alpha)$. In particular,
$\alpha=0$ means that the agents are in full agreement,
$\alpha=\pi/4$ corresponds to the case of orthogonal opinions
and $\alpha=\pi/2$ is the case where the opinions are antipodal.

Assuming~\eqref{eq:16},
once we fix the first
two coordinates of the intervention $v_1$ and $v_2$,
also the values of $\langle u_1,v\rangle$
and $\langle u_2,v\rangle$ become fixed.
Therefore, due to~\eqref{eq:19},
the values of $\uafter_{i,d}$ depend only on $v_d$ in a linear fashion. Accordingly, the influencer should place as
much weight as possible on the last coordinate and we can
conclude that both two- and one-agent interventions have $v_{j}=0$ for $2<j<d$.
Hence, in the following we will assume wlog that $d=3$, 
$u_1=(\sin\alpha,\cos\alpha,0)$ and $u_2=(-\sin\alpha,\cos\alpha,0)$
(see Figure~\ref{fig:interventions}).

\begin{figure}[!ht]\centering\begin{tikzpicture}
    \draw [->] (0, -1) -- (0, 3);
    \draw [->] (-3, 0) -- (3, 0);
    \draw [->] (8, -1) -- (8, 3);
    \draw [->] (5, 0) -- (11, 0);

    \draw [->, thick] (0, 0) -- (2.4, 1.8);
    \draw [->, thick] (0, 0) -- (-2.4, 1.8);
    \draw [->, thick, color=red] (0, 0) -- (1.39, 1.04);

    \node at (2.8, 1.8) {$u_1$};
    \node at (-2.8, 1.8) {$u_2$};
    \node at (2.8, 1.04) {$v^{\perp}=\sqrt{3}/3\cdot u_1$};

    \draw [domain=37:90] plot ({cos(\x)}, {sin(\x)});
    \draw [domain=90:143] plot ({1.2*cos(\x)}, {1.2*sin(\x)});
    \node at (0.3, 0.6) {$\alpha$};
    \node at (-0.3, 0.7) {$\alpha$};
    
    \draw [->, thick] (8, 0) -- (10.4, 1.8);
    \draw [->, thick] (8, 0) -- (5.6, 1.8);
    \draw [->, thick, color=red] (8, 0) -- (8, 1.71);

    \node at (10.8, 1.8) {$u_1$};
    \node at (5.2, 1.8) {$u_2$};
    \node at (8.5, 1.71) {$v^\perp$};

    \draw [domain=37:90] plot ({8+cos(\x)}, {sin(\x)});
    \draw [domain=90:143] plot ({8+1.2*cos(\x)}, {1.2*sin(\x)});
    \node at (8.3, 0.6) {$\alpha$};
    \node at (7.7, 0.7) {$\alpha$};    
  \end{tikzpicture}
  \caption{The projection of one-agent (left) and two-agent (right)
  interventions onto the first two dimensions.}
  \label{fig:interventions}
\end{figure}
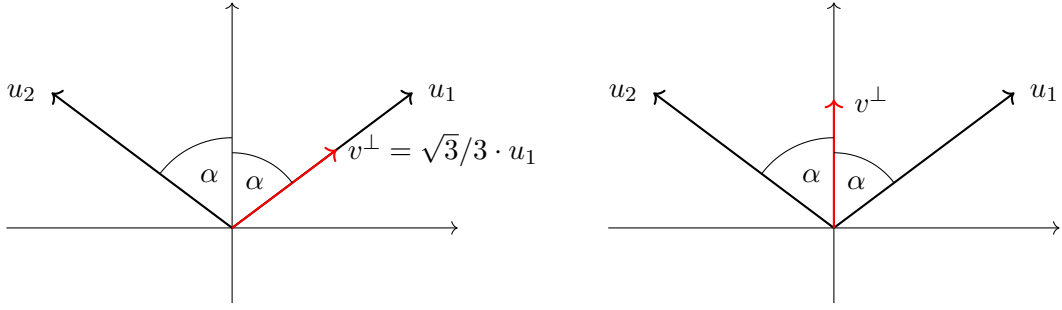

First, consider the one-agent intervention maximizing $\uafter_{1,3}$.
Clearly, the intervention should be of the form
\begin{align*}
    \vone = \cos\beta \cdot u_1 + \sin\beta \cdot (0, 0, 1)
\end{align*}
for some $0\le\beta\le\pi/2$. Substituting in~\eqref{eq:19}, we compute
\begin{align}\label{eq:01}
    (\uafter_{1,3})^2 = \frac{\cos^2\beta\sin^2\beta}{1+3\cos^2\beta} \;.
\end{align}
Maximizing \eqref{eq:01}, we get the maximum at
$\cos\beta = \sqrt{3}/3$ and 
\begin{align*}
    \vone=\frac{\sqrt{3}}{3}\cdot u_1+\frac{\sqrt{6}}{3}\cdot (0,0,1)\;,
\end{align*}
resulting in $\uafter_{1,3} = 1/3$. The value $1/3$ is
the benchmark for what can be achieved by one intervention.
It is a maximum value for $\uafter_{1,3}$ attainable provided that initially
$u_{1,3}=0$.

What is the effect of this intervention on the other opinion
$u_2$? Since $\langle u_2,\vone\rangle=\sqrt{3}c/3$,
substituting into \eqref{eq:19} we get
\begin{align*}
  \uafter_{2,3} = \frac{\sqrt{3}c/3\cdot \sqrt{6}/3}{\sqrt{1+c^2}}=
    \frac{c\sqrt{2}}{3\sqrt{1+c^2}}
    \; .
\end{align*}
The value of $\uafter_{2,3}$ as a function of the correlation $c\in[-1,1]$ is shown in
blue in Figure~\ref{fig:tradeoff-example-right}. In particular, it increases
from $-1/3$ to $1/3$, passing through $0$ for $c=0$.

Moving to the two-agent intervention, in this case it is not difficult
to see (cf.~Figure~\ref{fig:interventions})
that the intervention vector should be of the form
\begin{align*}
  \vtwo = (0, \cos\beta, \sin\beta)
\end{align*}
for some $0\le\beta\le\pi/2$. A computation in
a computer algebra system (CAS) establishes
that $\uafter_{1,3}=\uafter_{2,3}$ is maximized for
\begin{align*}
  \cos^2\beta =
  \frac{\sqrt{2}(\sqrt{3c+5}-\sqrt{2})}{3(c+1)}\;,
\end{align*}
yielding an expression
\begin{align*}
    \uafter_{1,3} = \uafter_{2,3} =
  \sqrt{\frac{3c+7-2\sqrt{6c+10}}{9(c+1)}}\;.
\end{align*}
This function is depicted in Figure~\ref{fig:tradeoff-example-right}
in red. In particular, for $c\in[-1, 1]$, it increases from
$0$ to $1/3$ and its value at $c=0$ is approximately $0.27$.
Furthermore, its growth close to $c=-1$ is of the square-root type.

Turning to the new correlation values $\cone$ and $\ctwo$, another CAS computation using the formulas for $\vone$ and $\vtwo$
gives
\begin{align*}
  \cone 
  =
    \frac{c\sqrt{2}}{\sqrt{c^2+1}}\;,\qquad\qquad
    \ctwo
    =
      1-\frac{\sqrt{2}(1-c)}{\sqrt{3c+5}}\;,
\end{align*}
establishing~\eqref{eq:25}. To conclude the proof we need another elementary calculation showing that 
$\ctwo\ge \cone$ always holds. We omit the details, referring to Figure~\ref{fig:tradeoff-example-left}
and noting that in the critical region for $c=1-\eps$ we have
\begin{flalign*}
    \qquad\qquad\qquad\qquad\qquad
    \ctwo=1-\frac{1}{2}\eps-\frac{3}{32}\eps^2+O(\eps^3)
    \ge
    \cone=1-\frac{1}{2}\eps-\frac{3}{8}\eps^2+O(\eps^3)\;.
    &&\qedsymbol
\end{flalign*}

\section{Proof of Proposition~\ref{prop:short-term-cap}}
\label{app:short-term-cap}

Let us write a generic intervention vector as
\begin{align*}
  v= (\cos\beta \cdot v^*, \sin \beta)\;, 
\end{align*}
where $0\le\beta\le\pi/2, v^*\in\mathbb{R}^{d-1}$ and $\|v^*\|=1$. If
$v$ is applied to an opinion vector $u_i = (u_i^*, 0)$ and we let
$c_i := \langle u_i^*,v^*\rangle$, substituting into~\eqref{eq:main}
we can compute
\begin{align*}
  u_i + \langle u_i,v\rangle\cdot v=
  (u^*_i,0)+c_i\cos\beta(\cos\beta\cdot v^*,\sin\beta)=
  (u_i^*+c_i\cos^2\beta\cdot v^*,c_i\cos\beta\sin\beta)\;,
\end{align*}
and therefore, using~\eqref{eq:main3},
\begin{align}\label{eq:02}
  \uafter_{i,d} = \frac{c_i\cos\beta\sin\beta}
  {\sqrt{1+3c_i^2\cos^2\beta}}
  = \frac{c_iz\sqrt{1-z^2}}{\sqrt{1+3c_i^2z^2}}\;,
\end{align}
where we let $z:=\cos\beta$.

Consider a fixed unit vector $v^*\in\mathbb{R}^{d-1}$. In order to maximize
$\uafter_{i,d}$ for an opinion $u_i$ with $\langle u_i^*, v^*\rangle=c_i$,
we need to optimize over $z$ in~\eqref{eq:02},
resulting in $z = \sqrt{\sqrt{1+3c_i^2}-1}/(\sqrt{3}c_i)$
and, substituting,
\begin{align}\label{eq:03}
  \uafter_{i,d} = \frac{\sqrt{1+3c_i^2}-1}{3c_i}\;.
\end{align}
The right-hand side
of~\eqref{eq:02} is easily seen to be increasing 
in $c_i>0$ for a fixed $z$. Therefore,
in order to maximize the number of points with $\uafter_{i,d}> T$ for a fixed
$v^*$, we solve the equation $T = \frac{\sqrt{1+3c^2}-1}{3c}$
for $c$, resulting in $c=\frac{2T}{1-3T^2}$ and apply
the intervention
\[
    v = (\cos\beta\cdot v^*,\sin\beta)\;,
\]
just as claimed in~\eqref{eq:08}. This intervention results in 
$\uafter_{i,d}>T$
for all opinions satisfying $\langle u_i^*,v^*\rangle>c$.
In other words, the objective $\uafter_{i,d}> T$ is achieved
for exactly those opinions contained in the spherical cap
$\{x\in\mathbb{R}^{d-1}: \langle x,v^*\rangle > c\}$.
Maximizing over all directions $v^*\in\mathbb{R}^{d-1}$ completes the
proof.\hfill\qedsymbol

\bibliographystyle{alpha} 
\bibliography{bibliography_arxiv_v4}

\end{document}